\newcommand{\floor}[1]{\left\lfloor #1 \right\rfloor}
\newcommand{\ceiling}[1]{\left\lceil #1 \right\rceil}
\newcolumntype{C}[1]{>{\centering\arraybackslash}p{#1}}
\theoremstyle{plain}
\newtheorem{theorem}{Theorem}
\newtheorem{proposition}[theorem]{Proposition}
\newtheorem{lemma}[theorem]{Lemma}
\newtheorem{corollary}[theorem]{Corollary}
\newtheorem{claim}{Claim}
\theoremstyle{definition}
\newtheorem{definition}{Definition}
\newtheorem{example}[definition]{Example}
\newtheorem{remark}[definition]{Remark}
\let\@@pmod\pmod
\DeclareRobustCommand{\pmod}{\@ifstar\@pmods\@@pmod}
\def\@pmods#1{\mkern4mu({\operator@font mod}\mkern 6mu#1)}
\newcommand{\B}{{\mathcal B}}
\newcommand{\C}{{\mathcal C}}
\newcommand{\cE}{{\mathcal E}}
\DeclareMathAlphabet{\mathbfsl}{OT1}{ppl}{b}{it} %{OT1}{cmr}{bx}{it}
\newcommand{\bF}{{\mathbfsl F}}
\newcommand{\bH}{{\mathbfsl H}}
\newcommand{\bL}{\mathbfsl{L}}
\newcommand{\bB}{{\mathbfsl B}}
\newcommand{\bp}{{\mathbfsl p}}
\newcommand{\bu}{{\mathbfsl u}}
\newcommand{\bv}{{\mathbfsl v}}
\newcommand{\by}{{\mathbfsl y}}
\newcommand{\bc}{{\mathbfsl c}}
\newcommand{\bh}{{\mathbfsl{h}}}
\newcommand{\bx}{{\mathbfsl{x}}}
\newcommand{\bz}{{\mathbfsl{z}}}
\newcommand{\bm}{{\mathbfsl{m}}}
\newcommand{\sq}{{\sf q}}
\newcommand{\ra}{\rightarrow}
\newcommand{\weight}{{\mathsf{wt}}}
\newcommand{\level}{{\mathsf{level}}}
\newcommand{\synd}{{\mathsf{syn}}}
\newcommand{\balind}{{\mathsf{index}}}
\newcommand{\flip}{{\mathsf{Flip}}}
\renewcommand{\geq}{\geqslant}
\renewcommand{\leq}{\leqslant}
\renewcommand{\ge}{\geqslant}
\renewcommand{\le}{\leqslant}
\newcommand{\enc}{\textsc{Enc}}
\newcommand{\bbracket}[1]{\left \llbracket #1 \right\rrbracket}
\newcommand{\shift}{{\boldsymbol{\sigma}}}
\newcommand{\todo}[1]{{\color{red}(TODO: #1)}}
\newcommand{\etal}{{\em et al.}}
\begin{document}

%\title{Average Redundancy of \\Variable-Length Knuth-Like Balancing Schemes}
\title{Average Redundancy of\\Variable-Length Balancing Schemes \`{a} la Knuth}
%\title{Average\,Redundancy\,of\,Variable-Length\,Balancing\,Schemes}

%	\title{Variable-Length Knuth Balancing Method for $\sq$-balanced code\\[-4mm]}
	\author{
		\IEEEauthorblockN{
			Duc Tu Dao\IEEEauthorrefmark{1},
			Han Mao Kiah\IEEEauthorrefmark{1}, and
			Tuan Thanh Nguyen\IEEEauthorrefmark{2}}
		
		\IEEEauthorblockA{
			\IEEEauthorrefmark{1}\small School of Physical and Mathematical Sciences, Nanyang Technological University, Singapore\\
			\IEEEauthorrefmark{2}\small Singapore University of Technology and Design\\
			email: \{{ductu.dao}, {hmkiah}\}@ntu.edu.sg, tuanthanhnguyen@sutd.edu.sg
		}
		\thanks{Results in Section \ref{sec:schemeA},~\ref{sec:lattice}, and \ref{sec:schemeB} have been presented at  International Symposium on Information Theory and Its Applications 2022~\cite{ISITA22} }
			%\todo{HM: Insert citation with details of the conference paper}.}
	}

	\maketitle

	%SECTION ABSTRACT 
	\hspace{-3mm}\begin{abstract}
	We study and propose schemes that map messages onto constant-weight codewords using variable-length prefixes. We provide polynomial-time computable formulas that estimate the average number of redundant bits incurred by our schemes. 
	In addition to the exact formulas, we also perform an asymptotic analysis and demonstrate that our scheme uses $\frac12 \log n+O(1)$ redundant bits to encode messages into length-$n$ words with weight $(n/2)+{\sf q}$ for constant ${\sf q}$.
	We also propose schemes that map messages into balanced codebooks with error-correcting capabilities. 
	For such schemes, we provide methods to enumerate the average number of redundant bits.
	\end{abstract}
	
	%%%%%%%%%%
\begin{comment}
	{
		\begin{itemize}
			\item Include reference to our paper is presented in ISITA. Please see if you have a copy of the paper in the proceedings. Hopefully there are page numbers on it.
			\item Section VI-D: Check Fig~\ref{fig:743cyclic} again.
			\item Section VII: Help to write the conclusion. Mainly focus on writing open problems.
			\item Help to draft a cover letter. I shared an example in the folder. In the folder, I also attach a draft of the response document (this is for the future).
			%\item Standardize $\leq$ to $\le$.
	\end{itemize}}
	
\end{comment}
	
\section{Introduction}
	
The {\em imbalance} of a binary word $\bx$ refers to the difference%
\footnote{For ease of exposition in later sections, we use the term {\em imbalance} to refer to the (signed) difference, instead of the absolute difference. In other words, a positive imbalance indicates that there are more ones than zeroes.} between the number of ones and the number of zeros in $\bx$. 
A~word of even length is {\em balanced} if its imbalance is exactly zero and a code is {\em balanced} if all its codewords are balanced. 
Due to their applications in various recording systems, balanced codes have been extensively studied (see \cite{ImminkText} for a survey).
In recent years, interest in balanced codes has been rekindled because of 
the emergence of DNA macromolecules as a next-generation data storage medium with its unprecedented density, durability, and replication efficiency \cite{Shomorony2022, Yazdi2015}. 
Specifically, a DNA string comprises four bases or letters: {\tt A}, {\tt T}, {\tt C}, and {\tt G}, and a string is {\tt GC}-rich (or {\tt GC}-poor) if a high (or
low) proportion of the bases corresponds to either {\tt G} or {\tt C}. Since
{\tt GC}-rich or {\tt GC}-poor DNA strings are prone to both synthesis and
sequencing errors \cite{ImminkCai2020, Ross2013}, we aim to reduce the difference between
the number of {\tt G} and {\tt C} and the number of {\tt A} and {\tt T} on every DNA codeword. This requirement turns out to be equivalent to reducing the imbalance of a related binary word (see for example \cite{CaiChee2021, ImminkCai2020}).

In his seminal paper \cite{knuth1986efficient}, Knuth proposed a simple and elegant linear-time algorithm that transforms an arbitrary binary length-$n$ message into a balanced length-$n$ codeword.
To allow the receiver to recover the message, a $\ceiling{\log_2 n}$-bit {\em prefix} must be transmitted and hence, Knuth's method incurs a redundancy of $\ceiling{\log_2 n}$ bits. This differs from the minimum required by a multiplicative factor of two (see \eqref{eq:optimal-red}). 
Later, Alon \etal~\cite{alon1988balancing} demonstrated that, under certain assumptions of the encoding scheme,  $\log_2 n$ redundant bits are necessary (see discussion in Section~\ref{sec:prelim} for more details). 
Hence, it appears unlikely that we can improve Knuth's balancing technique if we insist on transmitting prefixes of a {\em fixed length}.

Therefore, in~\cite{immink2010very, Weber2010}, Immink and Weber proposed balancing schemes that transmit {\em variable-length prefixes} and studied the average redundancy of their proposals. 
Specifically, in~\cite{Weber2010}, Weber and Immink provided two variable-length balancing schemes whose average redundancy are asymptotically equal to $\log_2 n$ and $\frac12\log_2 n+0.916$, respectively. 
The work in \cite{Weber2010} was later extended to $q$-ary case in~\cite{Paluncic2018Capacity},
where the schemes has average redundancy $\frac12\log_q n+C_q$ for some constant $q$. 
%, {\color{magenta}where the authors explain why applying auxiliary data is able to recover nearly $\frac{1}{2}\log_q{n}$ redundancy loss in Knuth balancing schemes} \todo{HM: What does this line mean? Specifically, what does explaining mean? Did they have a proof?}.

Now, in a separate work~\cite{immink2010very}, Immink and Weber proposed another variable-length balancing scheme which we study closely in this paper. 
In~\cite{immink2010very}, Immink and Weber provided closed formulas for the average redundancy of their scheme and computed these values for $n\le 8192$. 
While numerically the redundancy values are close to the optimal value given in~\eqref{eq:optimal-red}, a tight asymptotic analysis was not provided. A slightly different approach was considered by Swart and Weber in \cite{ Swart2018binary}, where messages have variable length and codewords have fixed length. Although its encoding/decoding is simple, its redundancy increases to $O(\sqrt{n})$ for long messages.
In this work, we make modifications to the scheme in~\cite{immink2010very} and demonstrate that the average redundancy of the resulting scheme\footnote{Specifically, Scheme~B with $\sq=0$.}
 is at most $\frac12\log_2 n+0.526$ asymptotically.

Even though the average redundancy of our scheme differs from the optimal \eqref{eq:optimal-red} by an additive constant of approximately $0.2$, our scheme and its accompanying analysis can be easily extended to the case where the imbalance is fixed to some positive constant.
Formally, for an even integer $n$ and some fixed integer $\sq$,  we say that a length-$n$ word is {\em $\sq$-balanced} if its imbalance is exactly $\sq$, that is, its weight is exactly $(n/2)+\sq$. 
A code is {\em $\sq$-balanced} if all words are $\sq$-balanced.
Since all words in an $\sq$-balanced have the same weight, such codes are also known as {\em constant-weight codes} and are used in a variety of communication and data storage scenarios.
Recent applications involve data storage in crossbar resistive memory arrays \cite{NguyenVuCai2021, Nguyen2021} and live DNA \cite{Chen2021} (see also \cite{Tallini1998} for a survey).

While there is extensive research on constructing constant-weight codes with distance properties, simple efficient encoding methods are less well-known. In fact, this problem was posed by MacWilliams and Sloane as {\bf Research Problem~17.3}~\cite{MS1977}. To the best of our knowledge, there are three encoding approaches: the enumerative method of Schalkwijk~\cite{Schalkwijk1972}, the geometric approach of Tian \etal{}~\cite{Tian2009}, and the Knuth-like method of Skachek and Immink \cite{skachek2014constant}. The Knuth-like method is also viable to construct $q$-ary constant weight sequences with redundancy $\log_q{n}+O(1)$ \cite{Mambou2018Capacity}.
For the case where $\sq$ is constant, the first two methods encode in quadratic time $O(n^2)$, while the third method runs in linear time.
However, the third method incurs $\log_2 n$ redundant bits and this is the regime that we study in this work.
Specifically, when $\sq$ is a positive constant, we show that there is a linear-time variable-length $\sq$-balancing scheme that incurs average redundancy of at most $\frac12 \log_2 n+2.526$ redundant bits.

We also propose variable-length balancing schemes that are equipped with error-correcting capabilities.
This is pertinent for variable-length encoding schemes as errors in prefixes may lead to catastrophic error propagation.
Previous constructions for error-correcting balanced codes were given in \cite{Albassam1990, Blaum1989, chee2020, Kumar2023, Mazumdar2009, Tilborg1989, Weber2010}. 
We highlight two techniques that transform general linear error-correcting codes into balanced codes with similar error-correcting capabilities. 
In~\cite{Weber2010},  Weber~\etal{} take two error-correcting codes as inputs: an $(n, d)$-code $\C_n$ and a short $(p,d)$-balanced code $\C_p$, and construct a $(n+p,d)$-balanced code with minimum distance $d$ and has the same size as $\C_n$.
Hence, their method introduced $p$ redundant bits and the distance is necessarily bounded by $p$.
Later, in~\cite{chee2020}, Chee~\etal{} removed the need of the short balanced codebook $\C_p$.
Specifically, using a $(n,d)$-cyclic code of size $M$, they constructed a $(n,d)$-balanced code of size at least $M/n$. 
Of significance, the distance of the resulting balanced code can be as large as $n$.
To achieve this, Chee~\etal{} introduced a variant of Knuth's balancing technique, which we term as {\em cyclic balancing} (see Section~\ref{sec:ecc}). Similar to previous schemes, we propose a variable-length balancing scheme for this cyclic balancing technique and analyze the corresponding average redundancy.

\begin{comment}
In \cite{Tilborg1989}, van Tilborg and Blaum introduced the idea to consider short balanced blocks as symbols of an alphabet and construct error-correcting codes over that alphabet. Only moderate rates can be achieved by this method, but it has the advantage of limiting the digital sum variation and the run lengths. 
In \cite{Albassam1990}, Al-Bassam and Bose constructed balanced codes correcting a single error, which can be extended to codes correcting up to two, three, or four errors by concatenation techniques. 
In \cite{Mazumdar2009}, Mazumdar, Roth, and Vontobel considered linear balancing sets and applied such sets to obtain error-correcting coding schemes in which the codewords are balanced.

Recently, Weber et al. \cite{Weber2010} modified Knuth's balancing technique to endow the code with error-correcting capabilities. Their method requires two error-correcting codes as inputs: an $(m, d)_2$ code $C_m$ and a short $(p,d)_2$ balanced code $C_p$ where $|C_p| \geq m$. Given a message, they first encode it into a codeword $m \in C_m$. Then they find the balancing index z of m and flip the first z bits to obtain a balanced c. Using $C_p$, they encode z into a balanced word p and the resulting codeword is $c_p$. Since both $C_m$ and $C_p$ have distance $d$, the resulting code has minimum distance $d$. The method requires the existence of a short balanced code $C_p$. 
\end{comment}

In summary, our contributions are as follow.
First, we adapt the variable-length balancing schemes in~\cite{immink2010very} to map messages into $\sq$-balanced words for a fixed $\sq\ge 0$.
Second, and more crucially, we provide a detailed analysis of the average redundancy of our variable-length $\sq$-balancing schemes.
To this end, we borrow tools from lattice-path combinatorics and provide closed formulas for the upper bounds on the average redundancy of both Schemes A and B (described in Sections~\ref{sec:schemeA} and~\ref{sec:schemeB}). 
Unfortunately, as with~\cite{immink2010very}, we are unable to complete the asymptotic analysis for Scheme~A. 
Hence, we introduce Scheme~B which uses slightly more redundant bits, and
show that Scheme B incurs average redundancy of at most $\frac12 \log_2 n+2.526$ redundant bits asymptotically when $\sq>0$. 
Interestingly, for the case $\sq=0$, the average redundancy of Scheme B can be reduced to $\frac12 \log_2 n+0.526$ and this is better than the schemes given in~\cite{Weber2010}.
Finally, when $\sq=0$, we introduce Scheme~C that has certain error-correcting capabilities.
We then propose both closed formula and efficient methods to determine the average redundancy of Scheme~C.

In the next section, we formally define a variable-length $\sq$-balancing scheme and state our results.

	\section{Preliminaries}\label{sec:prelim}
	
	Throughout this paper, we fix $n$ to be an even integer and 
	let $\bbracket{n}$ denote the set of $n+1$ integers $\{0,1,2,\ldots,n\}$.
	For a binary word $\bx=x_1x_2\cdots x_n\in\{0,1\}^n$ and $j\in\bbracket{n}$, 
	we let $\bx^{[j]}$ and $\bx_{[j]}$ denote its length-$j$ prefix and length-$j$ suffix, respectively. 
	In other words, $\bx^{[j]}=x_1x_2\cdots x_j$ and $\bx_{[j]}=x_{n-j+1}x_{n-j+2}\cdots x_n$. 
	Also, for two binary words $\bx$ and $\by$, we use $\bx\by$ to denote their concatenation and $\bx^a$ to denote the concatenation of $a$ copies of $\bx$.
	Finally, we use $\overline{\bx}$ to denote the {\em complement} of $\bx$.
	
	Now, since $n$ is even, we set $n=2m$ and fix a non-negative integer $\sq \le m = n/2$. Recall that a length-$n$ word $\bx$ is {\em $\sq$-balanced} if its weight is exactly $m + \sq = n/2+\sq$. In other words, the imbalance of $\bx$ is exactly $2\sq$. 	
	The collection of all $\sq$-balanced words of length $n$ is denoted by $\B(n,\sq)$. If $\sq = 0$, then we simply write $\B(n,\sq)$ as $\B(n)$ and refer to these words as \textit{balanced} words.
%	A code $\C$ is $\sq$-\textit{balanced} or \textit{balanced} if all codewords in $\C$ are $\sq$-\textit{balanced} or \textit{balanced}, respectively.
	
	Here, our goal is to efficiently map arbitrary binary messages into codewords in collection $\B(n,\sq)$, while incurring as few redundant bits as possible. 
	Formally, we have the following.
	
	\begin{definition}
	An $(n,k,\rho;\sq)$-$\textit{variable-length balancing scheme}$ is a pair of encoding and decoding maps $(\mathsf{E},\mathsf{F})$ such that:
	%the following hold.
	\begin{enumerate}[(i)]
	\item  $\mathsf{E}$ is an injective {\em encoding} map from $\{0,1\}^k$ to $\B(n,\sq) \times \{0,1\}^*$. In other words, $\mathsf{E}(\bx)=(\bc,\bp)$ with $\bc\in\B(n,\sq)$ and $\bp\in \{0,1\}^*$, and we refer to $\bp$ as the {\em prefix}. Here, $\{0,1\}^*$ denotes the set of all finite-length binary words. 
	\item  $\mathsf{D}$ is a {\em decoding} map from ${\rm Im}(\mathsf{E})$ to $\{0,1\}^k$ such that $\mathsf{D} \circ \mathsf{E} (\bx) = \bx$ for all $\bx \in \{0,1\}^k$.
	\end{enumerate}
	Here, $\rho$ denotes the \textit{average redundancy} and it is given by 
	\begin{equation}
		\rho \triangleq \frac{1}{2^k}\sum_{\bx \in \{0,1\}^k} \big(|\mathsf{E}
		(\bx)| -k\big).
	\end{equation}
\end{definition}
	
	Clearly,  the cardinality of $\B(n,\sq)$ is given by $\binom{n}{n/2+\sq} = \frac{2^n e^{-2\sq^2/n}}{\sqrt{\pi n/2}} \Big(1+O\left(\frac1n\right)\Big)$, 
	%\todo{Tu, please check this!}, 
	where the latter asymptotic estimate follows from Stirling's approximation (see for example, \cite[Theorem~4.6]{sedgewick2013introduction}). Therefore, the redundancy of $\B(n,\sq)$, and thus, the minimum redundancy, is given by
	\begin{align}
	&n - \log_2	\binom{n}{n/2+\sq} \label{eq:optimal-exact}\\
	&	= \frac{1}{2}\log_2 n + \frac{1}{2}\log_2 \frac{\pi}{2} + \frac{2\sq^2}{n}\log_2 e + O\left(\frac1n\right) \notag\\
	& = \frac{1}{2}\log_2 n + \frac{1}{2}\log_2 \frac{\pi}{2} +O\left(\frac1n\right) \notag \\
	& \sim 
	 \frac{1}{2}\log_2 n + \Delta, \text{ where }\Delta\approx 0.326\ldots \,. \label{eq:optimal-red}
	\end{align}
	Here, we assume that $\sq$ is constant and asymptotics are taken with respect to $n$. That is, we have that $f(n)\sim g(n)$ if and only if $\lim_{n\to\infty} f(n)/g(n)=1$.
	
	As mentioned earlier, when $\sq=0$, we have the celebrated Knuth's balancing technique~\cite{knuth1986efficient}. 
	A crucial ingredient to this technique is the following simple flipping
	operation.
	For any word $\bx \in \{0,1\}^n$ and $j \in \bbracket{n}$, we define $\flip(\bx,j)$ to be the binary word obtained by flipping the first $j$ bits of $\bx$.
	That is, $\flip(\bx,j) = \overline{x_1 x_2 \cdots x_j}x_{j+1}\cdots x_n$.
	Then Knuth's technique simply searches for an index $j$ such that $\flip(\bx,j)$ belongs to $\B(n,\sq)$. Formally, we say that $j$ is a \textit{$\sq$-balancing index} for $\bx$ if $\flip(\bx,j)$ belongs to $\B(n,\sq)$ and we use $T(\bx,\sq)$ to denote the set of all $\sq$-balancing indices of $\bx$. 
	In other words, $T(\bx,\sq)=\{j \in \bbracket{n}: \flip(\bx,j) \in \B(n,\sq)\}$. 
	Knuth's key observation is that $T(\bx,0)$ is always nonempty~\cite{knuth1986efficient}, or equivalently, a $0$-balancing index always exists! 
	
	Let $\flip(\bx,\tau)=\bc$.
	In order for the receiver to recover the message $\bx$, the sender needs to transmit both $\bc$ and some representation of $\tau$. As the set of all possible balancing indices\footnote{It can be shown that there is a balancing index $\tau$ in the interval $0\le\tau\le n-1$.}
	has size $n$, we require a $\ceiling{\log_2 n}$-bit prefix $\bp$ to represent the index~$\tau$. 
	Hence, Knuth's balancing technique results in an $(n,n,\rho;0)$-variable-length balancing scheme with $\rho=\ceiling{\log_2 n}$. Note that this is in fact a fixed-length scheme because the prefix $\bp$ is always of length $\rho$. However, $\rho$ is twice the optimal quantity given in~\eqref{eq:optimal-red}. 
	One way to reduce this redundancy is to use a different and possibly smaller set of possible balancing words. Unfortunately, Alon~\etal~\cite{alon1988balancing}  demonstrated that the size of any balancing set must be at least $n$, and so $\log_2 n$ redundant bits are necessary. 
	So, in~\cite{immink2010very, Weber2010}, Immink and Weber turn their attention to variable-length balancing schemes. 
	We summarize their results here.
	
	\begin{theorem}[\cite{Weber2010, immink2010very}]\label{thm:immink}
	Let $\ell\in\{1,2,3\}$.
	There exists explicit $(n,n,\rho_\ell;0)$-variable-length balancing schemes with average redundancies\footnote{For purposes of brevity, we omit the closed formulas for Schemes 1 and 2.} as follows:
	\begin{align}
	\rho_1 & \sim \log n\,,\\
	\rho_2 & \sim \frac12 \log n+0.916\ldots\,,\\
	\rho_3 & = \frac{1}{2^{n}}\sum_{i=1}^{n/2} i\gamma(i,n)\log i \,,
	\end{align}where
	{ 
	\begin{equation}\label{eq:gamma}
		\gamma(i,n) = 2^n\left(\sum_{j=1}^i \cos^n\frac{\pi j}{i+1} - 2 \sum_{j=1}^{i-1} \cos^n\frac{\pi j}{i} + \sum_{j=1}^{i-2}\cos^n\frac{\pi j}{i-1}\right)\,.
	\end{equation}
	}
	We refer to these schemes as Schemes 1, 2, and 3, respectively.
	\end{theorem}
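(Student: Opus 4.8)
The plan is to realise all three schemes as instances of one template and then analyse the average prefix length in each. For $\bx=x_1\cdots x_n$ record the running imbalance $S_0=0$, $S_j=\sum_{t\le j}(2x_t-1)$; then $|S_{j+1}-S_j|=1$ and $\flip(\bx,j)$ has imbalance $S_n-2S_j$, so $j$ is a $0$-balancing index of $\bx$ exactly when $S_j=S_n/2$. Thus the balancing indices of $\bx$ are the times its $\pm1$ walk hits the level $S_n/2$; the walk runs from $0$ to $S_n$ and hence must cross $S_n/2$ (Knuth's nonemptiness), so we may always take $\tau(\bx):=\min\{j:S_j=S_n/2\}$. A scheme is then a choice of such a selection rule together with a decodable encoding $\bp=\bp_\ell(\tau)$ of the selected index: the encoder outputs $(\flip(\bx,\tau(\bx)),\bp_\ell(\tau(\bx)))$, and since $\flip(\cdot,j)$ is an involution the decoder recovers $\tau$ from $\bp_\ell$ and applies $\flip(\cdot,\tau)$. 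Hence the redundancy on input $\bx$ is $|\bp_\ell(\tau(\bx))|$ and $\rho_\ell=2^{-n}\sum_{\bx}|\bp_\ell(\tau(\bx))|$, so in each case the task splits into (a) designing the integer code $\bp_\ell$ and (b) understanding the distribution of $\tau(\bx)$ (or, for Scheme~3, a refinement of it).

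For Schemes~1 and~2 (from \cite{Weber2010}) I would take $\tau(\bx)=\min\{j:S_j=S_n/2\}$ — possibly together with a smarter choice of which balancing index to use, or of whether to replace $\bx$ by its complement — and encode the chosen index by a self-delimiting integer code. A plain doubling/length-prefixed code (roughly $\log_2\log_2 n$ bits for a length field, then the index in binary; or about $2\log_2\tau$ bits) already gives $\rho_1\sim\log_2 n$ once one has the rough estimate $\mathbb{E}[\log_2(\tau+1)]\sim\tfrac12\log_2 n$, while a more economical code combined with a sharper evaluation of $\Pr[\tau=i]$ — a first‑passage count of $\pm1$ paths obtained by reflection — yields the finer $\rho_2\sim\tfrac12\log_2 n+0.916\ldots$. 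The only real content here is a logarithmic moment of a first‑passage‑type statistic, for which the natural tool is the local/functional central limit theorem.

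Scheme~3 (from \cite{immink2010very}) is where the explicit formula \eqref{eq:gamma} arises, and I would obtain it from a structural description of the fibres of $\Phi:\bx\mapsto\bc:=\flip(\bx,\tau(\bx))$. Since $\flip(\cdot,j)$ is an involution, $\bx\in\Phi^{-1}(\bc)$ iff $\bx=\flip(\bc,j)$ for some $j$ with $\tau(\flip(\bc,j))=j$; writing the balancing condition for $\flip(\bc,j)$ in terms of $\bc$'s walk one checks that $S_k(\flip(\bc,j))=S_n(\flip(\bc,j))/2$ holds precisely when $S_k(\bc)=S_j(\bc)$, so the balancing indices of $\flip(\bc,j)$ are exactly the times $\bc$'s walk takes the value $S_j(\bc)$, and $\tau(\flip(\bc,j))=j$ iff $j$ is the first such time. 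As $\bc$'s walk visits each integer in $[\min_k S_k(\bc),\max_k S_k(\bc)]$ for the first time exactly once, the fibre over $\bc$ has size $R(\bc)+1$, where $R(\bc):=\max_k S_k(\bc)-\min_k S_k(\bc)$; in particular the fibres partition $\{0,1\}^n$ and every $\bc\in\B(n)$ is hit (it maps to itself via $\tau=0$). The $R(\bc)+1$ members of a fibre can thus be distinguished by a $\lceil\log_2(R(\bc)+1)\rceil$-bit prefix (the decoder reads $\bc$, computes $R(\bc)$, and decodes accordingly), so with the idealised cost $\log_2(R(\bc)+1)$, grouping the average by the image codeword,
\[
\rho_3=\frac{1}{2^n}\sum_{\bc\in\B(n)}\big(R(\bc)+1\big)\,\log_2\!\big(R(\bc)+1\big)=\frac{1}{2^n}\sum_{i\ge1} i\,\gamma(i,n)\,\log_2 i,\qquad \gamma(i,n):=\#\{\bc\in\B(n):R(\bc)=i-1\},
\]
the last equality because there are $\gamma(i,n)$ codewords whose fibre has size $i$.

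It remains to identify $\gamma(i,n)$ with \eqref{eq:gamma} and to say where the difficulty lies. Reading a balanced word as a $\pm1$ lattice path $0\to0$ of length $n$, put $F(r):=\sum_{a=0}^{r}\#\{\text{such paths confined to the interval }[-a,\,r-a]\}$; a path of range $\rho\le r$ lies in exactly $r+1-\rho$ of these windows, so the discrete second difference isolates the range: $F(r)-2F(r-1)+F(r-2)=\#\{\text{paths of range exactly }r\}=\gamma(r+1,n)$. Moreover a path $0\to0$ confined to $r+1$ consecutive integers is a closed walk on the path graph $P_{r+1}$, and summing its starting vertex over all of $P_{r+1}$ turns $F(r)$ into $\operatorname{tr}(A_{P_{r+1}}^n)=\sum_{j=1}^{r+1}\big(2\cos\tfrac{\pi j}{r+2}\big)^n$ (the $\sin^2$ spectral weights summing to $\tfrac{r+2}{2}$); substituting $r=i-1$ into the second difference reproduces \eqref{eq:gamma} verbatim. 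The step I expect to be the main obstacle — and, as the authors note for their own Scheme~A, the one that resists a clean treatment — is the asymptotics $\rho_3=\tfrac12\log_2 n+O(1)$: because messages rather than codewords are averaged, this is the logarithmic expectation of $R(\bc)+1$ under the law size-biased by $R(\bc)+1$, and although the rescaled walk of a uniform balanced word converges to a Brownian bridge whose range has an explicit theta-function distribution, carrying a logarithmic moment through both the size‑biasing and the invariance principle (uniformly in $n$, and with the correct additive constant) is the crux.
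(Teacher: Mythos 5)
The paper never proves Theorem~\ref{thm:immink}; it is stated as a citation of \cite{Weber2010, immink2010very}, so the only in-paper material to compare against is the machinery of Sections~\ref{sec:schemeA}--\ref{sec:lattice}, which rederives Scheme~3 in the generalized setting $\sq\ge 0$. Your reconstruction of Scheme~3 is correct and essentially coincides with that machinery: your observation that $k$ is a balancing index of $\flip(\bc,j)$ if and only if $S_k(\bc)=S_j(\bc)$, hence that the fibre over $\bc$ is indexed by the first-visit times of its running sum and has size $\mathrm{range}+1$, is exactly Proposition~\ref{prop:Gamma} combined with Lemma~\ref{lem:pathwidth}, and your enumeration agrees with Theorem~\ref{thm:gammaq} at $\sq=0$. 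The one genuine difference is the packaging of the path count: the paper applies the reflection formula \eqref{eq:lattice-bdd2} to each strip and runs inclusion--exclusion over the strip's position, whereas you sum over strip positions first, recognize $F(r)=\operatorname{tr}\big(A_{P_{r+1}}^{n}\big)=\sum_{j}\big(2\cos\tfrac{\pi j}{r+2}\big)^{n}$ via the spectrum of the path graph, and isolate the exact range by a second difference in $r$; the two are equivalent (\eqref{eq:lattice-bdd2} is the spectral decomposition of that trace), but your version is arguably cleaner and makes the three-term structure of \eqref{eq:gamma} transparent. Two minor caveats: your treatment of Schemes~1 and~2 is only a plan (the constant $0.916$ is asserted, not derived), though the paper likewise omits those formulas; and your summation $\sum_{i\ge 1}$ is in fact the safer form, since the range of a balanced word's walk can reach $n/2$, so fibre sizes go up to $n/2+1$ and the stated upper limit $n/2$ appears to drop a (nonzero) boundary term.
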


	When $\sq>0$, the set of $\sq$-balancing indices may be empty.
	That is, there exist binary words $\bx$ with $T(\bx,\sq)=\varnothing$ and we refer to such words as {\em bad} words.
	Hence, a different encoding rule must be applied to these bad words, and 
	simple linear-time methods were proposed and studied by Skachek and Immink~\cite{skachek2014constant}. 
	While their $\sq$-balancing schemes are in fact variable-length schemes, Skachek and Immink did not provide an analysis of the average redundancy of their schemes and instead argued that $\log_2 n + O(1)$ redundant bits are sufficient in the worst case (when $\sq$ is constant). 
	%\vspace{2mm}
	
	In the case of $\sq=0$, we also study balanced codes with error-correcting capabilities. 
	As mentioned earlier, we look a balancing technique introduced by Chee~\etal{}~\cite{chee2020} that allows one to construct balanced codes from cyclic codes.
	
	\begin{theorem}[{\hspace*{-0.2mm}\cite{chee2020}}]\label{thm:chee}
		Given a $[n,k,d]$-cyclic code, there exists an $(n,d)$-balanced code of size at least $2^k/n$.
	\end{theorem}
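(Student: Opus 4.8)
The plan is to turn the $[n,k,d]$-cyclic code $\C$ into a balanced code \emph{not} by balancing each codeword separately (which is exactly what the variable-length prefix in Knuth's scheme pays for), but by applying one and the same flip to every codeword and then keeping only those codewords that land in $\B(n)$. The point is that for a \emph{fixed} $i$ the map $\bc\mapsto\flip(\bc,i)$ is just translation of $\{0,1\}^n$ by the word $1^i0^{n-i}$, so it preserves Hamming distances and sends the linear code $\C$ bijectively onto the coset $\C_i\triangleq\{\flip(\bc,i):\bc\in\C\}=\C+1^i0^{n-i}$ (with $\C_0=\C$). Since $\C_i$ is a coset of a code of minimum distance $d$, any two distinct words in $\C_i$ differ in at least $d$ positions; hence \emph{every} subset of $\C_i$ is automatically an $(n,d)$-code, and in particular $\C_i\cap\B(n)$ is a balanced $(n,d)$-code. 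So the whole task reduces to finding one $i$ for which $\C_i\cap\B(n)$ is large.

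The second ingredient is Knuth's observation, read here as a statement about a lattice walk. Fix $\bc\in\C$ and set $w(i)\triangleq\weight(\flip(\bc,i))=\weight(\bc)+i-2\,\weight(\bc^{[i]})$ for $i\in\bbracket{n}$. Increasing $i$ by one changes $w(i)$ by exactly $\pm1$, while $w(0)=\weight(\bc)$ and $w(n)=n-\weight(\bc)$. Since $n$ is even, $\weight(\bc)$ and $n-\weight(\bc)$ lie on opposite sides of $n/2$ (or both equal it), so the $\pm1$ walk $w$ must attain the value $n/2$; moreover it does so at some $i\le n-1$, because if $n/2$ were reached only at $i=n$ then $n-\weight(\bc)=n/2$, i.e. $\weight(\bc)=n/2$, and then already $w(0)=n/2$. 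Thus every $\bc\in\C$ has a balancing index $i\in\{0,1,\dots,n-1\}$ with $\flip(\bc,i)\in\B(n)$.

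Now I would double count the pairs $\{(\bc,i):\bc\in\C,\ 0\le i\le n-1,\ \flip(\bc,i)\in\B(n)\}$. Summing over $i$ first gives $\sum_{i=0}^{n-1}\bigl|\C_i\cap\B(n)\bigr|$ (using that $\flip(\cdot,i)$ is a bijection from $\C$ onto $\C_i$), whereas summing over $\bc$ first gives $\sum_{\bc\in\C}\bigl|\{i:\flip(\bc,i)\in\B(n)\}\bigr|\ge|\C|=2^k$ by the previous paragraph. So this total is at least $2^k$ yet is spread over only $n$ cosets $\C_0,\dots,\C_{n-1}$ (if some of these coincide, a weighted pigeonhole over the distinct ones still gives the same bound). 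Hence some $i_0$ satisfies $\bigl|\C_{i_0}\cap\B(n)\bigr|\ge 2^k/n$, and $\D\triangleq\C_{i_0}\cap\B(n)$ is a balanced code of size at least $2^k/n$ and minimum distance at least $d$, as required.

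The step I expect to carry the real weight is the lattice-walk claim, and in particular the refinement that $n/2$ is reached already for some $i$ in $\{0,\dots,n-1\}$, so the pigeonhole runs over $n$ cosets and yields $2^k/n$ rather than $2^k/(n+1)$; everything else is bookkeeping, and the coset observation for the distance is the one genuinely new point over plain Knuth balancing. I would also remark, though it is not needed for the existence statement above, that the cyclic structure of $\C$ is what Chee~\etal{} rely on for an efficient, linear-time implementation in the spirit of Knuth's original algorithm (e.g.\ by pairing the fixed flip with a length-$i_0$ cyclic rotation so that the rotation amount can be recovered from the received word).
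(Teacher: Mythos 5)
Your argument is correct, and it reaches the stated bound by a genuinely different route from the one the paper (following Chee~\etal) uses. The paper's proof of the closely related Proposition~\ref{prop:cyclic} keeps the flip \emph{fixed} (always the first $m$ positions) and varies the word by \emph{cyclic shifts}: since $\C$ is cyclic, $\shift(\bx,\tau)\in\C$, so every balanced output lands in the single coset $\C+\bm$ with $\bm=1^m0^{m-1}$, which gives the distance; the size bound $2^k/n$ then comes from the shift-and-flip map being at most $n$-to-one. You instead keep the code fixed and vary the flip, distributing $\C$ over the $n$ cosets $\C+1^i0^{n-i}$, and use Knuth's walk argument (with the correct refinement that a balancing index exists in $\{0,\dots,n-1\}$, matching the paper's footnote) plus an averaging/pigeonhole step to find one coset meeting $\B(n)$ in at least $2^k/n$ words. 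Your version buys generality -- it needs only linearity of $\C$, not cyclicity, since translation by $1^i0^{n-i}$ preserves distances for any linear code -- and it is arguably more elementary. What it gives up is exactly what the cyclic-shift formulation provides and what the rest of Section~\ref{sec:ecc} is built on: all balanced codewords sitting in one explicit coset, and a linear-time Knuth-style encoder/decoder in which the hidden index is a rotation amount recoverable via the prefix $\Gamma_\C(\bc)$; with your construction one must still search for the best $i_0$ and there is no comparably natural way to communicate which message in $\C_{i_0}\cap\B(n)$ was meant without reintroducing an index. You flag both points yourself, so the write-up is complete as an existence proof of Theorem~\ref{thm:chee}.
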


	A key ingredient to the proof of Theorem~\ref{thm:chee} is a {\em cyclic balancing technique} which we described in detail in Section~\ref{sec:ecc}.
	In the same paper~\cite[Section~VI]{chee2020}, the authors also proposed an efficient way of encoding roughly $2^k/n$ messages into balanced codewords. However, to perform such an encoding, the authors require a polynomial with certain algebraic properties. In this work, we propose a variable-length balancing scheme that allows us to remove the requirement.

	\subsection{Our Contributions}
	\begin{enumerate}[(I)]
	\item In this paper, we amalgamate the variable-length scheme in~\cite{immink2010very} with the $\sq$-balancing schemes in~\cite{skachek2014constant} to obtain new variable-length $\sq$-balancing schemes. We formally describe Schemes~A and~B in Sections~\ref{sec:schemeA} and~\ref{sec:schemeB}, respectively.
	\item  Crucially, our objective is to provide a sharp analysis of the average redundancy of our $\sq$-balancing schemes. In Section~\ref{sec:schemeA}, we outline our analysis strategy. Section~\ref{sec:lattice} then provides the connection with lattice path combinatorics, a detailed proof that the fraction of bad words is negligible, and finally, a closed expression~\eqref{eq:rhoA} for an upper bound on the average redundancy. 
	\item Unfortunately, we were unable to give an asymptotic estimate for \eqref{eq:rhoA}. Hence, we make a small modification to obtain Scheme~B and demonstrate Theorem~\ref{thm:main}. In Table~\ref{table:balanced}, we compare the average redundancy of Scheme~B with those in prior work when $\sq=0$.  %\todo{}
	
\begin{theorem}\label{thm:main}
Fix $\sq$ to be constant. Scheme B is an $(n,n-1,\rho_B;\sq)$-variable-length balancing scheme where
\begin{equation}\label{eq:rhoB}
	\rho_B \begin{cases}
		\lesssim \frac12 \log_2 n + 2.526\ldots, & \mbox{if $\sq>0$},\\
		\sim \frac12 \log_2 n + 0.526\ldots, & \mbox{if $\sq=0$}.
	\end{cases}
\end{equation}
Here, we write $f(n)\lesssim g(n)$ if $\lim_{n\to\infty} f(n)/g(n)\le 1$.
\end{theorem}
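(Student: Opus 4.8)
The plan is to set up Scheme~B so that the message is always representable by a single variable-length prefix, and then to bound the expected prefix length by a careful combinatorial analysis. First, I would fix the structure of Scheme~B: given a length-$n$ message $\bx$ (or rather a length-$(n-1)$ message padded appropriately, to account for the $k=n-1$), the encoder searches for a $\sq$-balancing index $\tau$, flips the prefix to obtain a codeword $\bc\in\B(n,\sq)$, and transmits $\bc$ together with a prefix-free (e.g.\ Elias-type) encoding of $\tau$. The key point is that the cost of transmitting $\tau$ should be roughly $\log_2 \tau + O(1)$ rather than $\log_2 n$, so that small balancing indices are cheap; Scheme~B's ``small modification'' over Scheme~A should be precisely the choice of a universal prefix code making the redundancy analysis tractable. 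For bad words (those with $T(\bx,\sq)=\varnothing$, possible only when $\sq>0$), I would fall back on the Skachek--Immink-style encoding, which costs an extra $O(1)$ bits; this accounts for the additive gap of $2$ between the $\sq>0$ and $\sq=0$ cases in~\eqref{eq:rhoB}.

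Next, I would compute the average redundancy. Writing $\tau(\bx)$ for the (smallest, say) balancing index chosen by the encoder, the redundancy on input $\bx$ is $|\bp(\bx)| = \log_2 \tau(\bx) + O(1)$ plus the bad-word correction. Averaging over all $2^k$ messages reduces the problem to estimating $\mathbb{E}[\log_2 \tau]$ where $\bx$ is uniform. The natural tool, as the excerpt signals, is lattice-path combinatorics: the number of words whose smallest balancing index equals $i$ is governed by counts of lattice paths that first return to a prescribed level at step~$i$, yielding quantities analogous to $\gamma(i,n)$ in~\eqref{eq:gamma}. I would show $\Pr[\tau = i]$ decays geometrically fast once $i \gg \sqrt{n}$ (reflecting that a random walk of length $i$ typically has imbalance $\Theta(\sqrt{i})$, so a balancing index appears well before step $n$), and that the bulk of the mass sits around $i = \Theta(\sqrt{n})$. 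Summing $\log_2 i \cdot \Pr[\tau=i]$ then gives $\mathbb{E}[\log_2\tau] = \tfrac12\log_2 n + c + o(1)$ for an explicit constant $c$, and tracking the constants in the prefix code plus the $\tfrac12\log_2\tfrac{\pi}{2}$-type corrections from Stirling yields the claimed $0.526\ldots$.

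The main obstacle is the same one that defeated the asymptotic analysis of Scheme~A (and of the scheme in~\cite{immink2010very}): obtaining a clean asymptotic for the expectation of $\log_2$ of the balancing index. The closed form~\eqref{eq:rhoA} for Scheme~A involves alternating sums of $n$-th powers of cosines (as in~\eqref{eq:gamma}), which resist direct asymptotic evaluation because of delicate cancellation. The whole point of passing to Scheme~B is to replace ``smallest balancing index'' or the exact prefix cost with a slightly lossier but analytically smooth surrogate — for instance, encoding not $\tau$ itself but $\lceil\log_2(\tau+1)\rceil$ together with the offset, or truncating the search at the first index past $\sqrt{n}\log n$ and handling the (negligibly rare) tail separately. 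I would need to verify (i) that this surrogate still yields a valid, uniquely decodable, linear-time scheme, (ii) that the tail contribution is $o(1)$, using a Chernoff/reflection bound on the random walk, and (iii) that the resulting main term is genuinely $\tfrac12\log_2 n + 0.526\ldots$ when $\sq=0$ and at most $\tfrac12\log_2 n + 2.526\ldots$ when $\sq>0$, the latter bound coming from the crude $2$-bit-ish overhead of the bad-word branch rather than from a matching lower bound. Making the constant $0.526\ldots$ come out exactly — rather than just $O(1)$ — is where the careful bookkeeping of the prefix code and the Stirling corrections must be done.
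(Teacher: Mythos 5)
Your proposal rests on a mechanism that cannot deliver the claimed bound. You propose to transmit the balancing index $\tau$ itself via a universal (Elias-type) prefix code, so that the cost is about $\log_2\tau+O(1)$, and you assert that the mass of $\tau$ concentrates around $\Theta(\sqrt n)$ with geometric decay beyond that. This is false: for a uniformly random message the initial imbalance is $\Theta(\sqrt n)$, so the running sum must travel a distance $\Theta(\sqrt n)$ before it first hits the target level, and the first-passage time of such a walk is $\Theta(n)$ with constant probability (the walk is only guaranteed to reach the target by step $n$ because the endpoint imbalance is the negation of the initial one). Consequently $\mathbb{E}[\log_2\tau]=\log_2 n-O(1)$, and your scheme would have average redundancy $\sim\log_2 n$ --- this is precisely Scheme~1 of Weber--Immink, whose redundancy is $\rho_1\sim\log_2 n$ in Theorem~\ref{thm:immink}. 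The actual source of the $\frac12\log_2 n$ in all of these schemes is different: the prefix encodes the \emph{rank of $\tau$ inside $\Gamma_\sq^{(B)}(\bc)$}, the set of balancing indices consistent with the received codeword $\bc$, which the decoder recomputes from $\bc$ via the running sum (Proposition~\ref{prop:GammaB}). That set has size equal to the number of distinct running-sum records of $\bc$, which is typically $\Theta(\sqrt n)$, and this is what makes the prefix length $\approx\frac12\log_2 n$.

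You also misidentify the ``small modification'' that distinguishes Scheme~B from Scheme~A, which is where the analytic obstacle you correctly flag is actually overcome. It is not a change of prefix code: Scheme~B shortens the message to length $n-1$, balances to a \emph{two-element} target weight set $\{m+\sq-1,m+\sq\}$, and appends one bit. This replaces the two-sided ``width exactly $i-1$'' lattice-path count of Scheme~A (the intractable alternating cosine sums in \eqref{eq:Git}) by a one-sided ``touches a single line'' count, which telescopes to the ballot-type formula $\gamma_\sq^{(B)}(i,n-1)=\frac{2i\mp2\sq}{n}\binom{n}{m+i\mp\sq}$ of Theorem~\ref{thm:GammaB}; the sum $\sum_i i\log_2(i)\,\gamma_\sq^{(B)}(i,n-1)$ is then evaluated asymptotically by the Gaussian-integral approximation (Theorem~\ref{thm:catalan}), producing the explicit constant $\beta=\frac{2-\ln4-\gamma}{\ln4}-\frac12\approx-0.474$, whence $0.526=1+\beta$ and $2.526=3+\beta$. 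Finally, the gap of $2$ between the two cases in \eqref{eq:rhoB} comes from the two flag bits prepended to \emph{every} prefix when $\sq>0$ (to record good/bad and type), not from the cost of encoding the bad words themselves, whose contribution is $o(1)$ by Corollary~\ref{cor:badwordsB}.
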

	
	\item We study error-correcting balanced codes in the special case where $\sq=0$.
	Specifically, we propose a variable-length balancing scheme, Scheme~C, and analyse its average redundancy $\rho_\C$.
	In this case, the value of $\rho_\C$ depends on the underlying code. 
	When $\C=\{0,1\}^{n-1}$, we obtained a closed expression using lattice-path combinatorics.
	On the other hand, for the general case, we provide a method to compute $\rho_\C$ efficiently.
	\end{enumerate}

	\begin{table}[t]
		\centering
		%\footnotesize
		\setlength{\tabcolsep}{3pt} 
		\renewcommand*{\arraystretch}{1.1}
		%\begin{tabular}{|c|p{12mm}|p{12mm}|p{18mm}|p{15mm}|p{18mm}|}
		\begin{tabular}{|c|C{28mm}|C{28mm}|C{28mm}|C{20mm}|C{20mm}|C{28mm}|}
				\hline
			$n$ & Scheme~1 in~\cite{Weber2010} & Scheme~2 in~\cite{Weber2010} & Scheme in~\cite{immink2010very} & Scheme~B 
			& Scheme~C & Optimal Redundancy~\eqref{eq:optimal-exact} \\
			\hline
			
			8   & 2.65    & 2.04    & 1.90    & 2.01    & 2.12  & 1.87\\
			16  & 3.53    & 2.60    & 2.38    & 2.52    & 2.81  & 2.35\\
			32  & 4.44    & 3.15    & 2.87    & 3.02    & 3.59  & 2.84\\
			64  & 5.37    & 3.7     & 3.36    & 3.52    & 4.42  & 3.33\\
			128 & 6.32    & 4.25    & 3.86    & 4.02    & 5.30  & 3.83\\
			256 & 7.29    & 4.78    & 4.36    & 4.53    & 6.22  & 4.33\\
			512 & 8.27    & 5.31    & 4.86    & 5.03    & 7.15  & 4.83\\\hline  
		\end{tabular}
		\caption{Average Redundancy of Variable-Length Balancing Schemes when $\sq=0$. We remark that Scheme~C is equipped with error-correcting capabilities (see Section~\ref{sec:ecc}).}
		\label{table:balanced}
	\end{table}
	
	\begin{table}[t]
		\centering
		%\small
		\setlength{\tabcolsep}{3pt} 
		\renewcommand*{\arraystretch}{1.1}
	%	\begin{tabular}{|c|p{20mm}|p{20mm}|p{20mm}|p{16mm}|}
	\begin{tabular}{|c|C{40mm}|C{35mm}|C{35mm}|C{45mm}|C{28mm}|}
			\hline
			$n$ & Lower Bound on Redundancy for Schemes in~\cite{skachek2014constant} & 
			Scheme~A Eq.~\eqref{eq:schemeA} &
			Scheme~B Eq.~\eqref{eq:schemeB} & Optimal Redundancy Eq.~\eqref{eq:optimal-exact} \\
			\hline
			
			16 &	0.18 &	17.39 &	29.91 &	9.09\\
			32 &	1.80 &	13.70 &	23.16 &	6.06\\
			64 &	4.70 &	8.03 &	12.06 &	4.94\\
			128 &	6.83 &	5.77 &	6.79 &	4.63\\
			256 &	8.00 &	6.19 &	6.62 &	4.73\\
			512 &	9.00 &	6.83 &	7.11 &	5.03\\
			1000 &	9.97 &	7.35 &	7.57 &	5.41\\  
			\hline
		\end{tabular}
		\caption{Average Redundancy of Variable-Length Balancing Schemes when $\sq=6$. For a fair comparison, we provide a lower bound for the schemes in~\cite{skachek2014constant}. Here, the values are $(1-\frac{D(n,\sq)}{2^n})\log n$. In other words, we ignore the number of redundant bits used to encode bad words.}
		\label{table:qbalanced}
	\end{table}

	\section{First Variable-Length $\sq$-Balancing Scheme}
	\label{sec:schemeA}
	
	We formally describe our variable-length schemes in this section.
	We first present Scheme~A and outline a strategy to analyze the average redundancy of Scheme~A. % and prove Theorem~\ref{thm:main} \todo{something}.
	As mentioned earlier, the main obstacle in the direct application of Knuth's technique is the existence of bad words. 
	Next, we define bad and good words and classify good words into two types.
	
	\begin{definition}
	A word $\bx$ is {\em bad} if $T(\bx,\sq)=T(\overline{\bx},\sq)=\varnothing$.
	Otherwise, $\bx$ is {\em good}. 
	Furthermore, $\bx$ is {\em Type-$1$-good} if $T(\bx,\sq)\ne\varnothing$;
	and $\bx$ is {\em Type-$0$-good} if $T(\bx,\sq)=\varnothing$ and $T(\overline{\bx},\sq)\ne\varnothing$.
	\end{definition}
	
	In Section~\ref{sec:badwords}, we show the following lemma that provides a simple sufficient condition for a word to be good.
	
	\begin{lemma}\label{lem:good}
	If $\weight(\bx)\ge n/2 + \sq$ or $\weight(\bx)\le n/2 - \sq$, then $\bx$ is Type-$1$-good. 
%	If $\weight(\bx)\le n/2 - \sq$, then $\bx$ is Type-$0$-good. 
	\end{lemma}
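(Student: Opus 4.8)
The plan is to show directly that under either weight hypothesis there is an index $j$ with $\flip(\bx,j)\in\B(n,\sq)$. The key is to track how the weight of $\flip(\bx,j)$ evolves as $j$ runs from $0$ to $n$. Write $w_j \triangleq \weight(\flip(\bx,j))$, so that $w_0 = \weight(\bx)$ and $w_n = n - \weight(\bx)$. Flipping one additional bit changes the weight by exactly $\pm 1$, so the sequence $w_0, w_1, \ldots, w_n$ takes integer steps of size $1$; in particular it is a ``discrete continuous'' sequence that hits every integer value between $w_0$ and $w_n$ inclusive.

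First suppose $\weight(\bx) \ge n/2 + \sq$, i.e. $w_0 \ge n/2 + \sq$. I split into the two cases on the magnitude of $w_n$. If $w_n = n - \weight(\bx) \le n/2 + \sq$, then the target value $n/2+\sq$ lies in the interval $[w_n, w_0]$, and since the $\pm 1$-step sequence from $w_0$ down to $w_n$ passes through every intermediate integer, there is some $j$ with $w_j = n/2 + \sq$, which gives $\flip(\bx,j)\in\B(n,\sq)$, so $T(\bx,\sq)\ne\varnothing$ and $\bx$ is Type-$1$-good. If instead $w_n = n - \weight(\bx) > n/2 + \sq$, then both endpoints $w_0$ and $w_n$ exceed $n/2+\sq$; but $w_0 \ge n/2+\sq$ forces $\weight(\bx) \ge n/2+\sq$ while $w_n > n/2+\sq$ forces $\weight(\bx) < n/2-\sq$, a contradiction when $\sq \ge 0$ (and if $\sq=0$ the two weak/strict inequalities still cannot hold simultaneously unless $\weight(\bx)=n/2$, in which case $j=0$ already works). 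Hence this second case is vacuous (or trivial), and we are done.

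The case $\weight(\bx) \le n/2 - \sq$ is symmetric: now $w_0 \le n/2-\sq \le n/2+\sq$ and $w_n = n - \weight(\bx) \ge n/2 + \sq$, so $n/2+\sq \in [w_0, w_n]$ and again the $\pm1$-step sequence attains the value $n/2+\sq$ at some $j$, giving $T(\bx,\sq)\ne\varnothing$. In both cases $\bx$ is Type-$1$-good, which is the claim.

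I do not anticipate a serious obstacle here; the only point requiring a little care is the bookkeeping at the boundary, namely making sure the intermediate-value argument for a $\pm1$-step integer sequence is stated cleanly (it is the discrete analogue of the intermediate value theorem: if consecutive terms differ by exactly $1$, every integer between the first and last term is attained) and checking that the degenerate sub-case where both endpoints lie on the same side of $n/2+\sq$ genuinely cannot occur. A clean way to package this is to observe that $w_0 + w_n = n$, so one of $w_0, w_n$ is $\ge n/2$ and the other is $\le n/2$; combined with whichever weight hypothesis we are given, this immediately places $n/2+\sq$ between $w_0$ and $w_n$, and the intermediate-value argument finishes.
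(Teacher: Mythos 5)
Your proof is correct and is essentially the paper's own argument: the paper establishes this lemma inside the proof of Proposition~\ref{prop:badwords-exact} by noting that the lattice path $\Pi(\bx)$ starts on one side of the line $Y=X+\sq$ and ends on the other and therefore must touch it, which is exactly your discrete intermediate-value argument applied to the $\pm 1$-step sequence $w_j=\weight(\flip(\bx,j))$ (equivalently, to $Y_j-X_j$ via Lemma~\ref{lem:Pi}(ii)). The only difference is notational---lattice paths versus the running weight sequence---and your boundary bookkeeping (using $w_0+w_n=n$ to place $n/2+\sq$ between the endpoints) is sound.
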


	Here, we use Lemma~\ref{lem:good} to provide a simple way to obtain a good word from a bad one using $2\sq$ redundant bits. 
	We remark that in \cite{skachek2014constant}, more sophisticated techniques are used to handle these bad words with less redundancy. %\todo{Tu, please check this!}
	
	\begin{lemma}\label{lem:bad}
		Suppose that $\bx$ is bad.
		If $\bx'=\bx^{[n-2\sq]}$,  then either $\bx'0^{2\sq}$ or $\bx'1^{2\sq}$ is Type-$1$-good. We say that $\bx$ is Type-$i$-bad if $\bx'i^{2\sq}$ is Type-$1$-good. 
 	\end{lemma}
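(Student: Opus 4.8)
\textbf{Proof proposal for Lemma~\ref{lem:bad}.}
The plan is to exploit Lemma~\ref{lem:good} together with the fact that flipping a single bit changes the weight by exactly one. Let $\bx$ be bad, so that $T(\bx,\sq)=T(\overline{\bx},\sq)=\varnothing$. Write $\bx'=\bx^{[n-2\sq]}$ for the length-$(n-2\sq)$ prefix, and let $w=\weight(\bx')$. The two candidate words are $\bx'0^{2\sq}$, of weight $w$, and $\bx'1^{2\sq}$, of weight $w+2\sq$. By Lemma~\ref{lem:good}, $\bx'0^{2\sq}$ is Type-$1$-good whenever $w\le n/2-\sq$, and $\bx'1^{2\sq}$ is Type-$1$-good whenever $w+2\sq\ge n/2+\sq$, i.e.\ whenever $w\ge n/2-\sq$. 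Since one of these two inequalities must hold for any integer $w$, at least one of $\bx'0^{2\sq}$, $\bx'1^{2\sq}$ is Type-$1$-good, which is exactly the claim.

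First I would record the one-line observation that for a word $\by\in\{0,1\}^n$, $\weight(\flip(\by,j))-\weight(\flip(\by,j-1))=\pm1$, so the set of attainable weights $\{\weight(\flip(\by,j)):0\le j\le n\}$ is an interval of integers containing $\weight(\by)$; this is the combinatorial heart of Knuth's argument and is presumably already established in Section~\ref{sec:badwords} in the course of proving Lemma~\ref{lem:good}. With Lemma~\ref{lem:good} in hand, however, the argument above does not even need this directly — it only needs the weight of each candidate and a case split on whether $w\le n/2-\sq$ or $w\ge n/2-\sq$. I would present it as the short two-case dichotomy, being careful that the boundary case $w=n/2-\sq$ is covered by both inequalities (consistent with the non-strict inequalities in Lemma~\ref{lem:good}), so there is no gap.

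The one subtlety worth checking — and the only place the argument could go wrong — is that the definition of ``bad'' is not actually used in the proof; it is used only to guarantee we are in a situation where this padding trick is being invoked, and to make the ``$\bx$ is Type-$i$-bad'' terminology well-posed. I would therefore add a sentence noting that if \emph{both} $\bx'0^{2\sq}$ and $\bx'1^{2\sq}$ happen to be Type-$1$-good (which occurs exactly when $w=n/2-\sq$, or more generally when both weight conditions hold), then we simply fix a convention, say choosing $i=0$, so that the index $i$ in ``Type-$i$-bad'' is uniquely determined. I do not anticipate any real obstacle here: the content is entirely a consequence of Lemma~\ref{lem:good} applied to two explicitly-weighted words, and the proof is a two-line case analysis.
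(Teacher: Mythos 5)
Your proof is correct and is essentially identical to the paper's: both argue via the same two-case dichotomy on $w=\weight(\bx')$, applying Lemma~\ref{lem:good} to $\bx'0^{2\sq}$ (weight $w\le n/2-\sq$) in one case and to $\bx'1^{2\sq}$ (weight $w+2\sq\ge n/2+\sq+1$) in the other. The extra remarks about interval-valued attainable weights and the tie-breaking convention are harmless but not needed.
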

 
 \begin{proof}
 If $\weight(\bx')\le n/2-\sq$, then appending $0$s does not alter the weight. So, $\bx'0^{2\sq}$ has weight at most $n/2-\sq$ and thus, the latter word is Type-1-good by Lemma~\ref{lem:good}. 
 Otherwise, $\weight(\bx')\ge n/2-\sq+1$ and $\bx'1^{2\sq}$ has weight at least $n/2+\sq+1$. Again, Lemma~\ref{lem:good} implies that the latter is a Type-1-good word.
 \end{proof}

	With this lemma, we are ready to define our first variable-length $\sq$-balancing scheme. Henceforth in this paper, for expository purposes,  we drop the ceiling and floor functions and assume that all logarithmic functions return integer values.
	
	\vspace{2mm}

	\noindent{\bf Scheme A}: An $(n,n,\rho_A;\sq)$-balancing scheme\\
	\hspace*{5mm} {\sc Input}: $\bx\in\{0,1\}^n$\\[1mm]
	\hspace*{1mm} {\sc Output}: $\bc\in\B(n,\sq)$, $\bp \in \{0,1\}^*$\\[-5mm]
	\begin{enumerate}[(I)]
	\item Determine if $\bx$ is Type-$i$-good or Type-$i$-bad.
		\begin{itemize}
		\item If $\bx$ is Type-1-good, set $\hat{\bx}=\bx$.
		\item If $\bx$ is Type-0-good, set $\hat{\bx}=\overline{\bx}$.
		\item If $\bx$ is Type-1-bad, set $\hat{\bx}=\bx^{[n-2q]}1^{2q}$.
		\item If $\bx$ is Type-0-bad, set $\hat{\bx}=\bx^{[n-2q]}0^{2q}$.
		\end{itemize}
	By Lemmas~\ref{lem:good} and~\ref{lem:bad}, we have that $T(\hat{\bx},\sq)\ne \varnothing$.
	\item Determine the $\sq$-balanced word $\bc$.
		\begin{itemize}
		\item $\tau \gets \min T(\hat{\bx},\sq)$
		\item $\bc \gets \flip(\hat{\bx},j)$
		\end{itemize}
	\item Determine the prefix $\bp$.	
		\begin{itemize}
		\item Compute $\Gamma^{(A)}_\sq(\bc)$ using \eqref{eq:Gamma} (see Section~\ref{sec:average})
		\item $\bz \gets$ length-$r$ binary representation of the index of $\tau$ in $\Gamma^{(A)}_\sq(\bc)$. Here, $r=\log|\Gamma^(A)_\sq(\bc)|$.
		\item If $\bx$ is Type-$i$-good, we set $\bp\gets 0i\bz$.
		\item If $\bx$ is Type-$i$-bad,  we set $\bp\gets 1i\bz\bx_{[2q]}$.
		\end{itemize}
	\end{enumerate}
	
	When $\sq = 0$, all words are Type-1-good by Lemma~\ref{lem:good}.
	Hence, we can omit Step (I) and in Step (III), we simply set $\bp$ to be $\bz$.
	In this case, we recover Immink and Weber's variable-length scheme in~\cite{immink2010very}.
	
	Roughly speaking, $\Gamma_\sq^{(A)}(\bc)$ denote the set of possible indices/prefixes that can be received with the codeword $\bc$. {\em A priori} the size of  $\Gamma_\sq^{(A)}(\bc)$ is at most $n$ and so, we need at most $\log_2 n$ bits to represent the indices.
	Nevertheless, Immink and Weber observed is that the size of $\Gamma_0^{(A)}(\bc)$ often much smaller than $n$ and thus, the redundancy incurred by some codewords is much smaller than $\log_2 n$ bits.
	A key contribution in~\cite{immink2010very} is to determine the average redundancy incurred by the scheme.
	In Section~\ref{sec:average}, we extend the analysis to the general case where $\sq>0$.

	\subsection{Instructive Example}\label{sec:example}
	
	Fix $n=8$, $\sq=2$, and hence, our target weight is~six. 
	We consider the message $\bx = 1110 0000$.
	For Step (I), using Lemma~\ref{lem:Pi}, we determine that $\bx$ is Type-0-good and set $\hat{\bx}=0001 1111$.
	
	In Step (II), we find that $\tau=1$ and set $\bc=1001 1111$.
	
	In Step (III), we find the prefix $\bp$. 
	Since $R(\bc) = (0,1,0,-1,0,1,2,3,4)$, it follow from~\eqref{eq:Gamma} that $\Gamma_\sq^{(A)}(\bc) = \{0,1,3,6,7,8\}$. Since the size of $\Gamma_\sq^{(A)}(\bc)$ is six, three bits suffice to represent all balancing indices and the representation of $\tau$ is $001$.
	Thus, we set $\bp=00 001$.
	
	We summarize this example and include encodings of other messages in Figure~\ref{fig:schemeA}.
	
	\subsection{Average Redundancy Analysis}
	
	We now analyze the average redundancy of Scheme A.
	To this end, consider a message $\bx$ and let $\bz(\bx)$ and $\bp(\bx)$ be the resulting index representation and prefix, respectively. 
	Observe from Scheme~A, when the word $\bx$ is good, the prefix $\bp(\bx)$ has length $2+|\bz(x)|$. 
	On the other hand, when the word $\bx$ is bad,  the prefix $\bp(\bx)$  has length $2+|\bz(x)|+2q\le 2+2q+\log_2 n$.
	
	Therefore, if we denote the number of bad words in $\{0,1\}^n$ by $D(n,\sq)$, we have that
	\begin{equation}
	\rho_A\le 2 + \frac{1}{2^n}\left(\sum_{\bx \text{ is good}}|\bz(\bx)|\right)
	+ \frac{D(n,\sq)}{2^n}(2\sq+\log_2 n).\label{eq:schemeA}
	\end{equation}
	
	Hence, in the next section, we use lattice path combinatorics to analyse both $D(n,\sq)$ and $\sum_{\bx \text{ is good}}|\bz(\bx)|$. 
	For the latter quantity, we have the following proposition whose proof is deferred to Section~\ref{sec:badwords}.
	
	\begin{proposition}\label{prop:badwords}
		For fixed $\sq$, we have $D(n,\sq)=o(2^n)$.
	\end{proposition}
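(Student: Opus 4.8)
The plan is to show that a word $\bx$ is bad only if both $\bx$ and $\overline{\bx}$ fail to have a $\sq$-balancing index, and to argue that this forces the weight trajectory of $\bx$ to stay within a narrow band. By Lemma~\ref{lem:good}, if $\bx$ is bad then $n/2-\sq < \weight(\bx) < n/2+\sq$, so there are at most $2\sq-1$ possible weights; since $\sq$ is constant, the number of words with such a weight is already $O\!\big(\binom{n}{n/2}\big)=O(2^n/\sqrt n)$, which is $o(2^n)$. Strictly speaking this crude count already suffices, so the real content is just to verify the weight restriction cleanly. First I would recall that $j\mapsto \weight(\flip(\bx,j))$ changes by exactly $\pm1$ as $j$ increments (flipping one more bit), so as $j$ runs over $\bbracket{n}$ the quantity $\weight(\flip(\bx,j))$ traces a lattice path from $\weight(\bx)$ to $n-\weight(\bx)$ with unit steps. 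If $\weight(\bx)\ge n/2+\sq$, this path starts at or above the target $n/2+\sq$ and ends at or below it, hence by the discrete intermediate value property it hits $n/2+\sq$ exactly; symmetrically for $\weight(\bx)\le n/2-\sq$. This is precisely Lemma~\ref{lem:good}, so a bad word must have weight strictly between $n/2-\sq$ and $n/2+\sq$.

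Next I would make the bound explicit: the number of bad words is at most
\begin{equation}
D(n,\sq)\ \le\ \sum_{w=n/2-\sq+1}^{n/2+\sq-1}\binom{n}{w}\ \le\ (2\sq-1)\binom{n}{n/2}.
\end{equation}
By Stirling's approximation (as already invoked in \eqref{eq:optimal-exact}), $\binom{n}{n/2}\sim 2^n/\sqrt{\pi n/2}$, so for constant $\sq$ we get $D(n,\sq)=O(2^n/\sqrt n)=o(2^n)$, which is the claim. For the sharper statement used in the lattice-path analysis of Section~\ref{sec:lattice} — where one wants not merely $o(2^n)$ but a good enough rate that the term $\tfrac{D(n,\sq)}{2^n}(2\sq+\log_2 n)$ in \eqref{eq:schemeA} is negligible — I would strengthen the estimate by restricting further: a bad word must also satisfy that \emph{every} prefix-flip of both $\bx$ and $\overline\bx$ avoids the target, which via the lattice-path correspondence forces the associated path to remain strictly inside a band of width $2\sq$ for its entire length. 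Counting such confined paths (ballot-type / reflection arguments, or the transfer-matrix eigenvalue bound already appearing in \eqref{eq:gamma}) gives $D(n,\sq)=O(\mu^n)$ for some $\mu<2$ depending on $\sq$, which makes $\tfrac{D(n,\sq)}{2^n}\log_2 n\to 0$.

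The main obstacle is not the soft statement $D(n,\sq)=o(2^n)$, which follows immediately from Lemma~\ref{lem:good} and a binomial bound, but rather pinning down the exact combinatorial characterization of bad words in terms of confined lattice paths and then counting them precisely enough for Section~\ref{sec:lattice}. In particular one must be careful that the conditions $T(\bx,\sq)=\varnothing$ and $T(\overline\bx,\sq)=\varnothing$ translate into a single path-confinement condition (the path of $\bx$ and the path of $\overline\bx$ are reverses of one another, so the two conditions are not independent), and one must handle the boundary weights $n/2\pm\sq$ correctly — Lemma~\ref{lem:good} is stated with non-strict inequalities, so a bad word has weight in the open interval, and a step onto $n/2\pm\sq$ is already disallowed. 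Once the characterization is set up, the enumeration is a standard application of the reflection principle or the eigenvalue estimate for the path-counting transfer matrix, entirely analogous to the derivation of \eqref{eq:gamma}.
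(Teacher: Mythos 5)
Your argument is correct, and it reaches the stated conclusion $D(n,\sq)=o(2^n)$ by a genuinely more elementary route than the paper. The paper does not use the weight restriction plus a binomial tail bound at all: it first characterizes bad words exactly as lattice paths confined strictly between $Y=X+\sq$ and $Y=X-\sq$ (Lemma~\ref{lem:badpaths}), derives the exact trigonometric count in Proposition~\ref{prop:badwords-exact} via \eqref{eq:lattice-bdd2}, and then bounds each of the at most $\frac{4(\sq-1)^2}{\sq}$ summands by $\frac{4}{2\sq}\bigl(2\cos\frac{\pi}{2\sq}\bigr)^n$, obtaining $D(n,\sq)=O(\mu^n)$ with $\mu=2\cos\frac{\pi}{2\sq}<2$. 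Your route buys simplicity: Lemma~\ref{lem:good} (which you correctly re-derive by the discrete intermediate value property of $j\mapsto\weight(\flip(\bx,j))$) immediately confines bad words to the $2\sq-1$ central weights, and $(2\sq-1)\binom{n}{n/2}=O(2^n/\sqrt{n})$ already suffices both for the proposition and for the term $\frac{D(n,\sq)}{2^n}(2\sq+\log_2 n)$ in \eqref{eq:schemeA}, since $O(\log n/\sqrt n)\to 0$. What the paper's route buys is the exact formula (used for the numerics in Table~\ref{table:qbalanced}) and an exponentially small fraction of bad words rather than a polynomially small one; your closing sketch of the confinement argument is essentially the paper's actual proof. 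One small correction to that sketch: $\Pi(\overline{\bx})$ is the reflection of $\Pi(\bx)$ about $Y=X$ (it negates $Y_j-X_j$), not its reversal, which is why $T(\overline{\bx},\sq)=\varnothing$ translates into avoidance of the \emph{lower} line $Y=X-\sq$ by the path of $\bx$ itself; this does not affect the part of your argument needed for the proposition.
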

	
	Therefore, it remains to study the quantity  $\sum_{\bx \text{ is good}}|\bz(\bx)|$ and we derive a closed formula in Section~\ref{sec:average}.
	Then, together with \eqref{eq:schemeA}, we have the following estimate for the average redundancy of Scheme~A.
	
	\begin{theorem}
	Scheme A is an $(n,n,\rho_A;\sq)$
	variable-length balancing scheme where
	\begin{equation}\label{eq:rhoA}
		\rho_A 
			\lesssim \frac{1}{2^n}\sum_{i=0}^{n+1} i\gamma_\sq(i,n)\log_2 i \,,
	\end{equation}
	\noindent where
	\begin{align*}%\label{eq:gammaq}
		\gamma^{(A)}_\sq(i,n) & =  \sum_{t=0}^{i-1} 
		\Big(G(i-1,t) - G(i-2,t) - G(i-2,t-1) + G(i-3,t-1) \Big)\, ,\text{ and } \\
		G(i,t) & = \frac{2^{n+2}}{i+2} \sum_{k=1}^{\floor{(i+1)/2}} \cos^n{\frac{\pi k}{i+2}}\sin{\frac{\pi k (t+1)}{i+2}} \sin{\frac{\pi k (t+1-2\sq)}{i+2}}\,.
	\end{align*}
	%}
%\noindent where $G(i,t)$ is given by the expression
%{
%\begin{equation}%\label{eq:Git}
%	\frac{2^{n+2}}{i+2} \sum_{k=1}^{\floor{(i+1)/2}} \cos^n{\frac{\pi k}{i+2}}\sin{\frac{\pi k (t+1)}{i+2}} \sin{\frac{\pi k (t+1-2\sq)}{i+2}}\,.
%\end{equation}
	%$\gamma_\sq(i,n)$ is computed using Theorem~\ref{thm:gammaq}.
%\todo{State \eqref{eq:gammaq} and \eqref{eq:Git}  first.}	
	\end{theorem}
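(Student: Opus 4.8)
The plan is to start from the bound \eqref{eq:schemeA}, dispose of the lower-order terms using Proposition~\ref{prop:badwords}, and reduce the problem to evaluating $\sum_{\bx\text{ good}}|\bz(\bx)|$ by re-indexing the sum over codewords. Recall from Scheme~A that a good word $\bx$ produces an index string of length $|\bz(\bx)| = \log_2\big|\Gamma^{(A)}_\sq(\bc(\bx))\big|$, where $\bc(\bx)\in\B(n,\sq)$ is the codeword output on input $\bx$. So it is natural to group good words by the codeword they produce. I would split the good words into Type-1-good and Type-0-good. On a Type-1-good input the scheme keeps $\hat\bx=\bx$, sets $\tau=\min T(\hat\bx,\sq)$, and outputs $\bc=\flip(\hat\bx,\tau)$; conversely $\hat\bx=\flip(\bc,\tau)$. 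Since $\flip(\flip(\bc,\tau),\tau)=\bc\in\B(n,\sq)$, the index $\tau$ always lies in $T(\flip(\bc,\tau),\sq)$, and the point — which is exactly what the definition \eqref{eq:Gamma} captures, as the worked example in Section~\ref{sec:example} confirms — is that $\tau=\min T(\flip(\bc,\tau),\sq)$ holds iff $\tau$ is a \emph{first-occurrence index} of the level (running-sum) sequence of $\bc$, i.e.\ iff $\tau\in\Gamma^{(A)}_\sq(\bc)$. Hence $\hat\bx\mapsto(\bc,\tau)$ is a bijection between Type-1-good words and pairs $(\bc,\tau)$ with $\tau\in\Gamma^{(A)}_\sq(\bc)$, so that
\[
\sum_{\bx\text{ Type-1-good}}|\bz(\bx)| \;=\; \sum_{\bc\in\B(n,\sq)}\big|\Gamma^{(A)}_\sq(\bc)\big|\,\log_2\big|\Gamma^{(A)}_\sq(\bc)\big|.
\]

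For the remaining contributions: if $\bx$ is Type-0-good then $T(\bx,\sq)=\varnothing$, so Lemma~\ref{lem:good} forces $n/2-\sq<\weight(\bx)<n/2+\sq$; for constant $\sq$ there are only $O\!\big(\binom{n}{n/2}\big)=O(2^n/\sqrt n)$ such words, each contributing at most $\log_2(n{+}1)$, so their total contribution to $\rho_A$ is $O(\log n/\sqrt n)=o(1)$. By Proposition~\ref{prop:badwords} the bad-word term in \eqref{eq:schemeA} is $o(2^n)(2\sq+\log_2 n)/2^n = o(\log n)$, and the additive $2$ is $O(1)$; all three are of lower order than the main term, which is of order $\log_2 n$ (being the average prefix length over a $1-o(1)$ fraction of inputs, it is $\Theta(\log n)$). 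Combining with \eqref{eq:schemeA} gives
\[
\rho_A \;\lesssim\; \frac{1}{2^n}\sum_{\bc\in\B(n,\sq)}\big|\Gamma^{(A)}_\sq(\bc)\big|\,\log_2\big|\Gamma^{(A)}_\sq(\bc)\big| \;=\; \frac{1}{2^n}\sum_{i=0}^{n+1} i\,\gamma^{(A)}_\sq(i,n)\,\log_2 i,
\]
where $\gamma^{(A)}_\sq(i,n):=\#\{\bc\in\B(n,\sq):|\Gamma^{(A)}_\sq(\bc)|=i\}$.

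It then remains to prove the closed form for $\gamma^{(A)}_\sq(i,n)$ by lattice-path combinatorics (the content promised by Section~\ref{sec:lattice}). Identify $\bc$ with its $\pm1$ level path $L_0=0,L_1,\dots,L_n=2\sq$. Because consecutive $L_j$ differ by $1$, the set of values it visits is an interval, so $|\Gamma^{(A)}_\sq(\bc)|=\max_j L_j-\min_j L_j+1$, and $\gamma^{(A)}_\sq(i,n)$ counts length-$n$ paths from $0$ to $2\sq$ of range $i-1$. Let $G(i,t)$ be the number of such paths confined to the window $\{t-i,\dots,t\}$ (that is, $\max_j L_j\le t$ and $\min_j L_j\ge t-i$). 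By inclusion–exclusion on the exact values of $\max$ and $\min$, the number of paths with $\max=M$ and $\min=M-(i-1)$ is $W(M,N)-W(M{-}1,N)-W(M,N{-}1)+W(M{-}1,N{-}1)$ where $W(M,N)$ counts paths confined to $\{-N,\dots,M\}$; using $W(M,N)=G(M{+}N,M)$ and summing over $M$ (boundary terms vanishing, so $0\le M\le i-1$) yields precisely $\gamma^{(A)}_\sq(i,n)=\sum_{t=0}^{i-1}\big(G(i{-}1,t)-G(i{-}2,t)-G(i{-}2,t{-}1)+G(i{-}3,t{-}1)\big)$. Finally, after the reflection $v\mapsto-v$ and a shift, a window-confined path from $0$ to $2\sq$ becomes a walk of length $n$ on the path graph with vertices $\{1,\dots,i+1\}$ from $t+1$ to $t+1-2\sq$; the spectral decomposition of that adjacency matrix (eigenvalues $2\cos\tfrac{\pi k}{i+2}$, eigenvectors proportional to $\sin\tfrac{\pi kj}{i+2}$) gives this count as $\tfrac{2^{n+1}}{i+2}\sum_{k=1}^{i+1}\cos^n\tfrac{\pi k}{i+2}\sin\tfrac{\pi k(t+1)}{i+2}\sin\tfrac{\pi k(t+1-2\sq)}{i+2}$, and folding under $k\mapsto i+2-k$ — valid since $n$ and $2(t+1)-2\sq$ are even, and the middle term vanishes because $\cos\tfrac{\pi}{2}=0$ — collapses the sum to $k\le\floor{(i+1)/2}$ at the cost of doubling the prefactor, which is exactly the stated expression for $G(i,t)$.

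The main obstacle, I expect, is the bookkeeping in the last two steps: matching the definition \eqref{eq:Gamma} of $\Gamma^{(A)}_\sq(\bc)$ with the ``first-occurrence / range'' description of the level path, and then arranging the inclusion–exclusion with the correct window widths and vanishing boundary terms so that it lands exactly on the alternating combination of $G$'s in the statement rather than an equivalent but differently-grouped one. The spectral evaluation of $G(i,t)$ is standard once the reduction to path-graph walks is set up; the only delicate point there is the parity argument halving the range of $k$. Finally, one should check that for $\sq=0$ everything degenerates to Immink and Weber's scheme of~\cite{immink2010very}, as Scheme~A is designed to.
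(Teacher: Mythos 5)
Your proposal is correct and follows essentially the same route as the paper: starting from \eqref{eq:schemeA}, discarding the bad-word term via Proposition~\ref{prop:badwords}, reindexing the good-word sum by codewords through $\Gamma^{(A)}_\sq(\bc)$, identifying $|\Gamma^{(A)}_\sq(\bc)|=i$ with lattice paths of width $i-1$ (Lemma~\ref{lem:pathwidth}), and then applying inclusion--exclusion with the confined-path count (your spectral derivation on the path graph is exactly how the trigonometric formula \eqref{eq:lattice-bdd2} of Theorem~\ref{thm:lattice} arises). The only substantive difference is that you explicitly bound the Type-$0$-good contribution as $o(1)$, a bookkeeping step the paper leaves implicit, and this is a welcome addition rather than a deviation.
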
		
	
	We defer the detailed derivation of $\gamma_\sq(i,n)$ in Section~\ref{sec:lattice} (see Theorem~\ref{thm:gammaq}).
	It follows from the above expression that the average redundancy $\rho_A$ can be computed in polynomial time.
	%Hence, the average redundancy $\rho_A$ can be computed in polynomial time.
	%However, an asymptotically tight analysis of the expressions in \eqref{eq:rhoA}, \eqref{eq:gammaq} and \eqref{eq:Git} remains elusive. 
	However, an asymptotically tight analysis of these expressions remains elusive. So, in Section~\ref{sec:schemeB}, we make a simple modification to Scheme~A and show that the resulting scheme has average redundancy at most $\frac12 \log_2 n +O(1)$.
	
	%\todo{Tu: Reorganize. Now, readers need go back and forth to understand.}

	\section{Analysis using Lattice Path Combinatorics}
	\label{sec:lattice}
	
	In this section, we complete the analysis of $\rho_A$.
	Results in this section are based on a classic combinatorial problem -- lattice path enumeration. We remark that Immink and Weber applied similar methods for their analysis of variable-length $0$-balancing schemes~\cite{immink2010very, Weber2010}. Here, we not only extend the analysis to the case where $\sq>0$, but also use lattice path combinatorics to enumerate bad words.
	
	%Here, we will introduce the simplest enumeration problem for lattice path and its main theorem \cite{krattenthaler2015lattice}. 
	
	\begin{definition}
		A path in the integer lattice plane $\mathbb{Z}^2$ is {\em simple} if it starts from a lattice point and consists of horizontal $\rightarrow$ and vertical $\uparrow$ unit steps in the positive direction.
	\end{definition}

	Since the 1850s, lattice paths have been extensively studied and we refer the reader to Krattenthaler~\cite{krattenthaler2015lattice} for a comprehensive survey of the history, applications, and related results. 
	In this paper, we are interested in counting the following quantities.
	Suppose that $a$, $b$, $c$, $d$, $s$, and $t$ are integers with the following properties: $a\le c$, $b\le d$, $a+s \le b \le a+t$, and $c+s \le d\le c+t$. 
	We then let $\bL\big((a,b) \rightarrow (c,d)\big)$ denote the number of simple paths from $(a,b)$ to $(c,d)$, 
	while $\bL\big((a,b) \rightarrow (c,d);s,t\big)$ denotes the number of simple paths from $(a,b)$ to $(c,d)$ 
	that stay below\footnote{Here, we include simple paths that touch both lines $Y=X+t$ and $Y=X+s$.} 
	the line $Y=X+t$ and above the line	$Y=X+s$. 
	We have the following enumeration results.
	
	\begin{theorem}[\cite{krattenthaler2015lattice}]\label{thm:lattice}
	Set $\Delta_x=c-a$ and $\Delta_y=d-b$.
	\begin{equation}
	\bL\big((a,b) \rightarrow (c,d)\big) = \binom{\Delta_x+\Delta_y}{\Delta_x} \label{eq:lattice}
	\end{equation}
	Suppose further that $\Delta_h=t-s+2$.
	\begin{align}
	%&\bL\Big((a,b)\ra(c,d);s,t\Big) \notag \\
	\bL\Big((a,b)\ra(c,d);s,t\Big)
	&= \sum_{k \in \mathbb{Z}}\binom{\Delta_x+\Delta_y}{\Delta_x-k\Delta_h} 
	- \binom{\Delta_x+\Delta_y} {c-b-k\Delta_h+t+1} \label{eq:lattice-bdd1}\\
	& = \sum_{k=1}^{(\floor{(\Delta_h-1)/2}} \frac{4}{\Delta_h}  \left(2\cos{\frac{\pi k}{\Delta_h}}\right)^{\Delta_x+\Delta_y}%\notag\\& \hspace{5mm}\times
	\sin{\frac{\pi k (a-b+t+1)}{\Delta_h}} \sin{\frac{\pi k (c-d+t+1)}{\Delta_h}} \label{eq:lattice-bdd2}
	\end{align}
	Suppose that $t=\infty$ and $s=0$. That is, we count all paths above $Y=X$. 
	So, $a\le b$ and $c\le d$.
	\begin{equation}
	\bL\Big((a,b)\ra(c,d);0,\infty\Big) =\binom{\Delta_x+\Delta_y}{\Delta_x} - \binom{\Delta_x+\Delta_y}{d-a+1}\,. \label{eq:lattice-bddabove}
	\end{equation}
	\end{theorem}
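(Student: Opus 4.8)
The plan is to establish the three assertions in order of increasing difficulty. Equation~\eqref{eq:lattice} is immediate: a simple path from $(a,b)$ to $(c,d)$ is determined by the order in which its $\Delta_x=c-a$ horizontal steps and $\Delta_y=d-b$ vertical steps are taken, so encoding a horizontal step as $0$ and a vertical step as $1$ gives a bijection with binary strings of length $\Delta_x+\Delta_y$ and weight $\Delta_y$, of which there are $\binom{\Delta_x+\Delta_y}{\Delta_x}$.

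For~\eqref{eq:lattice-bdd1} I would re-coordinatize by the height $h=Y-X$, under which a horizontal step sends $h\mapsto h-1$ and a vertical step sends $h\mapsto h+1$; the problem becomes that of counting $\pm1$ walks of length $N:=\Delta_x+\Delta_y$ from $h=b-a$ to $h=d-c$ confined to the band $s\le h\le t$. The forbidden heights are $h=s-1$ and $h=t+1$, and the group generated by the reflections across these two walls is infinite dihedral, with orbit of the starting height equal to $\{\,b-a+2k\Delta_h\,\}\cup\{\,2(s-1)-(b-a)+2k\Delta_h\,\}$, $k\in\mathbb{Z}$, where $\Delta_h=t-s+2$. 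Applying the reflection principle in its inclusion--exclusion-over-the-orbit form (Lindstr\"om--Gessel--Zeilberger), the confined count equals the signed sum over this orbit of the numbers of \emph{unconstrained} $\pm1$ walks of length $N$ with the prescribed endpoint; each such unconstrained count is a single binomial coefficient (zero if its index falls outside $[0,N]$), a shift of $2k\Delta_h$ in height becomes a shift of $k\Delta_h$ in the upper index, and collecting the ``translate'' images ($+\binom{\Delta_x+\Delta_y}{\Delta_x-k\Delta_h}$) separately from the ``reflected'' images ($-\binom{\Delta_x+\Delta_y}{c-b-k\Delta_h+t+1}$) yields~\eqref{eq:lattice-bdd1}.

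For the trigonometric form~\eqref{eq:lattice-bdd2} I would instead diagonalize the transfer operator directly: confined walks of length $N$ in $\{s,\dots,t\}$ are counted by an entry of $A^N$, where $A$ is the adjacency matrix of the path graph on $t-s+1$ vertices, a tridiagonal $0/1$ matrix with eigenvalues $2\cos\tfrac{\pi k}{\Delta_h}$ and eigenvectors whose $\ell$-th coordinate is proportional to $\sin\tfrac{\pi k\ell}{\Delta_h}$, $1\le k\le\Delta_h-1$. Substituting the spectral decomposition into the entry indexed by the start and end heights (which, after shifting the band to begin at vertex $1$, appear as $a-b+t+1$ and $c-d+t+1$) and folding the $k$-sum via the symmetry $k\leftrightarrow\Delta_h-k$ produces the prefactor $\tfrac{4}{\Delta_h}$ and the range $1\le k\le\floor{(\Delta_h-1)/2}$; equivalently, \eqref{eq:lattice-bdd2} is the image of~\eqref{eq:lattice-bdd1} under a finite Fourier (Jacobi theta) transformation applied to the infinite alternating sum of binomials. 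Finally, \eqref{eq:lattice-bddabove} is the degeneration $t=\infty$, $s=0$: the images from the now-vanished upper wall and all nonzero translates push their binomial indices outside $[0,\Delta_x+\Delta_y]$, so only the $k=0$ translate term $\binom{\Delta_x+\Delta_y}{\Delta_x}$ and the single reflection across $Y=X-1$ survive, and the latter---via $(x,y)\mapsto(y+1,x-1)$ on the initial segment---contributes $\binom{\Delta_x+\Delta_y}{d-a+1}$.

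The main obstacle is the parity bookkeeping hidden in~\eqref{eq:lattice-bdd2}: after shifting the band and rewriting the sine arguments, folding the sum over $k$ introduces a factor $(-1)^N$ from $2\cos\tfrac{\pi(\Delta_h-k)}{\Delta_h}=-2\cos\tfrac{\pi k}{\Delta_h}$ and factors $(-1)^{\ell}$ from the two reflected sines, and one must check that these cancel (they do, since $N$ and the two shifted endpoints have matching parity) and that, when $\Delta_h$ is even, the middle term $k=\Delta_h/2$ vanishes because $\cos\tfrac{\pi}{2}=0$. The structural ingredients---the reflection group and the spectral decomposition of the path graph---are entirely standard, so I would expect only this final reconciliation of constants and ranges to require care.
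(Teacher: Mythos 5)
This theorem is imported verbatim from Krattenthaler's survey \cite{krattenthaler2015lattice}; the paper supplies no proof of its own, only the citation. Your proposal correctly reconstructs the standard derivations used in that reference: the binomial count for \eqref{eq:lattice}, the two-wall reflection principle over the infinite dihedral orbit for \eqref{eq:lattice-bdd1}, the spectral decomposition of the path-graph transfer matrix (with the $k\leftrightarrow\Delta_h-k$ folding, whose sign cancellation you rightly flag as the only delicate point, and which does work out since the walk length and the two band positions have matching parity) for \eqref{eq:lattice-bdd2}, and the single-reflection degeneration giving the index $d-a+1$ in \eqref{eq:lattice-bddabove}. So the proposal is sound and matches the approach of the cited source rather than deviating from anything in this paper.
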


	Next, we describe a transformation $\Pi$ that maps binary words $\{0,1\}^n$ of a certain weight to length-$n$ simple paths between certain lattice points in $\mathbb{Z}^2$. Specifically, for any binary word $\bx=x_1x_2\ldots x_n$, we define a simple path $\Pi(\bx)$ where
	\begin{itemize}
		\item $\Pi(\bx)$ starts from $(0,\weight(\bx)-m)$ (recall that $n=2m$);
		\item When $x_i=0$, we move a vertical unit $\uparrow$;
		\item When $x_i=1$, we move a horizontal unit $\rightarrow$.
	\end{itemize}
	
	In Figure~\ref{fig:schemeA}, we provide some examples of the transformation $\Pi$.
	The following lemma summarizes certain properties of $\Pi$.
	\begin{lemma}\label{lem:Pi}\hfill
	\begin{enumerate}[(i)]
	\item Fix $\sq$ and recall that  $\B(n,\sq)$ is the set of words with weight $m+\sq$. 
	Let ${\cal P}$ be the set of all simple paths from $(0,\sq)$ to $(m+\sq,m)$.
	Then $\Pi$ is a bijection from $\B(n,\sq)$ to $\cal P$.
	\item Let $\bx=x_1x_2\ldots x_n$. Suppose that the lattice points on $\Pi(\bx)$ are $(X_0,Y_0)$, $(X_1,Y_1)$, \ldots, $(X_n,Y_n)$. Then for $j\in\bbracket{n}$,  the imbalance of $\flip(\bx,j)$ is given by $2(Y_j-X_j)$. 
	\end{enumerate}
	\end{lemma}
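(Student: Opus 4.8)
The plan is to verify both parts directly from the definition of the transformation $\Pi$, reducing everything to bookkeeping about how horizontal and vertical steps track the weight and the running imbalance.

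For part~(i), first I would observe that a length-$n$ word $\bx$ with $\weight(\bx)=m+\sq$ contributes exactly $m+\sq$ horizontal steps (one per $1$) and $m-\sq$ vertical steps (one per $0$), since $n-(m+\sq)=m-\sq$. Because $\Pi(\bx)$ starts at $(0,\weight(\bx)-m)=(0,\sq)$, after consuming all $n$ letters it ends at $\bigl(0+(m+\sq),\,\sq+(m-\sq)\bigr)=(m+\sq,m)$. Hence $\Pi$ maps $\B(n,\sq)$ into $\cal P$. For the bijection, I would note that any simple path from $(0,\sq)$ to $(m+\sq,m)$ consists of exactly $m+\sq$ horizontal and $m-\sq$ vertical steps in some order, and the inverse map reads off the step sequence as a binary word (horizontal $\mapsto 1$, vertical $\mapsto 0$); these two maps are mutually inverse and the word so produced has the right length and weight, so $\Pi$ is onto and injective.

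For part~(ii), I would track the coordinate $Y_j-X_j$ along the path. Starting at $(X_0,Y_0)=(0,\weight(\bx)-m)$, we have $Y_0-X_0=\weight(\bx)-m$, which is half the imbalance of $\bx=\flip(\bx,0)$; this is the base case. For the inductive step, reading the $j$-th letter: if $x_j=0$ we take a vertical step so $Y_j-X_j=(Y_{j-1}-X_{j-1})+1$, and if $x_j=1$ we take a horizontal step so $Y_j-X_j=(Y_{j-1}-X_{j-1})-1$. On the other hand, passing from $\flip(\bx,j-1)$ to $\flip(\bx,j)$ flips the bit in position $j$: if $x_j=0$ that bit becomes a $1$, raising the weight by $1$ and the imbalance by $2$; if $x_j=1$ it becomes a $0$, lowering the imbalance by $2$. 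So in both cases the imbalance of $\flip(\bx,j)$ equals the imbalance of $\flip(\bx,j-1)$ plus $2\bigl((Y_j-X_j)-(Y_{j-1}-X_{j-1})\bigr)$, and by induction the imbalance of $\flip(\bx,j)$ is $2(Y_j-X_j)$.

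I do not anticipate a genuine obstacle here; the only point requiring a little care is fixing the sign conventions consistently (recall the paper uses \emph{signed} imbalance, with more ones meaning positive), so that the direction of each step matches the direction in which flipping a $0$ versus a $1$ moves the weight. Once the conventions are pinned down, both parts follow by the elementary step-by-step argument sketched above.
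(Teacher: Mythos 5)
Your proposal is correct and follows essentially the same route as the paper: part~(i) is verified directly from the definition of $\Pi$ (the paper simply declares it straightforward, whereas you spell out the step counts and the inverse map), and part~(ii) is the same induction on $j$, matching the $\pm 1$ change in $Y_j-X_j$ against the $\pm 2$ change in imbalance when the $j$-th bit is flipped.
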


	\begin{proof}
	\begin{enumerate}[(i)]
	\item This is straightforward from the definition of $\Pi$.
	\item We prove this by this induction. When $j=0$, the imbalance of $\bx$ is $2\weight(\bx)-n = 2(\weight(\bx)-m) = 2(Y_0-X_0)$.
	
	Next, we assume that the imbalance of $\flip(\bx,j-1)$ is given by $2(Y_{j-1}-X_{j-1})$.
	If $x_j=0$, then $Y_j$ increases by one, and the imbalance of $\flip(\bx,j)$ increases by two. 
	Similarly, if $x_j=1$, then $X_j$ increases by one, and the imbalance of $\flip(\bx,j)$ decreases by two.
	In both cases,  imbalance of $\flip(\bx,j-1)$ is given by $2(Y_{j-1}-X_{j-1})$ and the proof is complete. \qedhere
	\end{enumerate}
	\end{proof}
	%\begin{comment}
\begin{figure*}
\centering
\subfloat[Lattice paths and their corresponding binary~words]{
	\small
\begin{tikzpicture}[scale=0.50]
	\draw[very thin,color=gray,step=1, dashed] (0,-3) grid (7,4);
	\draw[thick] (0,0) -- (7,0);
	\draw[thick] (0,-3) -- (0,4);
	\draw[thin, green] (0,2) --  (2,4);
	\draw[thin, green] (0,-2) -- (6,4);
	\node at (-0.4, -3) {-3};
	\node at (-0.4, -2) {-2};
	\node at (-0.4, -1) {-1};
	\node at (-0.4, 0) {0};
	\node at (-0.4, 1) {1};
	\node at (-0.4, 2) {2};
	\node at (-0.4, 3) {3};
	\node at (0, 4.3) {$Y=X+2$};
	\node at (7, 4.3) {$Y=X-2$};
	\draw[line width=0.9mm, color = blue, ->](0,-2) -- (0,-1)--(2,-1)--(2,4);
	\draw[line width=0.5mm, color = orange, ->] (0,-1) -- (3,-1) -- (3,4);
	\draw[line width=0.5mm, color = red, ->](0,0) -- (0,1)--(1,1)--(2,1)--(2,2)--(2,3)--(3,3)--(4,3)--(4,4);
\end{tikzpicture}}~
\subfloat[Encoding of messages using Scheme~A]{

\renewcommand{\arraystretch}{1.5}
\small
\begin{tabular}[b]{ccc cccl}
\hline
$\bx$ & Type & $\hat{\bx}$ & $\tau$ & $\bc$ & $\Gamma_\sq(\bc)$ & $\bp$ \\
\hline
{\color{blue}$0110 0000$} &
Type-$1$-good & 
$0110 0000$ &
$8$ & 
$1001 1111$ &
$\{0,1,3,6,7,8\}$&
 $01\,101$ \\ 
 
 {\color{orange}$1110 0000$} &
 Type-$0$-good & 
 $0001 1111$ &
 $1$ & 
 $1001 1111$ &
 $\{0,1,3,6,7,8\}$&
 $00\,001$ \\ 
 
 {\color{red}$0110 \underline{0110}$} &
 Type-$0$-bad & 
 $0110 0000$ &
 $8$ & 
 $1001 1111$ &
 $\{0,1,3,6,7,8\}$&
 $10\,101\, \underline{0110}$ \\ 
  \hline\\ 
\end{tabular}
}
\vspace{-3mm}
\caption{Instructive example of Scheme A.\vspace{-6mm}
}
\label{fig:schemeA}
\end{figure*}

	\subsection{Enumeration of Bad Words}\label{sec:badwords}
	First, we apply Lemma~\ref{lem:Pi} to characterize good and bad words.
	
	\begin{lemma}\label{lem:badpaths}
	Fix $\sq>0$ and consider the lines $Y = X + \sq$ and $Y = X -\sq$. 
	Let $\bx\in\{0,1\}^n$ and consider the path $\Pi(\bx)$.
	\begin{enumerate}[(i)]
	\item $\bx$ is bad if and only if $\Pi(\bx)$ is strictly between $Y = X + \sq$ and $Y = X - \sq$. 
	\item $\bx$ is Type-1-good if and only if $\Pi(\bx)$ touches the line $Y = X + \sq$. 
	\item $\bx$ is Type-0-good if and only if $\Pi(\bx)$ does not touch the line $Y = X + \sq$, but touches the line $Y=X-\sq$. 
	\end{enumerate}
	\end{lemma}
	
	Therefore, the number of bad words is immediate from Lemma~\ref{lem:badpaths}(i) and Theorem~\ref{thm:lattice}.
	
	\begin{proposition}\label{prop:badwords-exact}
	We have that 
	%{\footnotesize 
	\begin{equation}
	D(n,\sq) = \sum_{y=1-\sq}^{\sq-1} \sum_{k=1}^{\sq -1} \frac{4}{2\sq}(2\cos{\frac{\pi k}{2\sq}})^{2m}\sin{\frac{\pi k (\sq - y)}{2\sq}}\sin{\frac{\pi k (\sq + y)}{2\sq}}
	\end{equation}
	%}
	\end{proposition} 
	
	\begin{proof}
	Let $\bx$ be a word with weight $m+y$.
	Then by Lemma~\ref{lem:Pi}, we have that $\Pi(\bx)$ starts from $(0,y)$ and ends at $(m+y,m)$. Then it is straightforward to check that when $y\ge \sq$, the path $\Pi(\bx)$ touches the line $Y=X+\sq$. Similarly, when $y\le -\sq$, the path $\Pi(\bx)$ necessarily touches the line $Y=X+\sq$. Therefore, whenever $y\le -\sq$ or $y\ge \sq$, $\bx$ is necessarily Type-1-good. This also proves Lemma~\ref{lem:good}.
	
	Hence, we consider the case where $1-\sq \le  y\le \sq-1$. Then the bad words with weight $m+y$ correspond to the paths that start from $(0,y)$ and end at $(m+y,m)$ and lie strictly between $Y = X + \sq$ and $Y = X - \sq$. Then \eqref{eq:lattice-bdd2}  states that this number is given by $\sum_{k=1}^{\sq -1} \frac{4}{2\sq}(2\cos{\frac{\pi k}{2\sq}})^{2m}\sin{\frac{\pi k (\sq - y)}{2\sq}}\sin{\frac{\pi k (\sq + y)}{2\sq}}$. Summing over all valid values of $y$ recovers the proposition.
	\end{proof}

	Next, we estimate asymptotically the number of bad words. Note that each summand is at most $\frac{4}{2\sq}(2\cos{\frac{\pi k}{2\sq}})^{n}$. Now, since $0<k<\sq$, we have that $0< \frac{\pi k}{2\sq} < \pi/2$. Then the monotone decreasing property of the cosine in this region shows that each summand is at most $\frac{4}{2\sq}(2\cos{\frac{\pi}{2\sq}})^n$.
	Since there are at most $\frac{4(\sq-1)^2}{\sq}$ summands, we have $D(n,\sq)\le \frac{4(\sq-1)^2}{\sq} (2\cos{\frac{\pi}{2\sq}})^n=o(2^n)$. This completes the proof of Proposition~\ref{prop:badwords}.

	\subsection{Average Redundancy of Good Words}\label{sec:average}
	
	In this section, we provide a closed formula for the quantity $\sum_{\bx\text{ is good}} |\bz(\bx)|$ defined in \eqref{eq:schemeA}.
	Immink and Weber first investigated this quantity in their proposed variable-length $0$-balancing scheme~\cite{immink2010very}. In their paper, instead of studying the set of balancing indices associated with a \textit{message} $\bx$, they investigated the set of $0$-balancing indices that can be received with a \textit{codeword} $\bc$. Formally, for $\bc \in \B(n,\sq)$, we consider the following set:
	\begin{equation*}
		\Gamma_\sq^{(A)}(\bc) \triangleq \{j \in \bbracket{n}: \flip(\bx,j)=\bc\, ,\, \tau_\sq(\bx)=j \text{ for } \bx \in \{0,1\}^n\}\,.
	\end{equation*}
	
	Here, $\tau_\sq(\bx)$ is defined to be $\min T(\bx,\sq)$. 
	It turns out that given $\bc \in \B(n,\sq)$, we can determine $\Gamma_\sq^{(A)}(\bc)$ efficiently.
	Specifically, when $\sq=0$, Immink and Weber provided a simple characterization of $\Gamma_\sq^{(A)}(\bc)$ using the notion of \textit{running sum}.
	
	\begin{definition}
		Let $\bx=x_1x_2\ldots x_n \in \{0,1\}^n$. The \textit{running sum} of $\bx$, denoted by $R(\bx)$, is a length-$(n+1)$ integer-valued vector indexed by $\bbracket{n}$ defined by 
		\begin{equation*}
			R(\bx)_i = 
			\begin{cases}
				0, &\text{if } i = 0, \\
				R(\bx)_{i-1}+ (-1)^{x_i+1}, &\text{if } i > 0.
			\end{cases}
		\end{equation*}
	\end{definition}
	
	The following proposition generalizes a result in~\cite{immink2010very} for the case where $\sq\ge 0$.
	
	\begin{proposition}\label{prop:Gamma}
		Let $\bc \in \B(n,\sq)$. Then
		\begin{equation}
			\Gamma_\sq^{(A)}(\bc) = \{i \in \bbracket{n}: R(\bc)_i \neq R(\bc)_j \text{ for all } j < i\}\,.\label{eq:Gamma}
		\end{equation}
	\end{proposition}
	\begin{proof}
		Let $\bc = c_1c_2\cdots c_n$. Then $\sum_{k=1}^n c_k = m+\sq$.
		
		Now, let $\bx = \flip(\bc,i)$ and suppose that $i \notin \Gamma_\sq(\bc)$. This means that there is an index $j < i$ where $\flip(\bx,j) \in \bB(n,\sq)$. 
		
		Since $\flip(\bx,j)=c_1\cdots c_j\overline{c_{j+1}\cdots c_{i}}c_{i+1}\cdots c_n$, we have that 
		$\sum_{k=1}^n c_k + (i-j)- 2\sum_{k=j+1}^i c_k = m+\sq$. 		
		Hence, 
		\[\left(2\sum_{k=j+1}^i c_k\right) - (i-j)=\sum_{k=j+1}^i (2c_k-1) =\sum_{k=j+1}^i (-1)^{c_k+1} = 0\,,\] which is equivalent to $R(\bc)_i = R(\bc)_j$.
	\end{proof}

	Next, for $1\le i\le n+1$, we consider the following subset of words in $\B(n,\sq)$:
	\begin{equation}
		\cE^{(A)}_\sq(i,n) = \{\bc \in \B(n,\sq): |\Gamma_\sq^{(A)}(\bc)|=i\}
	\end{equation}
	and determine its size $\gamma^{(A)}_\sq(i,n)\triangleq |\cE^{(A)}_\sq(i,n)|$.
	
	To characterize the words in $\cE_\sq(i,n)$ as lattice paths, we introduce the notion of {\em  width}.
	
	\begin{definition}\label{def:width}
	Consider a lattice path $\boldsymbol{\pi}$.
	Suppose that $t_{\min}$ and $s_{\max}$ are the smallest and largest integers such that $\boldsymbol{\pi}$ lies in between $Y=X+t_{\min}$ and $Y=X+s_{\max}$. Then the {\em width} of $\boldsymbol{\pi}$ is defined to be $t_{\min}-s_{\max}$.
	\end{definition}
	
	%We now characterize words in $\cE_\sq(i,n)$.
	
	\begin{lemma}\label{lem:pathwidth}
	Fix $\sq$ and $1\le i \le n+1$. 
	Then $\bx\in\cE^{(A)}_\sq(i,n)$ if and only if the path $\Pi(\bx)$ has width exactly $i-1$.
	\end{lemma}
	
	Hence, determining $\gamma_\sq^{(A)}(i,n)$ is equivalent to enumerating lattice paths with a certain width.
	Therefore, using principles of inclusion and exclusion with Theorem~\ref{thm:lattice}, we have the following result.
	Here, we reproduce the expressions in Theorem~\ref{thm:gammaq} for convenience.
	
	\begin{theorem}\label{thm:gammaq}
	We have that %\todo{recall $\gamma^{(A)}_\sq$}
	%{ 
	\begin{equation}\label{eq:gammaq}
		\gamma^{(A)}_\sq(i,n) =  \sum_{t=0}^{i-1} 
		\Big(G(i-1,t) - G(i-2,t) - G(i-2,t-1) + G(i-3,t-1) \Big)\, , 
	\end{equation}
	%}
	\noindent where $G(i,t)$ is given by the expression
	%{
	\begin{equation}\label{eq:Git}
	 \frac{2^{n+2}}{i+2} \sum_{k=1}^{\floor{(i+1)/2}} \cos^n{\frac{\pi k}{i+2}}\sin{\frac{\pi k (t+1)}{i+2}} \sin{\frac{\pi k (t+1-2\sq)}{i+2}}\,.
	\end{equation}
    %}
	\end{theorem}
	
	\begin{proof}
	Let $\bc \in \cE^{(A)}_\sq(i,n)$. Then using the transformation $\Pi$, we have that $\Pi(\bc)$ is a simple path from $(0,\sq)$ to $(m+\sq,m)$. Then Lemma~\ref{lem:pathwidth} states that  $\Pi(\bc)$ has width exactly $i-1$.
	In other words, if we set $t=t_{\min}$ as in Definition~\ref{def:width}, then $s_{\max}=t-(i-1)$.	
	Hence, for this fixed value of $t$, we are interested in the number of paths that touch {\em both} $Y=X+t$ and $Y=X+t-(i-1)$. % for a fixed $t$.
	
	Now, applying Theorem~\ref{thm:lattice} with \eqref{eq:Git}, we observe that 
	$G(i-1,t)= \bL\Big((0,0)\ra(m+\sq,m);t-(i-1),t\Big)$ counts all paths that are between $Y=X+t$ and $Y=X+t-(i-1)$.
	Next, we apply the principle of inclusion and exclusion.
	Specifically, we exclude the paths between $Y=X+t$ and $Y=X+t-(i-2)$, between $Y=X+(t-1)$ and $Y=X+t-(i-1)$, and include the paths between $Y=X+(t-2)$ and $Y=X+t-(i-2)$. These quantities are given by $G(i-2,t)$, $G(i-2,t-1)$ and $G(i-3,t-1)$, respectively.
	
	Summing over all possible values of $t$, we obtain \eqref{eq:gammaq}.
	\end{proof}

	Note that when $\sq=0$, we recover \eqref{eq:gamma}.
	Inserting the expression of $\gamma_\sq^{(A)}(i,n)$ into \eqref{eq:rhoA}, we obtain a closed formula for an upper bound of $\rho_A$. 
	As \eqref{eq:Git} appears unamenable to asymptotic analysis, we introduce our second variable-length balancing scheme.
		
	\section{Second Variable-Length $\sq$-Balancing Scheme}
	\label{sec:schemeB}
	
	In this section, we define Scheme~B and analyze its average redundancy with methods similar to previous sections. We then complete the asymptotic analysis and prove Theorem~\ref{thm:main}. As mentioned earlier, Scheme~B is similar to Scheme~A and the main difference is that we consider a message $\bx$ of length $n-1$. Hence, we first encode into codeword $\bc'$ with weight $m+\sq -1$ or $m+\sq$. Then, we append one extra redundant bit so that the resulting codeword $\bc$ has length $n$ and weight $m+\sq$. In other words, $\bc\in\B(n,\sq)$. Surprisingly, this modification simplifies the analysis of average redundancy and allows us to prove that the average redundancy of Scheme~B is within an additive constant of $0.2$ from the optimal value in~\eqref{eq:optimal-red} (when $\sq=0$).
	
	As before, for a word $\bx\in\{0,1\}^{n-1}$, we define the set of indices $T_B(\bx,\sq)$ and use it to classify $\bx$ as good or bad.
	
	\begin{definition}
		Let $\bx\in\{0,1\}^{n-1}$. Then we set
		{ 
		\begin{equation} \label{eq:TB}
		T_B(\bx,\sq) \triangleq \{j \in \bbracket{n-1}: \weight(\flip(\bx,j)) \in\{m+\sq-1,m+\sq\}\}.
		\end{equation}
		}Then $\bx$ is {\em bad} if $T_B(\bx,\sq)=T_B(\overline{\bx},\sq)=\varnothing$.
		Otherwise, $\bx$ is {\em Type-$1$-good} if $T_B(\bx,\sq)\ne\varnothing$;
		and $\bx$ is {\em Type-$0$-good} if $T_B(\bx,\sq)=\varnothing$ and $T_B(\overline{\bx},\sq)\ne\varnothing$.
	\end{definition}
	
	Using similar methods, we obtain the analogue of Lemma~\ref{lem:bad}.
	
	\begin{lemma}\label{lem:badB}
		Suppose that $\bx$ is bad.
		If $\bx'=\bx^{[n-2\sq-3]}$,  then either $\bx'0^{2\sq-2}$ or $\bx'1^{2\sq-2}$ is Type-$1$-good. We say that $\bx$ is Type-$i$-bad if $\bx'i^{2\sq-1}$ is Type-$1$-good. 
	\end{lemma}
	
	We now formally describe Scheme~B.	
	\vspace{2mm}

	\noindent{\bf Scheme B}: An $(n,n-1,\rho_B;\sq)$-balancing scheme\\
	\hspace*{5mm} {\sc Input}: $\bx\in\{0,1\}^{n-1}$\\[1mm]
	\hspace*{1mm} {\sc Output}: $\bc'\in\B(n,\sq)$, $\bp \in \{0,1\}^*$\\[-5mm]
	\begin{enumerate}[(I)]
		\item Determine if $\bx$ is Type-$i$-good or Type-$i$-bad.
		\begin{itemize}
			\item If $\bx$ is Type-1-good, set $\hat{\bx}=\bx$.
			\item If $\bx$ is Type-0-good, set $\hat{\bx}=\overline{\bx}$.
			\item If $\bx$ is Type-1-bad, set $\hat{\bx}=\bx^{[n-2q-3]}1^{2q-2}$.
			\item If $\bx$ is Type-0-bad, set $\hat{\bx}=\bx^{[n-2q-3]}0^{2q-2}$.
		\end{itemize}
		Hence, we have that $T_B(\hat{\bx},\sq)\ne \varnothing$.
		\item Determine the $\sq$-balanced word $\bc'$.
		\begin{itemize}
			\item $\tau \gets \min T_B(\hat{\bx},\sq)$
			\item $\bc \gets \flip(\hat{\bx},j)$
			\item If $\weight(\bc)=m+\sq-1$, we append 1 to $\bc$ to obtain $\bc'$.
			\item If $\weight(\bc)=m+\sq$, we append 0 to $\bc$ to obtain $\bc'$.
		\end{itemize}
		\item Determine the prefix $\bp$.	
		\begin{itemize}
			\item Compute $\Gamma_\sq^{(B)}(\bc)$ using Proposition~\ref{prop:GammaB}
			\item $\bz \gets$ length-$r$ binary representation of the index of $\tau$ in $\Gamma^{(B)}_\sq(\bc)$. Here, $r=\log|\Gamma^{(B)}_\sq(\bc)|$.
			\item If $\bx$ is Type-$i$-good, we set $\bp\gets 0i\bz$.
			\item If $\bx$ is Type-$i$-bad,  we set $\bp\gets 1i\bz\bx_{[2q-2]}$.
		\end{itemize}
	\end{enumerate}

	As before, to estimate the average redundancy $\rho_B$, we determine the number of bad words and the average redundancy of good words. Specifically, let $\bx$ be a message and set $\bz(\bx)$ and $\bp(\bx)$ to be the resulting index representation and prefix, respectively. 
	If we denote the number of bad words in $\{0,1\}^{n-1}$ by $D_B(n-1,\sq)$, we have that
	%\vspace{-2mm}
	%{ 
	\begin{equation}
		\rho_B\le 3 + \frac{1}{2^{n-1}}\left(\sum_{\bx \text{ is good}}|\bz(\bx)|\right)
		+ \frac{D_B(n-1,\sq)}{2^{n-1}}(2\sq-2+\log_2 n).\label{eq:schemeB}
	\end{equation}
	%}

	To determine $D_B(n-1,\sq)$, we use the following characterization.
	\begin{lemma}\label{lem:badpathsB}
	Fix $\sq>0$ and let $\bx\in\{0,1\}^{n-1}$.
	Then  $\bx$ is bad if and only if $\Pi(\bx)$ is strictly between $Y = X + \sq-1$ and $Y = X - \sq+1$. 
	\end{lemma}

	Then we have the following corollary.
	
	\begin{corollary}\label{cor:badwordsB}
	For fixed $\sq$, $2D_B(n-1,\sq)\le D(n,\sq)=o(2^n)$.
	\end{corollary}

	\begin{proof}
	Let $\bx$ be a bad word of length $n-1$. Then $\Pi(\bx)$ is a lattice path bounded strictly between $Y = X + \sq-1$ and $Y = X - \sq+1$. Now, append either 0 or 1 to $\bx$ to obtain $\bx'$. Clearly, $\Pi(\bx')$ is now bounded strictly between $Y = X + \sq$ and $Y = X - \sq$ and so, $\bx'$ is bad. Hence, 	$2D_B(n-1,\sq)\le D(n,\sq)$ as desired.
	\end{proof}

	Next, we estimate $\sum_{\bx \text{ is good}}|\bz(\bx)|$. 
	To do so, we set $\tau_\sq(\bx)\triangleq \min {T_B(\bx,\sq)}$ and
%	{\small 
	\begin{equation*}
		\Gamma_\sq^{(B)}(\bc) \triangleq \{j \in \bbracket{n-1}: \flip(\bx,j)=\bc\, ,\, \tau_\sq(\bx)=j \text{ for } \bx \in \{0,1\}^{n-1}\}\,.
	\end{equation*}
%	}
As before, we characterize $\Gamma_\sq^{(B)}(\bc)$ using its running sum.
	
	\begin{proposition}\label{prop:GammaB}
		Suppose that $\weight(\bc)\in\{m+\sq-1,m+\sq\}$.
%		\begin{equation*}
%			\Gamma_\sq(\bc) = 
%			\begin{cases}
%				\{i \in \bbracket{n-1}: R(\bc)_i \geq 0,~R(\bc)_i \neq R(\bc)_j \text{ for } j < i\}, \\ \hspace{40mm}\text{ if } \weight(\bc) = n/2+\sq -1,\\
%				\{i \in \bbracket{n-1}: R(\bc)_i \leq 0,~R(\bc)_i \neq R(\bc)_j \text{ for } j < i\}, \\ \hspace{40mm}\text{ if } \weight(\bc) = n/2+\sq .
%			\end{cases}
%		\end{equation*}
	\begin{equation*}
		\Gamma^{(B)}_\sq(\bc) = 
		\begin{cases}
			\{i \in \bbracket{n-1}: R(\bc)_i \geq 0,~R(\bc)_i \neq R(\bc)_j \text{ for } j < i\}, &\text{ if } \weight(\bc) = n/2+\sq -1,\\
			\{i \in \bbracket{n-1}: R(\bc)_i \leq 0,~R(\bc)_i \neq R(\bc)_j \text{ for } j < i\}, &\text{ if } \weight(\bc) = n/2+\sq .
		\end{cases}
	\end{equation*}
	\end{proposition}
	
	\begin{proof}
	We consider the case where $\weight(\bc)=m+\sq -1$. The proof for $\weight(\bc)=m+\sq$ is similar.
	Proceeding as in the proof of Proposition~\ref{prop:Gamma}, we have that $i\notin \Gamma_\sq^{(B)}(\bc)$ if there exists $j<i$ such that $R(\bc)_i=R(\bc)_j$. 
	
	Next, we claim that if $R(\bc)_i<0$, then $i\notin \Gamma^{(B)}_\sq(\bc)$.
	Now, let $\bx = \flip(\bc,i)$. Since $R(\bc)_i<0$, we have that $c_i=0$ and so, $x_i=1$. Then $\flip(\bx,i-1)$ has weight $m+\sq$ and so, the balancing index for $\bx$ is at most $i-1$. Therefore, $i\notin \Gamma_\sq^{(B)}(\bc)$.
	\end{proof}

	As before, for $1\le i\le n$, we consider the following set of length-$(n-1)$ words:
	\begin{equation}
		\cE^{(B)}_\sq(i,n-1) = \{\weight(\bc) \in \{m+\sq-1,m+\sq\}: |\Gamma^{(B)}_\sq(\bc)|=i\}
	\end{equation}
	and determine its size $\gamma_\sq^{(B)}(i,n-1)\triangleq |\cE^{(B)}_\sq(i,n-1)|$.
	
	\begin{lemma}\label{lem:pathwidthB}
	Fix $\sq$ and $1\le i \le n$. 
	Then $\bx\in\cE^{(B)}_\sq(i,n-1)$ if and only if 
	\begin{enumerate}[(i)]
	\item the path $\Pi(\bx)$ starts from $(0,\sq-1)$ and is above and touching $Y=X+\sq-i$, or,
	\item the path $\Pi(\bx)$ starts from $(0,\sq)$ and is below and touching $Y=X+\sq+i-1$.
	\end{enumerate}
	\end{lemma}
	
	Then using principles of inclusion and exclusion with Theorem~\ref{thm:lattice}, we have a surprisingly clean expression for $\gamma_\sq(i,n-1)$.
	
	\begin{theorem}\label{thm:GammaB}
	Fix $\sq>0$. Then % and $\sq\le i\le n$. Then
	%{ 
	\begin{align*}
		\gamma_\sq^{(B)}(i,n-1)%\\
		= 
		\begin{cases}
			 \frac{2i-2\sq}{n}  \binom{n}{m+i-q},
			 & \text{ if }1\le i\le 2\sq-1,\\
			 \frac{2i-2\sq}{n}  \binom{n}{m+i-q}+\frac{2i+2\sq}{n}  \binom{n}{m+i+q},
			 & \text{ if }2\sq\le i\le m-\sq,\\
			 \frac{2i+2\sq}{n}  \binom{n}{m+i+q},
			 & \text{ if }m-\sq+1\le i\le m+\sq.\\ 
		\end{cases} %\label{eq:gammaqB}
	\end{align*}
	%}
	When $\sq=0$, we have 
	\[\gamma_\sq^{(B)}(i,n-1)=\frac{4i}{n}\binom{n}{m+i}\,.\]
	\end{theorem}

	\begin{proof}
	Let $\bc \in \cE_\sq(i,n-1)$. First, suppose that $\bc$ has weight $m+\sq-1$. 
	Then Lemma~\ref{lem:pathwidthB} states that $\Pi(\bc)$ is above and touching the line $Y=X+\sq-i$. 
	By shifting all the paths down by $\sq-i$ units, we obtain the following equivalence.
	The number of such paths is the same as the number of simple paths that start from $(0,i-1)$, end at $(m+\sq-1,m-\sq+i+1)$ and are above and touching the line $Y=X$.
	
	Proceeding as in the proof of Theorem~\ref{thm:gammaq}, we apply the principle of inclusion and exclusion with \eqref{eq:lattice-bdd2}, and compute the number of such paths to be $\left(\binom{2m-1}{m-\sq}-\binom{2m-1}{m+i-\sq}\right)- \left(\binom{2m-1}{m-\sq}-\binom{2m-1}{m+i-1-\sq}\right)$, which is
	{\small \begin{equation*}
	\binom{2m-1}{m+i-1-\sq}-\binom{2m-1}{m+i-\sq}\\
	= \frac{2i-2\sq}{2m}  \binom{2m}{m+i-\sq}\,.
	\end{equation*}
	}Since we require both $(0,i-1)$ and $(m+\sq-1,m-\sq+i+1)$ to be above the line $Y=X$, we have that $\max(1,2\sq)\le i\le m+\sq$.

	Now, when $\bc$ has weight $m+\sq-1$, Lemma~\ref{lem:pathwidthB} states that $\Pi(\bc)$ is below and touching the line $Y=X+\sq+i-1$. Proceeding as before, the number of such paths is also $\frac{2i+2\sq}{2m}  \binom{2m}{m+i+\sq}$.
	Therefore, by summing these two quantities, we obtain the result.
	\end{proof}
	
	Therefore, we have that
	{
	\begin{align}
		\sum_{\bx \text{ is good}}|\bz(\bx)| & = \sum_{i=\max(1,2\sq)}^{m+q}  \frac{2i-2\sq}{2m}\binom{2m}{m+i-q} (i \log{i}) %\notag\\
		%& \hspace{2mm}+  \sum_{i=\max(1,-2\sq)}^{m-q}  \frac{2i+2\sq}{2m}\binom{2m}{m+i+q}  (i \log{i})\,.\label{eq:goodwordsB}
		 \sum_{i=\max(1,-2\sq)}^{m-q}  \frac{2i+2\sq}{2m}\binom{2m}{m+i+q}  (i \log{i})\,.\label{eq:goodwordsB}
	\end{align}	
	}In what follows, we make use of the following result to obtain an asymptotically sharp estimate of \eqref{eq:goodwordsB}.
	
	\begin{theorem}[{\cite[Theorem 4.9]{sedgewick2013introduction}}]\label{thm:catalan}
	Let $F(x)$ be a polynomially bounded function, that is, $F(x)=O(x^d)$ for some integer $d>0$. Then
\begin{equation*}
\sum_{k=1}^{m} F(k)\frac{\binom{2m}{m-k}}{\binom{2m}{m}}\sim 
\int_{x=0}^\infty e^{-x^2/m} F(x) dx\,.
\end{equation*}
	\end{theorem}

	We present the main result of this section.
	
	\begin{theorem}\label{thm:goodwordsB}
	Fix $\sq$. Then
	\begin{equation}\label{eq:beta}
		\frac{1}{2^{n-1}}\sum_{\bx \text{ is good}}|\bz(\bx)| \lesssim \frac12 \log n + \beta \,,
	\end{equation}
	where $\beta = \frac{2-\ln 4 -\gamma}{\ln 4}-\frac 12 \approx -0.474\ldots$ and $\gamma$ is the Euler-Mascheroni constant.
	\end{theorem}

	\begin{proof}
	Let $Z_1$ be the first summand of \eqref{eq:goodwordsB} and we estimate this quantity.
	First, we set $k=i-\sq$ on the right-hand side  and extend the range of $k$ to obtain
	\begin{align}
	%	&\frac{1}{2^{n-1}}\sum_{\bx \text{ is good}}|\bz(\bx)| \notag \\
	Z_1	&\le  \frac{1}{2^{2m-1}} \sum_{k=0}^{m} \frac{2k(k+\sq)\log (k+\sq)}{2m}\binom{2m}{m+k} \notag \\ 
		& = \frac{2}{m}\frac{\binom{2m}{m}}{2^{2m} } \sum_{k=1}^{m} k(k+\sq)\log (k+\sq)\frac{\binom{2m}{m+k}}{\binom{2m}{m}}\notag \\
		& = \frac{2}{m}\frac{\binom{2m}{m}}{2^{2m} } \sum_{k=1}^{m} (k+\sq)^2\log (k+\sq)\frac{\binom{2m}{m+k}}{\binom{2m}{m}} 
		\,\label{rhs}.
	\end{align}
	
	Next, we have Stirling's estimate of the central binomial~\cite{sedgewick2013introduction}:  
	\begin{equation*}
	\frac{\binom{2m}{m}}{2^{2m}}\sim \frac{1}{\sqrt{\pi m}}\, .
	\end{equation*}

	Then setting $F(x)=(x+\sq)^2\log (x+\sq)$ in Theorem~\ref{thm:catalan}, we have that 
	\begin{align*}
	\sum_{k=1}^{m} (k+\sq)^2\log (k+2+\sq)\frac{\binom{2m}{m+k}}{\binom{2m}{m}}
	\sim 
	\int_{x=0}^\infty e^{-x^2/m} (x+\sq)^2\log (x+\sq) dx.
	\end{align*}
	
	We do a change in the variable and set $z = x/\sqrt{m}$ to obtain the expression 
	$\int_{z=0}^\infty e^{-z^2} (\sqrt{m}z+\sq)^2\log (\sqrt{m}z+\sq) (\sqrt{m} dz)$.
	Since $(\sqrt{m}z+\sq)^2\sim mz^2$ and $\log (\sqrt{m}z+\sq)\sim \log(\sqrt{m}z)$, we have  
	\[Z_1\lesssim m^{3/2} \int_{z=0}^\infty e^{-z^2} z^2 \log z dz + \frac{m^{3/2}\log m}{2} \int_{z=0}^\infty e^{-z^2} z^2 dz\,. \]
	Similarly, for the second summand in~\eqref{eq:goodwordsB}, we obtain the exact same expression.

	Now, we have $\int_{z=0}^\infty e^{-z^2} z^2 \log z dz = \frac{\sqrt{\pi}(2-\ln 4 -\gamma)}{4 \ln 4}$ and 
	$\int_{z=0}^\infty e^{-z^2} z^2 dz=\frac{\sqrt{\pi}}{4}$.
	So, combining everything, we have that the right-hand side of \eqref{rhs} tends to
	\begin{align*}
		\left(\frac{2}{m}\right)\left(\frac{1}{\sqrt{\pi m}}\right)m^{3/2}\left(\frac{\sqrt{\pi}(2-\ln 4 -\gamma)}{4\ln 4} + \frac{\sqrt{\pi}}{4}\left(\frac12\log m\right)\right)
		&= \Bigg(\frac12\log m +\frac{2-\ln 4 -\gamma}{\ln 4}\Bigg)/2 \\
		&= \Bigg(\frac12\log n+\left(\frac{2-\ln 4 -\gamma}{\ln 4}-\frac 12\right)\Bigg)/2 \\
		&\approx \Big(\frac12\log n - 0.474\ldots\Big)/2.
	\end{align*}
	Hence, by combining two summands, we obtain \eqref{eq:beta} as required.
	\end{proof}
	
	Together with~\eqref{eq:schemeB}, we obtain the desired upper bound for $\rho_B$ and obtain Theorem~\ref{thm:main}.
	Moreover, when $\sq=0$, we have that all words are good. In this case, we need not prepend $\bz$ with the two bits and thus, the average redundancy is simply given by $1+\frac{1}{2}\log {n}+ \beta\approx \frac{1}{2}\log {n}+0.526$.
	
\section{Balancing Schemes with Error-Correcting Capabilities}\label{sec:ecc}

%Story
%\begin{itemize}
%	\item $\sq=0$
%	\item Previous work.
%	\item Talk about Chee \etal.
%\end{itemize}

In this section, we set $\sq=0$ and consider variable-length balancing schemes that have certain error-correcting capabilities. Specifically, we propose Scheme~C and study methods to determine its average redundancy.

Now, the main ingredient of Scheme~C is what we termed as {\em  cyclic-balancing technique}, proposed by Chee \etal{} in \cite[Sec III-A]{chee2020}.
Here we provide a high-level description.
For a vector $\bx \in \{0,1\}^{n-1}$, let $\shift(\bx,i)$ be the vector obtained by cyclically shifting the components of $\bx$ to the right $i$ times. 
Similarly, $\shift(\bx,-i)$ is the vector obtained by cyclically shifts to the left $i$ times. 
To preserve certain distance properties, we always flip the first ``half'' of a message $\bx$, in other words, the first $m$ bits.
So, for simplicity, we denote $\flip(\bx,m)$ as $\flip_c(\bx)$.
Clearly, $\flip_c(\bx)$ may not have weight in $\{m-1,m\}$. 
Nevertheless, a key observation in~\cite{chee2020} is the following:
 if we consider all cyclic shifts of $\bx$, i.e., $\shift(\bx,i)$ for $i \in \bbracket{n}$, 
 then flipping the first $m$ bits of one of these shifts must yield a balanced word.
 In other words, $\flip_c(\shift(\bx,i))$ has weight in  $\{m-1,m\}$ for some $i\in\bbracket{n}$.
%\todo{Mention that because we always flip first half of the bits, we necessarily preserve the Hamming distance.} 
%\todo{cite Kumar Bhoi Singh. 
%A study of primer design with w-constacyclic shift over F4, Kumar, N., Bhoi, S. S., Singh, A. K. (2023). A study of primer design with w-constacyclic shift over F4. Theoretical Computer Science, 113925.}

With this observation, we can proceed in a similar fashion as Schemes~A and~B to design a variable-length balancing scheme.
Let $\C$ be an $[n-1,k,d]_2$-cyclic code.
In contrast with previous sections, the set of messages in this section belongs to $\C$ and we introduce the following notation:
\begin{align*}
	%\B_\C(n) \triangleq &\{\bc \in \mathbb{F}_2^{n-1}: \weight(\bc) \in \{m-1,m\} \text{, and} \\
	%&\flip_c(\shift(\bx,j))=\bc\, ,\, \text{ for } \bx \in \C, \text{ and }  j \in \bbracket{n-1}\}\,\\
	T_\C(\bx) &= \Big\{\, j \in \bbracket{n-1}: \flip_c(\shift(\bx,j)) \text{ has weight in } \{m-1,m\}\, \Big\} \text{ for } \bx \in \C\,.%\\
	%\tau_\C(\bx) &= \min T_\C(\bx)
\end{align*}

%\todo{define an $(n,\C,\rho_C)$-cyclic-balancing scheme. When $\C=\{0,1\}^n$, we obtain an $(n,n-1,\rho_C)$-balancing scheme.}

\noindent{\bf Scheme C}\\%: An $(n,\C;\rho_C)$-cyclic-balancing scheme\\
\hspace*{5mm} {\sc Input}: $\bx\in\C$\\[1mm]
\hspace*{1mm} {\sc Output}: $\bc' \in\B(n)$, $\bp \in \{0,1\}^*$\\[-5mm]
\begin{enumerate}[(I)]
	\item Determine the balanced word $\bc'$.
	\begin{itemize}
		\item $\tau \gets \min T_\C(\bx)$
		\item $\bc \gets \flip_c(\shift(\bx,\tau))$
		\item If $\weight(\bc)=m-1$, we append 1 to $\bc$ to obtain $\bc'$.
		\item If $\weight(\bc)=m$, we append 0 to $\bc$ to obtain $\bc'$.
	\end{itemize}
	\item Determine the prefix $\bp$.	
	\begin{itemize}
		\item Compute $\Gamma_\C(\bc)$ using Proposition~\ref{prop:GammaC}.
		\item $\bz \gets$ length-$r$ binary representation of the index of $\tau$ in $\Gamma_\C(\bc)$. Here, $r=\log|\Gamma_\C(\bc)|$.
	\end{itemize}
\end{enumerate}

%\todo{Say something about the rest of the section.}
The following result is similar to Theorem~\ref{thm:chee} from~\cite{chee2020}.
Here, we provide a proof for completeness.

\begin{proposition}\label{prop:cyclic}
Let $\C$ be an $[n-1,k,d]$-cyclic code.
Let $\C^*$ be the collection of all distinct codewords obtained from applying Scheme~C on words in $\C$. 
In other words, if $\enc(\bx)\triangleq (\bc',\bp)$ where $(\bc',\bp)$ is output of Scheme~C with input $\bx$, then we set 
\[\C^*\triangleq \{ \bc' : \enc(\bx)=(\bc',\bp) \text{~for some~}\bx\in\C \}\,.\] 
Then $\C^*$ is a $(n,d')$-balanced code where $d'=2\ceiling{d/2}$.
\end{proposition}

\begin{proof}
Since weight of $\bc'$ is $m$ after Step (I), we have that $\C^*$ is balanced. 

Hence, it remains to show the minimum distance of $\C^*$ is at least $d'$. 
Now, let $\bm$ denote the binary word with $m$ ones followed by $m-1$ zeroes.
Since $\C$ is cyclic, we have that $\shift(\bx,\tau)$ belongs to $\C$ and so, $\bc$ belongs to $\C+\bm$. In other words, the word $\bc$ in belongs to the coset $\C+\bm$.
Since all distinct words in $\C+\bm$ has minimum distance at least $d$, 
and all words in $\C^*$ has the same weight, we have that distinct words in $\C^*$ have distance at least $d'$.
\end{proof}

Note that $\enc$ is not an injective function in the statement of Proposition~\ref{prop:cyclic}. 
That is, it is possible for $\enc(\bx)=\enc(\bx')$ for distinct words $\bx,\bx'\in\C$.
Nevertheless, we can use the prefixes $\bp,\bp'$ for decoding purposes.
We provide an example in the next subsection.

A natural question is to determine the exact size of $\C^*$.
We describe a procedure to do so in Section~\ref{sec:rhoC-general}. 
Furthermore, Sections~\ref{sec:rhoC} to~\ref{sec:rhoC-general} are dedicated to the computation of the average redundancy of Scheme~C.

\subsection{Example with Error-Correction}

Fix $n=8$ and we consider the cyclic simplex $[7,3,4]$-code $\C$ that is generated by the polynomial $1+X^2+X^3+X^4$.
Here, we consider a systematic encoder, that is, a message $\bm$ is encoded to a simplex codeword $\bx\in\C$ by appending bits to $\bm$. For example, when $\bm=101$, the simplex encoder maps the message to $\bx=1011100$. Next, we apply Scheme~C onto $\bx$.
In Step (I), we find that the shift index $\tau$ is $1$ because $\bc=\flip_c(1011100,1)=1010 110$ has weight four.
So, here, we append 0 to $\bc$ to obtain the balanced word $\bc'$.
In Step (II), we determine $\Gamma_\C(\bc)$ to be $\{0,1,2,3\}$. Since we have four shift indices associated with $\bc$, we use a two-bit prefix $\bp=01$ to represent the shift index $\tau=1$.

In the table below, we write down the detailed steps for the encoding for all possible messages in $\{0,1\}^3$.
Note that here, we have that 
$\C^* = \{ 1111 0000,~1010 1100,~0110 0110,~00111010  \}$
and one can verify that $\C^*$ has distance four.

\begin{center}
\begin{tabular}{|C{20mm}|C{20mm}|C{12mm}|C{20mm}|c|c|}
\hline
message $\bm$ &
simplex codeword $\bx$ & 
shift index $\tau$ & 
balanced codeword $\bc'$ & 
$|\Gamma_\C(\bc)|$  & 
prefix $\bp$\\
\hline
000	& 0000000 &0  & 1111000{\bf 0} & 1 & -\\
101	& 1011100 &1  & 1010110{\bf 0} & 4 & 01\\
010	& 0101110 & 0 & 1010110{\bf 0} & 4 & 00\\
001	& 0010111 & 1 & 0110011{\bf 0} & 2 & 1 \\
100	& 1001011 & 0 & 0110011{\bf 0} & 2 & 0\\
110	& 1100101 & 0 & 0011101{\bf 0} & 1 & -\\
111	& 1110010 & 3 & 1010110{\bf 0} & 4 & 11\\ 
011	& 0111001 & 2 & 1010110{\bf 0} & 4 & 10\\
\hline
\end{tabular}
\end{center}

\noindent \textbf{Encoding}: 
To illustrate the error-correcting capability of this balancing scheme, 
we detail the steps to encode a data stream $000 101 1\ldots$ into balanced blocks of length $n=8$.
\begin{itemize}
	\item We pick the first three bits as the first packet: $000$. 
	We encode to simplex codeword $\bx=0000 000$ and Scheme C sets $\bc'=1111 0000$ with $\bp$ being the empty string.
	\item For the second packet, we pick the next three bits: $101$. 
	Repeating the procedure, we have $\bx=1011 100$ with $\bc'=1010 1100$ and $\bp=(01)$. 
	In this case, we prepend the next packet with $(01)$. Here, we place parenthesis to emphasize that the bits $(01)$ are redundant.
	\item In other words, the third packet is $(01)1$. Then we have $\bx=0111 001 $ with $\bc'= 1010 1100$ and $\bp=(10)$. Again, $(10)$ will be prepended to the fourth packet.
\end{itemize}

Therefore, the transmitted codewords are: $1111 0000$, $1010 1100$, $1010 1100$.

\noindent \textbf{Decoding:}
Observe that the transmitted codewords belong the extended simplex code which is a $[8,3,4]_2$-cyclic code.
Therefore, assuming that each block has at most one error, we are able to correct the error and so the balanced codewords are received correctly.

We invert the encoding process and attempt to recover the data stream. 
Here, we assume that we have decoded the fourth codeword and obtained the fourth data packet.
\begin{itemize}
	\item The third codeword is $\bc'=1010 1100$ or $\bc=1010 110$. 
	Since $\Gamma_\C(\bc)$ has four indices, we pick the first two bits of the fourth packet which is $(10)$. 
	As $(10)$ corresponds to shift index two, we set $\bx=\shift(\flip_c(\bc),-2)=0111001$. Hence, the third packet is $101$.
	\item The second codeword is $\bc'=1010110$. Hence, $\bc=1010 110$ and $|\Gamma_\C(\bc)|=4$. 
	So, we pick the first two bits of the third packet $(01)$.
	Then $\bx=\shift(\flip_c(\bc),-1)=101 1100$ and the second packet is $101$.
	\item Finally, the first codeword is $\bc=1111 0000$ and so, $\bc=1111 0000$ and $|\Gamma_\C(\bc)|=1$.
	Therefore, the shift index is zero and $\bx=0000 000$ and  the first packet $000$.
	\item Hence, the packets are $000$, $101$, $(01)1$, \ldots. Recall that when we decoded the second codeword, we figured out the corresponding prefix is $(01)$. Therefore, we remove the prefix and so, the data stream is $000 101 1\ldots $.
\end{itemize}

\subsection{Average Redundancy Analysis}\label{sec:rhoC}

In this subsection, we determine the average redundancy $\rho_\C$.
As with previous sections, we consider the set of balancing indices that can be received with a codeword $\bc$.
Specifically, we consider the following set of indices:
\begin{equation*}
\Gamma_\C(\bc) \triangleq \Big\{j \in \bbracket{n-1}: \flip_c(\shift(\bx,j))=\bc\, ,\, \min T_\C(\bx)=j \text{ for some } \bx \in \C \Big\}\,.
\end{equation*}

Then we have two tasks. First, we need a simple characterization of $\Gamma_\C(\bc)$ and we do so in Proposition~\ref{prop:GammaC}.
Unlike previous schemes, as our messages are obtained from a cyclic code $\C$, we have an additional task to determine the possible received words $\bc$. 
Specifically, we need to determine 
\begin{equation*}
	\B(\C) \triangleq \Big\{\flip_c(\bx) : \bx\in\C \text{ and } \flip_c(\bx) \text{ has weight in } \{m-1,m\} \Big\}\,.
\end{equation*}

Then proceeding as before, we define $\gamma_\C(i,n-1)\triangleq |\{\bc\in \B(\C) : |\Gamma_\C(\bc )| = i \}|$ and 
we have that the average redundancy to be
\begin{equation*}
	\rho_\C = 1+\frac{1}{|\C|}\sum_{i=1}^m i \gamma_\C(i,n-1) \log i .
\end{equation*}

Observe that the domain of $i$ is between 1 and $m$. This is valid because the indices from $\Gamma_\C(\bc)$ belong to $\bbracket{m-1}$. This follows from a result in Chee \etal{}~\cite{chee2020}. 
For completeness, we restate the lemma and provide a simple proof.

\begin{lemma}\label{lem:maxTC}
	Let $\bx\in\C$ and set $\tau = \min T_\C(\bx)$. Then $\tau\leq m-1$.
\end{lemma}
\begin{proof}
Suppose otherwise that $\tau\ge m$.
We claim that $\tau-m$ belongs to $T_\C(\bx)$.
Let $\bc=\flip_c(\shift(\bx,\tau))$ and so, $\weight(\bc)\in\{m-1,m\}$.
Then $\flip_c(\shift(\bx,\tau-m))=\flip(\bc,n)$ has weight in $\{m-1,m\}$.
Therefore, $\tau-m$ belongs to $\min \{T_\C(\bx)\}$ but $\tau-m<\tau$, contradicting the minimality of $\tau$.
\end{proof}

For our first task of characterizing $\Gamma_\C(\bc)$, we introduce the notion of cyclic running sum.

\begin{definition}
	Let $\bc=c_1c_2\ldots c_{n-1} \in \{0,1\}^{n-1}$. The \textit{cyclic running sum} of $\bc$, denoted by $CR(\bc)$, is a length-$m$ integer-valued vector indexed by $\bbracket{m-1}$ defined by 
	\begin{equation*}
		CR(\bc)_i = 
		\begin{cases}
			0, &\text{if } i = 0, \\
			CR(\bc)_{i-1}+ (-1)^{c_i+1}+(-1)^{c_{i+m}+1}, &\text{if } i > 0.
		\end{cases}
	\end{equation*}
\end{definition}

Then we obtain the following characterization.
\begin{proposition}\label{prop:GammaC}
	Let $\bc$ belong to $\B(\C)$.
	\begin{equation}
		\Gamma_\C(\bc) = \{0\} \cup\{ i \in\{1,\ldots, m-1\}: CR(\bc,j) \ne 0 \text{ for all } 1\le j \le i \} . \label{eq:GammaC}
	\end{equation}
\end{proposition}

\begin{proof} Let $i$ belong to the set on the RHS of \eqref{eq:GammaC}.
	Set $\bx=\shift(\flip_c(\bc),-i)$ and we claim that $i=\min T_\C(\bx)$.
	Clearly, $i\in T_\C(\bx)$ and so, it remains to show that $i$ is the smallest index.
	
	For $j<i$, we consider the weight of the word $\flip_c(\shift(\bx,j))$.
	Observe that $\flip_c(\shift(\bx,j))=\flip_c(\shift(\flip_c(\bc),-t))$ where $t=i-j$.
	Since $\shift(\flip_c(\bc),-t)= \overline{c_{t+1}\cdots c_{m}}c_{m+1}\cdots c_{2m-1}\overline{c_1\cdots c_{t}}$,
	we have that \[\flip_c(\shift(\flip_c(\bc),-t))= c_{t+1}\cdots c_{m}\overline{c_{m+1}\cdots c_{t+m}}c_{t+m+1}\cdots c_{2m-1}\overline{c_1\cdots c_t}.\] 
	Therefore, $\weight(\flip_c(\shift(\flip_c(\bc),-t))) = \sum_{i=1}^{2m-1} c_i - 2CR(\bc)_t=\weight(\bc)- 2CR(\bc)_t$.
	Since $0<t<i$, we have that $2CR(\bc)_t\ne 0$ and so,  $\weight(\flip_c(\shift(\flip_c(\bc),-t)))\notin\{m-1,m\}$.
	In other words, $i=\min T_\C(\bx)$.
\end{proof}

In the next two subsections, we proceed to our next task of determining $\gamma_\C(i,n-1)$.
Unfortunately, in this case, we are unable to obtain a closed formula for this quantity for general cyclic code $\C$.
Nevertheless, in the case when $\C=\{0,1\}^{n-1}$, that is, $\C$ comprise all binary words of length $n-1$, we have a closed formula.
On the other hand, for a general cyclic code $\C$, we provide a method that computes $\gamma_\C(i,n-1)$ in polynomial time under some conditions.  
%\todo{Say something}

\subsection{Average Redundancy Analysis when $\C=\{0,1\}^{n-1}$}

In this subsection, we set  $\C=\{0,1\}^{n-1}$ and we prove the following theorem.

\begin{theorem}\label{thm:gammaC-all}
For $1\le i\le m$, we have that
\begin{equation*}
\gamma_\C(i,n) =  \frac{2}{i}\binom{2i-2}{i-1} \binom{2m-2i}{m-i}\,.
%\begin{cases}
%	\frac{4}{i}\binom{2i-2}{i-1} \binom{2m-1-2i}{m-i}, & \mbox{if $1\le i\le m-1$},\\
%	\frac{2}{i}\binom{2i-2}{i-1}, & \mbox{if $i = m$ }.
%\end{cases}
\end{equation*}
\end{theorem}

\begin{proof}
We first consider the case where $i<m$.
Suppose that $|\Gamma_\C(\bc)|=i$ and we set $CR(\bc)=(s_0,s_1,\ldots, s_{m-1})$. 
Then Proposition~\ref{prop:GammaC} states that $CR(\bc)$ satisfy the following properties:
\begin{enumerate}[(S1)]
	\item $s_0 = s_i = 0$.
	\item $s_j > 0$ for all $0<j<i$ or $s_j > 0$ for all $0<j<i$.
\end{enumerate}
Therefore, $\gamma_\C(i,n)$ is the number of binary length $n-1$ words whose cyclic running sums satisfy (S1) to (S3).

First, we observe that the values of $c_1,c_2,\ldots, c_i,c_{m+1},c_{m+2}\ldots, c_{m+i}$ completely determines the values of $s_1,s_2,\ldots, s_i$. So, our first claim is the following:
\begin{quote}
{\bf (Claim 1)}. The number of choices for $c_1,\ldots ,c_i,c_{m+1},\ldots, c_{m+i}$ that result in $s_1,\ldots, s_i$ satisfying (S1) and (S2) is $\frac{2}{i}\binom{2i-2}{i-1}$.
\end{quote}

To prove Claim~1, we use the lattice-path results in Section~\ref{sec:lattice}.
Specifically, we consider the length-$2i$ word $\bp=c_1c_{m+1}c_2c_{m+2}\ldots c_{i}c_{m+i}$ and the corresponding lattice path $\Pi(\bp)$. Since $s_i=0$, we have that $\weight(\bp)=i$ and $\Pi(\bp)$ starts from $(0,0)$ and ends at $(i,i)$. 

When $c_1 = 1$, we have that $s_j>0$ for all $0<j<i$. Hence, $\Pi(\bp)$ travels from $(0,0)$ to $(0,1)$, stays above the line $Y=X+1$ until it reaches $(i-1,i)$, and finally ends $(i,i)$. Applying Theorem~\ref{thm:lattice} with $\Delta_x=\Delta_y=i-1$ and $d-a=i-1$, we have that the number of such paths is $\frac1i\binom{2i-2}{i-1}$. When $c_1=0$, a similar argument applies that we have the number of such paths is also $\frac1i\binom{2i-2}{i-1}$.

Next, we determine the number of choices for $c_{i+1},\ldots, c_{m-1},c_m,c_{m+i+1},\ldots, c_{2m-1}$.
Here, the number of remaining bits is $2m-1-2i$. Among them, the number of ones is $m-i-1$ or $m-i$. Therefore, the number of choices is $\binom{2m-1-2i}{m-i} + \binom{2m-1-2i}{m-i-1}= \binom{2m-2i}{m-i}$.
Therefore, combining with Claim 1, we have the theorem.

When $i=m$, we append $c_{2m}$ so that the length-$2m$ word $\bc'$ has weight $m$.
Then we proceed as before and consider the lattice path $\Pi(\bc')$.
We can argue that the lattice path $\Pi(\bc')$ starts from $(0,0)$ and ends at $(m,m)$.
Furthermore, $\Pi(\bc')$ either always stays above $Y=X$ or below $Y=X$. In other words, the number of words/paths is exactly $\frac2i\binom{2i-2}{i-1}$, as required.
\end{proof}

\begin{remark}
In this proof, for the sake of consistency, we use lattice-path combinatorics to obtain Claim 1. 
Here, we sketch an alternative (and essentially, equivalent) derivation that makes use of the combinatorial object, Dyck words.
A {\em Dyck word} $\bx$ of length $2n$ is a binary word with weight $n$ where all prefixes of $\bx$ have more ones than zeroes.
It is well-known that the number of length-$2n$ Dyck word is given by the $n$-th Catalan number $C_n=\frac{1}{n+1}\binom{2n}{n}$
(see for example, \cite[Chapter 6]{sedgewick2013introduction}). 
We can then argue that $c_2c_{m+2}\cdots c_i$ is a Dyck word of length $2i-2$ and this recovers Claim 1.
\end{remark}

Therefore, we have the following expression for the average redundancy.
\begin{corollary}
$\rho_\C = 1+\frac{1}{2^{2m-1}}\sum_{i=1}^{m}  2 \binom{2i-2}{i-1}\binom{2m-2i}{m-i} \log i$.
\end{corollary}

The next proposition provides an asymptotic estimate for $\rho_\C$.

\begin{proposition}\label{prop:rhoC}
If $\C=\{0,1\}^{n-1}$, then $\rho_\C\sim \log n$.
\end{proposition}

\begin{proof}
First, since $\log i\le \log m$ for all $i$, we have that $\rho_\C\le 1+\log m = \log n$.

Next, we fix $0<\epsilon,\delta<1$ and show that 
\begin{equation}\label{eq:rhoC-lowbound}
\rho_\C \gtrsim	 1+ (1-\epsilon) \left(-2+ (\log n - 1)\left(1 - \frac{2\sin^{-1}\sqrt{\delta}}{\pi}\right)\right)
%\rho_\C \gtrsim 1+(1-\epsilon)( (\log n - 3) - 2\delta\log n )\,.
\end{equation}
In other words, we have that $\rho_\C \gtrsim \log n - 2\sim \log n$, as required.

To show \eqref{eq:rhoC-lowbound}, we use Stirling's approximation  $\binom{2N}{N} \sim \frac{2^{2N}}{\sqrt{\pi N}}$ (see for example, \cite{sedgewick2013introduction}).
In other words, for any $\epsilon > 0$, there exists $N_0$ such that 
\begin{equation}\label{stirling}
	\binom{2N}{N} \geq (1-\epsilon)\frac{2^{2N}}{\sqrt{\pi N}} \text{ for $N\ge N_0$} \,.
\end{equation}
Given $\delta>0$, we then choose $m_0$ such that $\delta m_0\ge N_0 + 1$. 
Hence, for all $m\ge m_0$ and $\delta m\le i\le (1-\delta) m$, we have that $i-1 \ge \delta m_0-1\ge N_0$, and so, 
$\binom{2i-2}{i-1} \geq (1-\epsilon)\frac{2^{2i-2}}{\sqrt{\pi (i-1)}}$\,. 
Similarly, $m-i\ge \delta m \ge \delta m_0 \ge N_0$ and so,
$\binom{2m-2i}{m-i} \geq (1-\epsilon)\frac{2^{2m-2i}}{\sqrt{\pi (m-i)}}$\,.
Therefore, we have that 
\[ \sum_{i=1}^m 2\log(i) \binom{2i-2}{i-1}\binom{2m-2i}{m-i} 
\ge \sum_{i=\delta m}^{(1-\delta) m} 2\log(i) \Big(1-\epsilon\Big) \frac{2^{2m-2}}{\pi\sqrt{ (i-1)(m-i)}}
=  \frac{2^{2m-1}(1-\epsilon)}{\pi} \sum_{i=\delta m}^{(1-\delta) m} \frac{\log i}{\sqrt{ (i-1)(m-i)}}\,.\] 

Now, we estimate the sum using an integral. That is,
\begin{align*}
\sum_{i=\delta m}^{(1-\delta) m} \frac{\log i}{\sqrt{ (i-1)(m-i)}} 
	& =\sum_{i=\delta m}^{(1-\delta)m} \frac{\log i}{m \sqrt{(i-1)/m (1 - i /m)}}\\
	&\sim \int_{\delta}^{1-\delta} \frac{\log(\lambda m)}{\sqrt{\lambda(1-\lambda)}} \,d\lambda \\
	&= \int_{\delta}^{1-\delta} \frac{\log m}{\sqrt{\lambda(1-\lambda)}} \,d\lambda +\int_{\delta}^{1-\delta} \frac{\log \lambda}{\sqrt{\lambda(1-\lambda)}} \,d\lambda\\
	&\ge \int_{\delta}^{1-\delta} \frac{\log m}{\sqrt{\lambda(1-\lambda)}} \,d\lambda +\int_{0}^{1} \frac{\log \lambda}{\sqrt{\lambda(1-\lambda)}} \,d\lambda\,.
%	&\ge \int_{0}^{1} \frac{\log(\lambda m)}{\sqrt{\lambda(1-\lambda)}} \,d\lambda - 2\delta \frac{\log m \binom{2m-2}{m-1}}{2^{2m-1}}    \\
%	& = \int_{0}^{1} \frac{\log(\lambda)}{\sqrt{\lambda(1-\lambda)}} \,d\lambda + \int_{0}^{1} \frac{\log(m)}{\sqrt{\lambda(1-\lambda)}} \,d\lambda- 2\delta\log m 
\end{align*}

Now, the last inequality follows from the fact that $\log \lambda$ is negative in the interval $(0,1)$ and we have that $\int_{0}^{1} \frac{\log \lambda}{\sqrt{\lambda(1-\lambda)}} \,d\lambda = -2\pi$.
Moreover, $\int_{\delta}^{1-\delta} \frac{\log m }{\sqrt{\lambda(1-\lambda)}} \,d\lambda= (\pi-2\sin^{-1}\sqrt{\delta})\log m$, we can now complete our estimate for $\rho_\C$.

\begin{align*}
	\rho_\C & \ge 1+\frac{1}{2^{2m-1}}\sum_{i=1}^{m} 2\log(i) \binom{2i-2}{i-1}\binom{2m-2i}{m-i}\\
	&\gtrsim 1+ \frac{2^{2m-1}(1-\epsilon)}{2^{2m-1}\pi} (-2\pi+(\pi-2\sin^{-1}\sqrt{\delta})\log m)\\
	&\ge 1+ (1-\epsilon) \left(-2+ (\log n - 1)\left(1 - \frac{2\sin^{-1}\sqrt{\delta}}{\pi}\right)\right)\,.
\end{align*}
This completes the proof of \eqref{eq:rhoC-lowbound}.
\end{proof}

\subsection{Average Redundancy Analysis for General Cyclic Codes $\C$}\label{sec:rhoC-general}
For a general $[n-1,k,d]$-cyclic $\C$, we propose a trellis that determines whether a word $\bc$ belongs to $\B(\C)$.
%sequential time-variant machine to verify whether a word $\bc$ belongs to $\B_\C$ or not. 
Moreover, it provides a polynomial-time method to compute the average redundancy. %$\rho_\C$ for any cyclic code $\C$. 
Let $\bc = c_1c_2 \ldots c_{n-1}$ be a length-$(n-1)$ string.
To determine if $\bc$ belongs to $\B(\C)$, we start from the root and traverse the trellis according the bits at the $i$-th and $(i+m)$-th positions, where  $1\leq i \leq m-1$.
Specifically, at the $i$th step, we look at the label $c_ic_{i+m}$ and transition accordingly. In the final step, we move according to $m$-th position, that is, $c_m$.

We provide a formal description of the trellis.
Let $\bH = [\bh_1, \bh_2, \ldots, \bh_n]$ the $(n-k)\times n$-parity check matrix of the cyclic code $\C$. Here, $\bh_i$ denotes the $i$-th column of $\bH$.
%
%Let $\bc = c_1c_2 \ldots c_{n-1}$ where $n-1=2m-1$ and $c_i \in \{0,1\}$. The machine starts with an empty word and generates all words $\bc \in \B_\C$ by adding 2 bits at $i$-th and $(i+m)$-th positions for $1\leq i \leq m-1$ sequentially. A final step is to add $0$ or $1$ at the $m$-th position. Let $H_\C = [h_1, h_2, \ldots, h_n]$ of size $k\times n$ be the parity check matrix of the code $\C$, where $h_i$ denotes the $i$-th column of $H_\C$.
%
Then each vertex or state $\bu$ of the trellis stores five values. 
Specifically, we describe the state $\bu \triangleq (\level_u, CR_\bu, \weight_\bu, \synd_\bu, \balind_\bu )$ as follow. Here, the five values correspond to the level, cyclic running sum, weight, syndrome, and number of balancing indexes, respectively.
To describe the transition rules, let $\bv = (\level_\bv, CR_\bv, \weight_\bv, \synd_\bv, \balind_\bv )$ be a state following $\bu$, or $\bu \ra \bv$, with the edge labelled by $c_ic_{i+m}$ or $c_m$.
\begin{itemize}
	\item The machine has $m+1$ levels. The $\level$ keeps track of the current level. Hence, $\level_\bv = \level_\bu + 1$. %Note that each transition from level $i-1$ to level $i$ corresponds to assigning values to $c_i$ and $c_{i+m}$.
	\item At level $0$, $CR$ is initialized to be $0$. For $1\le i\le m-1$, transition from level $i-1$ to $i$ increases $CR$ by $(-1)^{c_i+1}+ (-1)^{c_{i+m}+1}$. From level $m-1$ to $m$, the value $CR$ increases by $(-1)^{c_m+1}$.
	\item At level $0$, $\weight$ is initialized to be $0$. For $1\le i\le m-1$, transition from level $i-1$ to $i$ increases $\weight$ by $c_i+ c_{i+m}$. From level $m-1$ to $m$, the value $\weight$ increases by ${c_m}$.
	\item At level $0$, $\synd$ is initialized to be zero-vector $\boldsymbol{0}$ of length $n-k$. For $1\le i\le m-1$, transition from level $i-1$ to $i$ modifies $\synd$ by setting $\synd\gets \synd + (c_{i}+1)\bh_i + c_{i+m} \bh_{i+m}$. From level $m-1$ to $m$, we have $\synd\gets \synd + (c_{m}+1)\bh_m$.
	\item The $\balind$ keeps track of the number of balancing indices. Here, $\balind$ can be either take a integer value from $1$ to $m$ or the symbol $\mathord{?}$. The starting state has $\balind=0$. The transition rules are as follow. %Details of index transition are discussed in {\bf Index Transition}.
	\begin{itemize}
		\item If $\balind_\bu = 0$ and $c_1c_{m+1} \in \{00,11\}$, then $\balind_\bv = \mathord{?}$. If $\balind_\bu = 0$ and $c_1c_{m+1} \in \{10,01\}$, then $\balind_\bv = 1$.
		%\item If $\balind_\bu = \alpha$ is a positive integer, then  $\balind_\bv = \alpha$.
		\item If $\balind_\bu \ne \mathord{?}$, then  $\balind_\bv = \balind_\bv$.
		\item If $\balind_\bu = \mathord{?}$ and $CR_\bv \neq 0$, then  $\balind_\bv = \mathord{?}$. If $\balind_\bu = \mathord{?}$ and $CR_\bv = 0$, then  $\balind_\bv = \level_\bv$.
	\end{itemize}
\end{itemize}

\begin{comment}
 Each state at level $0\le i \le m-2$ has four possible outputs, labeled by $c_ic_{i+m} \in \{00,01,10,11\}$. We note the difference in transition from $(m-1)$-th level to the last level. Each state at level $m-1$ has only two possible outputs, labeled by $0$ or $1$. If $\bu$ moves to $\bv$ along the $0$-edge, $\level_\bv = m$, $CR_\bv = CR_\bu-2$, $\weight_\bv= \weight_\bu$, $\synd_\bv= \synd_\bu+h_m$, and $\balind_\bv$ follows the {\bf Index Transition}. Similarly, if $\bu$ moves to $\bv$ along the $1$-edge, $\level_\bv = m$, $CR_\bv = CR_\bu+2$, $\weight_\bv= \weight_\bu+1$, $\synd_\bv= \synd_\bu$, and $\balind_\bv$ follows the {\bf Index Transition}.

{\centering{\bf{Index Transition}}
%From here, we describe transition rules. The transitions of $\level, CR, \weight$, and $\synd$ have been mentioned above. Here is the detail of $\balind$ transition from $\bu \ra \bv$, where $\bv = (\level+1, CR_\bv, \weight_\bv, \synd_\bv, \balind_\bv )$. 
\begin{itemize}
	\item If $\balind_\bu = 0$ and $c_1c_{m+1} \in \{00,11\}$, then $\balind_\bv = \mathord{?}$. If $\balind_\bu = 0$ and $c_1c_{m+1} \in \{10,01\}$, then $\balind_\bv = 1$.
	\item If $\balind_\bu = \alpha$ is a positive integer, then  $\balind_\bv = \alpha$.
	\item If $\balind_\bu = \mathord{?}$ and $CR_\bv \neq 0$, then  $\balind_\bv = \mathord{?}$. If $\balind_\bu = \mathord{?}$ and $CR_\bv = 0$, then  $\balind_\bv = \level_\bv$.
\end{itemize}}
\end{comment}

Let $\boldsymbol{0}_{n-k}$ be the zero vector of length $n-k$.
Then the root of the trellis or the starting state is labelled with $(0, 0, 0, \boldsymbol{0}_{n-k},0)$. At level $m$, there are $2m$ final states with labels:
$(m,0,m,\boldsymbol{0}_{n-k},j)$ and $(m,0,m-1,\boldsymbol{0}_{n-k},j)$ for $1 \le j \le m$. We observe that a sequence of states or a path from starting state to one of the final states uniquely determines a codeword in $\B(\C)$, and vice versa. 
Moreover, for $1\leq j \leq m$, a path from $(0, 0, 0, \boldsymbol{0}_{n-k},0)$ to   $(m,0,w,\boldsymbol{0}_{n-k},j)$ corresponds to a word $\bc$ in $\B(\C)$ of weight $w$ with $|\Gamma_\C(\bc)|=j$. Therefore, the total number of paths from $(0, 0, 0, \boldsymbol{0}_{n-k},0)$ to either $(m,0,m-1,\boldsymbol{0}_{n-k},j)$ or $(m,0,m,\boldsymbol{0}_{n-k},j)$ yields the quantity $\gamma_\C(j,n)$.

\begin{example}\label{exa:trellis-1}
Let $\C$ be the cyclic Hamming $[7,4,3]$-code with parity-check matrix
$\bH = \begin{bmatrix}
	1&0&1&1&1&0&0 \\ 0&1&0&1&1&1&0 \\ 0&0&1&0&1&1&1
\end{bmatrix} $. 
In Figure~\ref{fig:743cyclic}, we illustrate the trellis corresponding to $\C$. 
The path highlighted in bold corresponds to the codeword $\bc=1010110$ and we check that the weight of $\bc$ is four and $|\Gamma_\C(\bc)|=4$. This is reflected in the third and fifth coordinates in the final state $(4,0,4,\boldsymbol{0}_3,4)$.
%how to get the same results using state-machine. For space constraint, we ignore states without paths to final states. We also ignore 2 final states with $\balind=3$, which are unreachable from the starting state.
%By exhaustive search, we observe that $\B_\C$ has $8$ codewords of weight $3$ or $4$. We also have $\gamma_\C(1,7) = 4$, $\gamma_\C(2,7) = 2$, $\gamma_\C(3,7) = 0$, $\gamma_\C(4,7) = 2$. These results are listed in the table \ref{tab:743cyclic}.

For completeness, we list all possible encodings of words in $\C$ in the table below. We verify that the final states of the corresponding paths provide the correct values of $\weight$ and $\balind$.
 
\begin{center}
	\begin{tabular}{|C{20mm}|C{12mm}|C{20mm}|C{20mm}|c|}
		\hline
		Hamming codeword $\bx$ & 
		shift index $\tau$  & 
		word $\bc$ of weight $3$ or $4$ & 
		Final State \\
		\hline
		%0000 & 
		0000000 &0  & 1111000 & $(4,0,4,{\boldsymbol{0}_3},1)$ \\
		%1000	& 
		1100101 & 0 & 0011101 & $(4,0,4,{\boldsymbol{0}_3},1)$ \\
		\hline
		%1111	& 
		1111111 & 0 & 0000111 & $(4,0,3,{\boldsymbol{0}_3},1)$ \\
		%0111	& 
		0011010 & 0 & 1100010 & $(4,0,3,{\boldsymbol{0}_3},1)$ \\
		\hline
		%0101	& 
		1001011 & 0 & 0110011 & $(4,0,4,{\boldsymbol{0}_3},2)$ \\ 
		%1110	& 
		0010111 & 1 & 0110011 & $(4,0,4,{\boldsymbol{0}_3},2)$ \\
		\hline
		%1010	& 
		0110100 & 0 & 1001100 & $(4,0,3,{\boldsymbol{0}_3},2)$ \\
		%0001	& 
		1101000 &1  & 1001100 & $(4,0,3,{\boldsymbol{0}_3},2)$  \\
		\hline
		%1101	& 
		0101110 & 0 & 1010110 & $(4,0,4,{\boldsymbol{0}_3},4)$ \\
		%1011	& 
		1011100 & 1 & 1010110 & $(4,0,4,{\boldsymbol{0}_3},4)$ \\
		%0011	& 
		0111001 & 2 & 1010110 & $(4,0,4,{\boldsymbol{0}_3},4)$ \\
		%0110	& 
		1110010 & 3 & 1010110 & $(4,0,4,{\boldsymbol{0}_3},4)$ \\
		\hline
		%0010	& 
		1010001 & 0 & 0101001 & $(4,0,3,{\boldsymbol{0}_3},4)$ \\
		%0100	& 
		0100011 & 1 & 0101001 & $(4,0,3,{\boldsymbol{0}_3},4)$  \\
		%1100	& 
		1000110 & 2 & 0101001 & $(4,0,3,{\boldsymbol{0}_3},4)$ \\
		%1001	& 
		0001101 & 3 & 0101001 & $(4,0,3,{\boldsymbol{0}_3},4)$ \\
		\hline
	\end{tabular}
\end{center}
\end{example}

Henceforth, we use  $T(\bv)= T(\level_\bv, CR_\bv, \weight_\bv, \synd_\bv, \balind_\bv)$ to denote the number of paths from the root $(0, 0, 0, {\boldsymbol{0}_k},0)$ to the vertex $\bv$. 
Following the above discussion, determining $\gamma_\C(j,n)$ is equivalent to computing $T(m,0,m-1,\boldsymbol{0}_{n-k},j)+T(m,0,m,\boldsymbol{0}_{n-k},j)$.

To do so, we make use of the following three propositions.
We omit the detailed proofs as they are immediate from the transition rules used to describe the trellis.

\begin{figure*}[t!]
	\centering
	\begin{tikzpicture}[shorten >=1pt,node distance=1.0cm,on grid,auto, 
		block/.style={
			draw,
			fill=white,
			rectangle,
			minimum width=2.5cm,
			font=\footnotesize}] % Some customizations related to the size and the discatnce between nodes and arrow heads
		\node[block] (Start) [] {$0,0,0, {\boldsymbol{0}_3},0$}; % Here the nodes and coordinates are defined
		\node[block] (00) [right=of Start, xshift=2.5cm, yshift=3cm]   {$1,-2,0,[1,0,0], \mathord{?}$};
		\node[block] (01) [right=of Start, xshift=2.5cm, yshift=1cm]   {$1,0,1,[0,1,1], 1$};
		\node[block] (10) [right=of Start, xshift=2.5cm, yshift=-1cm]   {$1,0,1,[0,0,0], 1$};
		\node[block] (11) [right=of Start, xshift=2.5cm, yshift=-3cm]   {$1,2,2,[1,1,1], \mathord{?}$};

		\path[->,color=red] (Start) edge [above] node  {} (00);
		\path[->,color=orange] (Start.north east) edge [above] node  {} (01.south west);
		\path[->,color=green] (Start.south east) edge [above] node  {} (10.north west);
		\path[->,color=blue, line width=0.5mm] (Start) edge [above] node  {} (11);
		
		%\path[->,color=black] (Start) edge [above] node  {00} (00);
		%\path[->,color=black] (Start.north east) edge [above] node  {01} (01.south west);
		%\path[->,color=black] (Start.south east) edge [above] node  {10} (10.north west);
		%\path[->,color=black] (Start) edge [above] node  {11} (11);
		
		\node[block] (0010) [right=of 00, xshift=2.5cm, yshift=1cm]   {$2,-2,1,[1,0,0], \mathord{?}$};
		\node[block] (0011) [right=of 00, xshift=2.5cm, yshift=0cm]   {$2,0,2,[1,1,1], 2$};
		%\path[->,color=black] (00.north east) edge [above] node  {10} (0010.south west);
		%\path[->,color=black] (00) edge [above] node  {11} (0011);
		\path[->,color=green] (00.north east) edge [above] node  {} (0010.south west);
		\path[->,color=blue] (00) edge [above] node  {} (0011);
		
		\node[block] (0100) [right=of 01, xshift=2.5cm, yshift=1cm]   {$2,-2,1,[0,0,1], 1$};
		\node[block] (0101) [right=of 01, xshift=2.5cm, yshift=0cm]   {$2,0,2,[0,1,0], 1$};
		\path[->,color=red] (01.north east) edge [above] node  {} (0100.south west);
		\path[->,color=orange] (01) edge [above] node  {} (0101);
		%\path[->,color=black] (01.north east) edge [above] node  {00} (0100.south west);
		%\path[->,color=black] (01) edge [above] node  {01} (0101);
		
		\node[block] (1010) [right=of 10, xshift=2.5cm, yshift=0cm]   {$2,0,2,[0,0,0], 1$};
		\node[block] (1011) [right=of 10, xshift=2.5cm, yshift=-1cm]   {$2,2,3,[0,1,1], 1$};
		%\path[->,color=black] (10) edge [above] node  {10} (1010);
		%\path[->,color=black] (10.south east) edge [above] node  {11} (1011.north west);
		\path[->,color=green] (10) edge [above] node  {} (1010);
		\path[->,color=blue] (10.south east) edge [above] node  {} (1011.north west);
		
		\node[block] (1100) [right=of 11, xshift=2.5cm, yshift=0cm]   {$2,0,2,[1,0,1],2$};
		\node[block] (1101) [right=of 11, xshift=2.5cm, yshift=-1cm]   {$2,2,3,[1,1,0], \mathord{?}$};
		\path[->,color=red] (11) edge [above] node  {} (1100);
		\path[->,color=orange, line width=0.5mm] (11.south east) edge [above] node  {} (1101.north west);
		%\path[->,color=black] (11) edge [above] node  {00} (1100);
		%\path[->,color=black] (11.south east) edge [above] node  {01} (1101.north west);
		
		\node[block] (32?) [right=of Start, xshift=9.5cm, yshift=4cm]   {$3,-2,2,[0,0,0], \mathord{?}$};
		%\node[block] (3031) [right=of Start, xshift=12cm, yshift=2cm]   {$3,0,3,[0,0,0], 1$};
		\node[block] (3032) [right=of Start, xshift=9.5cm, yshift=-1cm]   {$3,0,3,[0,0,0], 1$};
		\node[block] (324?) [right=of Start, xshift=9.5cm, yshift=-4cm]   {$3,2,4,[1,1,0], \mathord{?}$};
		\node[block] (3242) [right=of Start, xshift=9.5cm, yshift=3cm]   {$3,2,4,[1,1,0], 2$};
		\node[block] (3131) [right=of Start, xshift=9.5cm, yshift=1cm]   {$3,0,3,[1,1,0], 1$};
		%\node[block] (3033) [right=of Start, xshift=12cm, yshift=-2cm]   {$3,0,3,[1,1,0], 1$};
		\node[block] (322) [right=of Start, xshift=9.5cm, yshift=-3cm]   {$3,-2,2,[0,0,0], 2$};
		\path[->,color=orange] (0010) edge [above] node  {} (32?);
		\path[->,color=blue] (0011) edge [above] node  {} (3242);
		\path[->,color=blue] (0100.south east) edge [above] node  {} (3032.north west); %11
		\path[->,color=orange] (0101) edge [above] node  {} (3131); %01
		\path[->,color=green] (1010) edge [above] node  {} (3032); %10
		\path[->,color=red] (1011.north east) edge [above] node  {} (3131.south west); %00
		\path[->,color=red] (1100) edge [above] node  {} (322);
		\path[->,color=green,line width=0.5mm] (1101) edge [above] node  {} (324?);
		%\path[->,color=black] (0010) edge [above] node  {01} (32?);
		%\path[->,color=black] (0011) edge [above] node  {01} (3242);
		\path[->,color=blue] (0100.south east) edge [above] node  {} (3032.north west); %11
		%\path[->,color=black] (0101) edge [above] node  {} (3131); %01
		%\path[->,color=black] (1010) edge [above] node  {} (3032); %10
		%\path[->,color=black] (1011.north east) edge [above] node  {} (3131.south west); %00
		%\path[->,color=black] (1100) edge [above] node  {00} (322);
		%\path[->,color=black] (1101) edge [above] node  {10} (324?);
		%\path[->,color=black] ([xshift=0mm]0011.south east) edge [above] node  {} ([xshift=0mm]33.north west);
		%\path[->,color=black] (1101) edge [above] node  {} (33);

		\node[block] (434) [right=of Start, xshift=13cm, yshift=4cm]   {$4,0,3,{\boldsymbol{0}_3}, 4$};
		\node[block] (442) [right=of Start, xshift=13cm, yshift=3cm]   {$4,0,4,{\boldsymbol{0}_3}, 2$};
		\node[block] (431) [right=of Start, xshift=13cm, yshift=1cm]   {$4,0,3,{\boldsymbol{0}_3}, 1$};
		\node[block] (441) [right=of Start, xshift=13cm, yshift=-1cm]   {$4,0,4,{\boldsymbol{0}_3}, 1$};
		\node[block] (432) [right=of Start, xshift=13cm, yshift=-3cm]   {$4,0,3,{\boldsymbol{0}_3}, 2$};
		\node[block] (444) [right=of Start, xshift=13cm, yshift=-4cm]   {$4,0,4,{\boldsymbol{0}_3}, 4$};
		\path[->,color=black] (32?) edge [above] node  {1} (434);
		\path[->,color=black] (3242) edge [above] node  {0} (442);
		\path[->,color=black] (322) edge [above] node  {1} (432);
		\path[->,color=black, line width=0.5mm] (324?) edge [above] node  {0} (444);
		\path[->,color=black] (3131) edge [above] node  {0} (431);
		\path[->,color=black] (3032) edge [above] node  {1} (441);
	\end{tikzpicture}
	\caption{Trellis for the $[7,4,3]$-cyclic code $\C$ defined in Example~\ref{exa:trellis-1}. To reduce clutter, we omitted states that are not connected to any final state. We also omitted two final states with $\balind=3$ as they are not reachable from the root. Also, we use colors to label the edges. Edges labelled $00$ are colored {\color{red}red}, $01$ are colored {\color{orange}orange}, $10$ are colored {\color{green}green}, and $11$ are colored {\color{blue}blue}. The path highlighted in bold corresponds to the codeword $1010110$. For more details, refer to Example~\ref{exa:trellis-1}.%\todo{HM: Sorry, I messed up the coloring. I think some colorings/labellings are incorrect.} 
	}
	\label{fig:743cyclic}
\end{figure*}

Proposition~\ref{prop:level1} initializes the values of $T$ at level~1. 
%We have 4 states here corresponding to $00,01,10$ and $11$.

\begin{proposition}\label{prop:level1}
	We have that $T(1,-2,0,\bh_1, \mathord{?}) = T(1,0,1,\bh_1+\bh_{m+1}, 1)= T(1,0,1,{\boldsymbol{0}_k},1)= T(1,2,2,\bh_{m+1}, \mathord{?})=1$.
\end{proposition}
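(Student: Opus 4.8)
The plan is to simply unwind the transition rules from the root to level~$1$. The trellis has a single root, namely the state $(0,0,0,\boldsymbol{0}_{n-k},0)$, and every path counted by $T$ at level~$1$ consists of a single edge emanating from this root. Since the first transition is labelled by the pair $c_1c_{m+1}$, there are exactly four such edges, one for each value $c_1c_{m+1}\in\{00,01,10,11\}$; consequently, whatever states they lead to, each is reached by precisely one path, so the associated value of $T$ is $1$. It therefore remains only to check that the four target states are exactly the four listed in the statement.

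To do this I would go through the four labels one at a time and apply the five coordinate-update rules. The $\level$-coordinate becomes $1$ in each case. For the cyclic running sum, the increment is $(-1)^{c_1+1}+(-1)^{c_{m+1}+1}$, which evaluates to $-2,0,0,2$ for the labels $00,01,10,11$ respectively. For $\weight$, the increment is $c_1+c_{m+1}$, giving $0,1,1,2$. For the syndrome, the update $\synd\gets(c_1+1)\bh_1+c_{m+1}\bh_{m+1}$, computed over $\mathbb{F}_2$, yields $\bh_1$, $\bh_1+\bh_{m+1}$, $\boldsymbol{0}_{n-k}$, and $\bh_{m+1}$ respectively. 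Finally, for $\balind$, the initial value is $0$, so the first bullet of the index-transition rule applies: $\balind$ becomes $\mathord{?}$ when $c_1c_{m+1}\in\{00,11\}$ and becomes $1$ when $c_1c_{m+1}\in\{01,10\}$. Collecting coordinates produces precisely $(1,-2,0,\bh_1,\mathord{?})$, $(1,0,1,\bh_1+\bh_{m+1},1)$, $(1,0,1,\boldsymbol{0}_{n-k},1)$, and $(1,2,2,\bh_{m+1},\mathord{?})$, as claimed, and each is hit by a unique edge, so its $T$-value is $1$.

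The one point requiring care — and essentially the only place a slip could occur — is the $\mathbb{F}_2$ arithmetic in the syndrome update, in particular noticing that the label $10$ forces $(c_1+1)\bh_1+c_{m+1}\bh_{m+1}=2\bh_1=\boldsymbol{0}_{n-k}$, which is exactly what makes the third listed state carry the all-zero syndrome. Beyond this bookkeeping the argument is a direct computation with no genuine obstacle, which is why the detailed proof is omitted in favour of referring to the transition rules.
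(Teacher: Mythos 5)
Your computation is correct and is exactly the argument the paper intends: the authors omit the proof, stating that it is ``immediate from the transition rules,'' and your proposal simply carries out that unwinding, correctly handling the $\mathbb{F}_2$ cancellation $(c_1+1)\bh_1 = \boldsymbol{0}_{n-k}$ for the label $10$. The only unstated point is that the four target states are pairwise distinct (so no state absorbs two edges), which holds since the $CR$ values separate $00$ and $11$ from the rest and the syndromes $\bh_1+\bh_{m+1}$ and $\boldsymbol{0}_{n-k}$ separate $01$ from $10$ whenever $\bh_1\neq\bh_{m+1}$.
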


Proposition \ref{prop:mid_level} deals with vertices $\bv$ with $2 \le \level_\bv \le m-1$.

\begin{proposition}\label{prop:mid_level}
	We have three cases that depend on the value of $\balind_\bv$.
	\begin{itemize}
		\item When $\balind_\bv = \mathord{?}$,
		\begin{align*}
			T(\level_\bv, CR_\bv, \weight_\bv, \synd_\bv,  \mathord{?}) 
			&= T(\level_\bv-1, CR_\bv+2, \weight_\bv, \synd_\bv+h_{\level_\bv-1},  \mathord{?}) \\
			&+ T(\level_\bv-1, CR_\bv, \weight_\bv-1, \synd_\bv+h_{\level_\bv-1}+ h_{\level_\bv-1+m},  \mathord{?})\\
			&+ T(\level_\bv-1, CR_\bv, \weight_\bv-1, \synd_\bv,  \mathord{?})\\
			&+ T(\level_\bv-1, CR_\bv-2, \weight_\bv-2, \synd_\bv+ h_{\level_\bv-1+m},  \mathord{?}).
		\end{align*}
		\item When $\balind_\bv \ne \mathord{?}$ and $\balind_\bv < \level_\bv$,
		\begin{align*}
			T(\level_\bv, CR_\bv, \weight_\bv, \synd_\bv,  \balind_\bv) 
			&= T(\level_\bv-1, CR_\bv+2, \weight_\bv, \synd_\bv+h_{\level_\bv-1},  \balind_\bv) \\
			&+ T(\level_\bv-1, CR_\bv, \weight_\bv-1, \synd_\bv+h_{\level_\bv-1}+ h_{\level_\bv-1+m},  \balind_\bv)\\
			&+ T(\level_\bv-1, CR_\bv, \weight_\bv-1, \synd_\bv,  \balind_\bv)\\
			&+ T(\level_\bv-1, CR_\bv-2, \weight_\bv-2, \synd_\bv+ h_{\level_\bv-1+m},  \balind_\bv).
		\end{align*}
		\item When $\balind_\bv =\level_\bv$,
		\begin{align*}
			T(\level_\bv, CR_\bv, \weight_\bv, \synd_\bv,   \mathord{?}) 
			&= T(\level_\bv-1, CR_\bv+2, \weight_\bv, \synd_\bv+h_{\level_\bv-1},   \mathord{?}) \\
			&+ T(\level_\bv-1, CR_\bv, \weight_\bv-1, \synd_\bv+h_{\level_\bv-1}+ h_{\level_\bv-1+m},  \mathord{?})\\
			&+ T(\level_\bv-1, CR_\bv, \weight_\bv-1, \synd_\bv,   \mathord{?})\\
			&+ T(\level_\bv-1, CR_\bv-2, \weight_\bv-2, \synd_\bv+ h_{\level_\bv-1+m},   \mathord{?}).
		\end{align*}
	\end{itemize}
\end{proposition}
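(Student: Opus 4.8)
The plan is to read the recursion directly off the layered structure of the trellis. Every edge raises the level by exactly one, so each path from the root $(0,0,0,\boldsymbol{0}_{n-k},0)$ to a vertex $\bv$ with $2\le\level_\bv\le m-1$ decomposes uniquely as a path to some vertex $\bu$ at level $\level_\bv-1$ followed by one edge $\bu\to\bv$. Grouping such paths by the label $c_{\level_\bv}c_{\level_\bv+m}\in\{00,01,10,11\}$ of the last edge — and adopting the convention that $T$ of an unreachable state is $0$ — we obtain $T(\bv)=\sum_{ab}T(\bu_{ab})$, where $\bu_{ab}$ is the unique state at level $\level_\bv-1$ from which the label-$ab$ edge reaches $\bv$. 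So the entire proof reduces to identifying $\bu_{ab}$, coordinate by coordinate, for the four labels, and this is precisely the verification the paper describes as immediate from the transition rules.

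First I would invert the updates on the four arithmetic coordinates by running the transition from level $\level_\bv-1$ to $\level_\bv$ backwards. The level of $\bu_{ab}$ is $\level_\bv-1$; its weight is $\weight_\bv-(c_{\level_\bv}+c_{\level_\bv+m})$, i.e.\ $\weight_\bv,\weight_\bv-1,\weight_\bv-1,\weight_\bv-2$ for $ab=00,01,10,11$; its cyclic running sum is $CR_\bv-\bigl((-1)^{c_{\level_\bv}+1}+(-1)^{c_{\level_\bv+m}+1}\bigr)$, i.e.\ $CR_\bv+2,CR_\bv,CR_\bv,CR_\bv-2$; and, working over $\ft$, its syndrome is $\synd_\bv$ with the edge contribution $(c_{\level_\bv}+1)\bh_{\level_\bv}+c_{\level_\bv+m}\bh_{\level_\bv+m}$ added back, i.e.\ $\synd_\bv+\bh_{\level_\bv}$, $\synd_\bv+\bh_{\level_\bv}+\bh_{\level_\bv+m}$, $\synd_\bv$, and $\synd_\bv+\bh_{\level_\bv+m}$ for $ab=00,01,10,11$. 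These account for the level, cyclic-running-sum, weight and syndrome coordinates of the four summands appearing on each right-hand side.

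It remains to pin down the fifth coordinate $\balind_{\bu_{ab}}$, and this is the only place a case split is needed. Since $\level_\bv\ge2$ the predecessor is not the root, so $\balind_{\bu_{ab}}\ne0$; since $\level_\bv\le m-1$ the incoming edge is a regular two-bit edge rather than the single-bit edge of the final step; and any positive integer $\balind$ value is at most the level at which it was assigned, so $\balind_{\bu_{ab}}$ is either $\mathord{?}$ or an integer in $\{1,\dots,\level_\bv-1\}$. With these restrictions the index-transition rule collapses to three mutually exclusive and exhaustive alternatives for $\bv$: $\balind_\bv=\mathord{?}$ iff $\balind_{\bu_{ab}}=\mathord{?}$ and $CR_\bv\ne0$; $\balind_\bv=\level_\bv$ iff $\balind_{\bu_{ab}}=\mathord{?}$ and $CR_\bv=0$; and $\balind_\bv=\beta$ with $1\le\beta\le\level_\bv-1$ iff $\balind_{\bu_{ab}}=\beta$. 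Substituting each equivalence into $T(\bv)=\sum_{ab}T(\bu_{ab})$ yields exactly the three displayed identities: when $\balind_\bv$ is $\mathord{?}$ or equals $\level_\bv$ all four predecessors carry $\balind=\mathord{?}$, and when $\balind_\bv$ is a positive integer strictly below $\level_\bv$ all four carry that same value.

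I do not expect a genuine obstacle — the statement really is a bookkeeping exercise — but the step most prone to slips is keeping the indices straight: checking that the edge into a level-$\level_\bv$ vertex reads positions $\level_\bv$ and $\level_\bv+m$ of $\bc$ (so that the syndrome correction uses $\bh_{\level_\bv}$ and $\bh_{\level_\bv+m}$), and verifying that the three-way split on $\balind_\bv$ is exhaustive precisely once the root level $0$ and the final level $m$ are set aside, which is why the proposition is stated only for $2\le\level_\bv\le m-1$.
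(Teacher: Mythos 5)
Your argument is correct and is exactly the verification the paper declares ``immediate from the transition rules'' (the paper omits the proof of this proposition entirely): decompose the paths into $\bv$ by the label of the last edge, invert each coordinate update to identify the unique predecessor for each of the four labels, and justify the three-way split on $\balind_\bv$ by noting that a predecessor at level $\level_\bv-1\ge 1$ carries either $\mathord{?}$ or a frozen positive integer strictly below $\level_\bv$. One small point in your favour: your column indices $\bh_{\level_\bv}$ and $\bh_{\level_\bv+m}$ are the ones consistent with the paper's stated transition rule (the edge from level $\level_\bv-1$ to $\level_\bv$ consumes $c_{\level_\bv}c_{\level_\bv+m}$), whereas the proposition as printed writes $h_{\level_\bv-1}$ and $h_{\level_\bv-1+m}$ --- an off-by-one slip in the statement, not a gap in your proof.
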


Lastly, Proposition \ref{prop:lastlevel} considers the values $T$ at the final states. 

\begin{proposition}\label{prop:lastlevel}
	We note that there are two types of final state with weight either $m$ or $m-1$.
	\begin{itemize}
		\item The weight is $m$.
		\begin{itemize}
			\item If the index is $m$, then
			$T(m, 0, m, {\boldsymbol{0}_k},  m) = T(m-1, 2, m, h_{m},   \mathord{?}) $.
			\item If the index is $m-1$, then
			$T(m, 0, m, {\boldsymbol{0}_k},  m-1) = T(m-1, 0, m-1, h_{m},  m-1) $.
			\item If the index is $\alpha < m-1$, then
			$T(m, 0, m, {\boldsymbol{0}_k},  \alpha) = T(m-1, 2, m, h_{m},  \alpha)+T(m-1, 0, m-1, h_{m},  \alpha) $.
		\end{itemize}
		\item The weight is $m-1$.
		\begin{itemize}
			\item If the index is $m$, then
			$T(m, 0, m-1, {\boldsymbol{0}_k},  m) = T(m-1, -2, m-2, h_{m},   \mathord{?}) $.
			\item If the index is $m-1$, then
			$T(m, 0, m-1, {\boldsymbol{0}_k},  m-1) = T(m-1, 0, m-1, h_{m},  m-1) $.
			\item If the index is $\alpha < m-1$, then
			$T(m, 0, m-1, {\boldsymbol{0}_k},  \alpha) = T(m-1, 0, m-1, h_{m},  \alpha)+T(m-1, -2, m-2, h_{m},  \alpha) $.
		\end{itemize}
	\end{itemize}
\end{proposition}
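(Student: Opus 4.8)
The plan is to obtain Proposition~\ref{prop:lastlevel} by unwinding two facts only: the trellis is a layered acyclic graph, so $T(\bv)=\sum_{\bu\,:\,\bu\to\bv}T(\bu)$ for every non-root vertex $\bv$; and the edges entering the final layer are exactly those prescribed by the transition rule for the step from level $m-1$ to level $m$. Since it has already been argued (via Proposition~\ref{prop:GammaC} and the transition rules) that root-to-$\bv$ paths biject with the words $\bc\in\B(\C)$ whose weight and $|\Gamma_\C(\bc)|$ are recorded in $\bv$, the entire task is, for each of the $2m$ final states $(m,0,w,\boldsymbol{0}_{n-k},j)$ with $w\in\{m-1,m\}$ and $1\le j\le m$, to list its level-$(m-1)$ predecessors and read off the corresponding recurrence.

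First I would record the last transition explicitly. At this step only the bit $c_m$ (together with the forced balancing bit of Step~(I)) is read, so each level-$(m-1)$ vertex has at most two out-edges and each final state has at most two predecessors. Inverting the deterministic updates of the first four coordinates — $\level$ drops by one; $\weight$ drops by one if $c_m=1$ and is unchanged if $c_m=0$; $\synd$ loses the contribution of position $m$; $CR$ loses its final-step increment — and then imposing the three defining constraints of a final state ($CR=0$, $\synd=\boldsymbol{0}_{n-k}$, $\weight\in\{m-1,m\}$) pins down, for each value of $c_m$ and each target weight, a unique predecessor among $(m-1,2,m,\bh_m,\cdot)$, $(m-1,0,m-1,\bh_m,\cdot)$, and $(m-1,-2,m-2,\bh_m,\cdot)$; this part is pure substitution. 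The only coordinate not governed by a fixed arithmetic rule is $\balind$, which is precisely the case split $j=m$, $j=m-1$, $j\le m-2$ in the statement. Here recall that $\balind$ is frozen once it becomes a concrete integer, and that a predecessor carrying $\mathord{?}$ resolves to $\level_\bv$ exactly when $CR_\bv=0$. Since every final state has $CR=0$, a $\mathord{?}$-predecessor necessarily produces $\balind=m$; conversely $\balind=m$ can only come from a $\mathord{?}$-predecessor, as a concrete $\balind$ at level $m-1$ is at most $m-1$. This gives the $j=m$ identities with a single term. For $j\le m-1$ the predecessor's $\balind$ equals $j$; the last point is that when $j=m-1$ a level-$(m-1)$ vertex with $\balind=m-1$ must be one at which the running sum just returned to $0$, which forces $CR_{m-1}=0$ and hence excludes the $CR_{m-1}=\pm2$ predecessor, leaving one term, whereas for $j\le m-2$ the index was already fixed at an earlier level, so both admissible predecessors may carry $\balind=j$ and both terms survive. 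Collecting the six cases yields the proposition.

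I expect the only genuine obstacle to be this last piece of $\balind$ bookkeeping: one must argue which of the two syntactically possible predecessors of a given final state can actually carry the required index, and keep track of the fact that a single level-$(m-1)$ vertex may feed two distinct final states along its two out-edges. Everything else is a direct read-off of the transition rules, which is why the proposition is stated without a detailed proof; the same remarks dispatch Propositions~\ref{prop:level1} and~\ref{prop:mid_level} — the former by initializing from the four possibilities for $(c_1,c_{m+1})$, the latter by the four-way branching on $(c_i,c_{i+m})$ for $2\le\level_\bv\le m-1$.
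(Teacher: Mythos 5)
Your proof is correct and matches the paper's approach exactly: the paper omits the argument, asserting only that the recurrences are ``immediate from the transition rules,'' and what you write is precisely the careful unwinding of those rules via $T(\bv)=\sum_{\bu:\bu\to\bv}T(\bu)$ over the at most two predecessors determined by $c_m$. In particular your $\balind$ bookkeeping --- that a level-$(m-1)$ state carrying $\mathord{?}$ must have $CR\neq 0$ while one carrying $\balind=m-1$ must have $CR=0$, which is exactly what eliminates one of the two candidate predecessors in the $j=m$ and $j=m-1$ cases and leaves both in the $j\le m-2$ case --- is the only nontrivial point, and you handle it correctly.
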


%\todo{HM to continue editing.}

Finally, we have the following theorem that provides a practical method to compute the average redundancy of Scheme~C.
\begin{theorem}\label{theo:general_gammaC}
	If $\C$ is an $[n,k,d]$-cyclic code, then $\gamma_\C(i,n)$ can be computed in $O(2^{n-k}n^4)$ time.
	Hence, if $n-k$ is a constant, then $\gamma_\C(i,n)$ can be computed in $O(n^4)$ time.
\end{theorem}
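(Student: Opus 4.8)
The plan is to compute, by dynamic programming on the trellis described above, the numbers $T(\bv)$ counting paths from the root to each reachable state $\bv$, and then to read off $\gamma_\C(i,n)$ at the final states. Recall from the discussion preceding the theorem that a path from the root $(0,0,0,\boldsymbol{0}_{n-k},0)$ to a final state of the form $(m,0,w,\boldsymbol{0}_{n-k},j)$ corresponds bijectively to a word $\bc\in\B(\C)$ of weight $w$ with $|\Gamma_\C(\bc)|=j$. Hence it suffices to compute all the values $T(\bv)$, since
\[
\gamma_\C(j,n)=T(m,0,m-1,\boldsymbol{0}_{n-k},j)+T(m,0,m,\boldsymbol{0}_{n-k},j).
\]

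Every edge of the trellis goes from level $\ell-1$ to level $\ell$, so the levels furnish a topological order on the states and the $T$-values may be filled in one level at a time. Proposition~\ref{prop:level1} initialises $T$ at the (at most four) states of level $1$. For $2\le\ell\le m-1$, Proposition~\ref{prop:mid_level} expresses $T(\bv)$ at each level-$\ell$ state as a sum of at most four $T$-values at level $\ell-1$, all of which are already available. Finally, Proposition~\ref{prop:lastlevel} expresses $T$ at the final states of level $m$ in terms of the values at level $m-1$. Traversing the levels in increasing order thus produces every $T(\bv)$, and in particular the quantities appearing on the right-hand side of the displayed identity.

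It remains to bound the running time, which reduces to bounding the number of states and the cost of each update. Fix a level $\ell$. The coordinate $\level_\bv=\ell$ is then determined; the cyclic running sum changes by at most $2$ at each of the $\ell\le m$ preceding transitions, so $|CR_\bv|\le 2m$ and $CR_\bv$ takes $O(n)$ values; likewise $0\le\weight_\bv\le m$, giving $O(n)$ values; the syndrome $\synd_\bv$ ranges over $\mathbb{F}_2^{n-k}$, giving $2^{n-k}$ values; and by Lemma~\ref{lem:maxTC} together with the index-transition rule, $\balind_\bv$ is either the symbol $\mathord{?}$ or an integer in $\{1,\dots,m\}$, giving $O(n)$ values. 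Hence each level has $O(n^3\,2^{n-k})$ states, and, as there are $m+1=O(n)$ levels, the trellis has $O(n^4\,2^{n-k})$ states in all. By Propositions~\ref{prop:level1}--\ref{prop:lastlevel} each value $T(\bv)$ is obtained from a bounded number of table look-ups and additions, so the whole computation uses $O(n^4\,2^{n-k})$ arithmetic operations; reading off the final states costs a further $O(n\,2^{n-k})$. This proves the first assertion, and taking $n-k=O(1)$ gives the bound $O(n^4)$. The only delicate point is verifying that the state space is polynomially bounded in $n$ (up to the factor $2^{n-k}$ forced by the syndrome): the potential sources of blow-up, the coordinates $CR_\bv$ and $\balind_\bv$, are controlled respectively by the step-size bound and by Lemma~\ref{lem:maxTC}. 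One may shrink the state space further---for instance, for $\ell\le m-1$ one has $CR_\bv=2\weight_\bv-2\ell$, so $CR_\bv$ is redundant given $\weight_\bv$ and $\ell$---but this refinement is not needed for the stated bound.
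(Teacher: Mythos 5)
Your proposal is correct and follows essentially the same argument as the paper: bound the number of trellis states by multiplying the $O(n)$ ranges of level, cyclic running sum, weight, and balancing index with the $2^{n-k}$ possible syndromes, then observe that each $T(\bv)$ is obtained from at most four already-computed values via Propositions~\ref{prop:level1}--\ref{prop:lastlevel}, giving total time $O(2^{n-k}n^4)$. Your additional remarks (the explicit topological order by level, the appeal to Lemma~\ref{lem:maxTC} for the index coordinate, and the observation that $CR_\bv$ is determined by $\weight_\bv$ and the level) only make the paper's terse counting argument more explicit and do not change the approach.
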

\begin{proof}\label{proof:general_gammaC}
	Since each recursion formula in Propositions~\ref{prop:mid_level} and~\ref{prop:lastlevel} involve at most four summands,
	the time complexity is $O(V)$ where $V$ is the total number of states. 
	Here, we provide a simple upper bound for $V$.
	Recall that each state is labelled by five values.
	The first value corresponds to its level and this has $m+1$ possibilities.
	Since the second value, the cyclic running sum, is between $-m$ and $m$, we have $2m+1$ possibilities.
	The third value, weight, is from $0$ to $m$, while
	the fifth value, index, takes values from $1$ to $m$ and also an extra symbol $\mathord{?}$.
	In both cases, we have $m+1$ possibilities.
	Finally, the fourth value is the possible realizations of the syndrome, which is a binary vector of length $n-k$.
	Hence, there are at most $2^{n-k}$ possible syndromes.
	In summary,  $V=O(2^{n-k}n^4)$ and we have the required running time.
	%In the worst case, to count $\gamma_\C(i,2m-1)= 2*T(m,0,m,{\bf{0}_k},i)$, we need to go through all states. Hence, the time complexity is upper bound by total number of states. The range of level is $m+1$. The cyclic running sum is between $-m$ and $m$. 
	%If the cyclic running sum is $m+1$, the word at the current state has at least $m+1$ ones. Hence, the current level is at least $(m+1)/2$. To reach a final state with $CR=0$, we need to add at least $m+1$ zeros. It takes at least $(m+1)/2$ levels. As a result, the final level is $m+1$, which is a contradiction. 
	%The weight is from $0$ to $m$. The index takes values from $1$ to $m$ with an extra symbol $\mathord{?}$. The last thing is how many possible values of syndrome. Each syndrome is a vector of length $n-k$. Hence, there are at most $2^{n-k}$ possible syndrome. To sum up, the number of states is at most $2^{n-k}m^4= O(n^4)$ when $n-k$ is a constant.
\end{proof}

\begin{corollary}\label{cor:O5}
	If $\C$ is $[n,k,d]$-cyclic code, then $\rho_\C$ can be computed in $O(2^{n-k}n^5)$.
	Suppose further that $n-k$ is constant, then $\rho_\C$ can be computed in $O(n^5)$.
\end{corollary}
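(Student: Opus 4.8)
The plan is to read $\rho_\C$ off directly from the counts $\gamma_\C(i,n-1)$ furnished by Theorem~\ref{theo:general_gammaC}. Recall that
\[
\rho_\C = 1 + \frac{1}{|\C|}\sum_{i=1}^{m} i\,\gamma_\C(i,n-1)\log i,
\]
and that $|\C| = 2^k$. So once the list $\big(\gamma_\C(1,n-1),\ldots,\gamma_\C(m,n-1)\big)$ is available, the value $\rho_\C$ is obtained by $O(m)=O(n)$ further operations: for each $i$ evaluate $\log i$, multiply by $i$ and by $\gamma_\C(i,n-1)$, accumulate the running sum, and finally divide by $2^k$ and add $1$.

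First I would invoke Theorem~\ref{theo:general_gammaC}: a single traversal of the trellis computes $T(\bv)$ for every vertex $\bv$, and hence produces every final-state count, so the whole list $\{\gamma_\C(i,n-1)\}_{i=1}^m$ is available after work bounded by $O(2^{n-k}n^4)$ arithmetic operations. The remaining factor of $n$ in the corollary comes from the summation stage: there are $m=O(n)$ summands, and each carries a logarithm evaluation (to fixed, polynomially bounded precision) together with a constant number of multiplications and one addition. Since the integers $\gamma_\C(i,n-1)$ and the partial sums are at most $2^n$ in magnitude, every such arithmetic step costs $\mathrm{poly}(n)$ bit operations; the whole accumulation, the division by $2^k$, and the addition of $1$ are therefore $\mathrm{poly}(n)$ and are absorbed into the bound.

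Combining the two stages gives the claimed running time $O(2^{n-k}n^5)$. When $n-k$ is constant, $2^{n-k}=O(1)$, and the bound collapses to $O(n^5)$, which is the second assertion.

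There is no genuine obstacle here; the statement is a direct corollary of Theorem~\ref{theo:general_gammaC}. The only point needing a word of care is bit-complexity bookkeeping — the $\gamma_\C$ values and partial sums can be exponentially large in $n$, and the logarithms must be carried to enough precision to compare the final answer — but in both cases the cost is polynomial in $n$ and comfortably dominated by the $n^5$ term, so I would note this and conclude.
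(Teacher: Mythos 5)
Your proposal is correct and follows the same route the paper intends: the corollary is an immediate consequence of Theorem~\ref{theo:general_gammaC} (one trellis computation yielding the counts $\gamma_\C(i,n-1)$, followed by the $O(n)$-term weighted sum defining $\rho_\C$), with the extra factor of $n$ most naturally read as invoking the theorem once for each of the $m$ indices $i$. One small bookkeeping remark: since you observe that a single traversal produces \emph{all} the final-state counts simultaneously, your own accounting actually gives $O(2^{n-k}n^4)$ plus a polynomially bounded summation stage, which is stronger than and in particular implies the stated $O(2^{n-k}n^5)$ bound, so the slight tension between ``the remaining factor of $n$ comes from the summation stage'' and ``the summation is absorbed'' does not affect the validity of the conclusion.
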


\section{Conclusion}

%\todo{Tu, feel free to write this. Important is to describe some open problems. Like can we get closed formulas for $\rho_\C$ for certain famous cyclic codes, like Hamming code and BCH codes.}

We studied schemes that use variable-length prefixes to encode messages onto length-$n$ codewords of weight $n/2+\sq$.  Of significance, we obtained the following results for average redundancy of our schemes.
\begin{enumerate}[(i)]
	\item When $\sq$ is a constant, we prove that our schemes uses at most $\frac{1}{2} \log{n}+2.526$ redundant bits.
	\item In particular, when $\sq = 0$, we obtained a closed form formula for the average redundancy. Moreover, in this case, the number of redundant bits asymptotically differs from the optimal redundancy by a factor of $0.2$.
	\item We extended our work to the balancing scheme with error correcting capabilities for $\sq = 0$. Although the average redundancy $\rho_C$ depends on the choice of the cyclic code $\C$ as the set of messages, we showed that $\rho_C \sim \log{n}$ when $\C = \{0,1\}^{n-1}$. 
	\item We provided polynomial-time computable formulas to estimate the average redundancy of all our schemes.
\end{enumerate}

Some open problems arise from this work.
\begin{enumerate}[(i)]
	\item {\it How to construct balancing schemes when $\sq$ grows with $n$?} This question has been considered in literature \cite{skachek2014constant}. Since our schemes inherits the simplicity of Knuth's technique, it seems unable to deal with bad words when $\sq$ is large. However, Lemmas \ref{lem:bad} and \ref{lem:badB} are elementary, and it is likely that more sophisticated techniques (like those in~\cite{skachek2014constant}) are able to balance bad words with less redundant bits. This may then reduce the average redundancy.
	\item {\it Do our schemes work with non-binary alphabet?} This question was
	prompted by a reviewer in our conference paper. Specifically, are we able to adapt our schemes and the accompanying analysis for balanced $q$-ary words (as defined in~\cite{Paluncic2018Capacity, Swart2018binary})?
	However, it is beyond the scope of this work and we defer. %Our schemes need modification to adapt with the definition of balanced weight in $q$-ary as in references \cite{Paluncic2018Capacity, Swart2018binary}. %It would be promising.
	\item {\it Can we obtain closed formulas for $\rho_\C$ when $\C$ are well-known cyclic codes?} In Section~\ref{sec:rhoC-general}, we gave a general algorithm to determine $\rho_\C$ for any given cyclic code. However, for certain classes of cyclic codes, like Hamming and BCH codes~(see for example, \cite{MS1977}), it is plausible that we can exploit certain algebraic properties to obtain closed formulas.
	%The main ingredient to count $\gamma_\C(i,n)$ is given in Theorem \ref{thm:gammaC-all}.
	%In the counting, in addition to the cyclic running sum, we need to check the syndrome to ensure messages belonging to $\C$. As we described, the transition of syndrome involving as two column of parity-check matrix $\bH$. Thus, we need to find a controllable quantity to measure the behavior of 2 columns.
	\item {\it Can the complexity $O(n^4)$ in Theorem \ref{theo:general_gammaC} be reduced?} We noted that the estimates in Theorem \ref{theo:general_gammaC} are loose since we consider all states. In fact, we can ignore many states, which are either not connected to any final state or not reachable from the root as in Figure \ref{fig:743cyclic}. Moreover, our trellis construction simply makes use any parity check matrix $\bH$. It is plausible that a column-permutation of $\bH$ may yield a smaller number of states. This is somewhat similar to the classic problem in the study of trellises (see for example, \cite{Vardy1998}). 
\end{enumerate}

\section*{Acknowledegements}
The authors would like to thank Brendan McKay and the other contributors at {\tt mathoverflow.net} for suggesting the asymptotic estimate in Proposition~\ref{prop:rhoC}.

% Generated by IEEEtran.bst, version: 1.14 (2015/08/26)

%\bibliographystyle{IEEEtran}
%\bibliography{ref.bib}

\end{document}